\newtheorem{theorem}{Theorem}[section]
\newtheorem{lemma}[theorem]{Lemma}
\newenvironment{proof}[1][Proof]{\begin{trivlist}
\item[\hskip \labelsep {\bfseries #1}]}{\end{trivlist}}
\newcommand{\qed}{\nobreak \ifvmode \relax \else
      \ifdim\lastskip<1.5em \hskip-\lastskip
      \hskip1.5em plus0em minus0.5em \fi \nobreak
      \vrule height0.75em width0.5em depth0.25em\fi}
\providecommand{\U}[1]{\protect\rule{.2in}{.2in}}
\numberwithin{equation}{section}
\begin{document}
\title{Hamiltonian constraint in Euclidean LQG revisited: First hints of off-shell Closure}
\author[a]{Alok Laddha}
\affil[a]{Chennai Mathematical Institute\\Siruseri}

\maketitle

\begin{abstract}
We initiate the hunt for a definition of Hamiltonian constraint in Euclidean Loop Quantum Gravity (LQG) which faithfully represents quantum Dirac algebra. Borrowing key ideas from \cite{qsd1}, \cite{tv},\cite{brugmann} and \cite{hlt1}, we present some evidence that there exists such a continuum Hamiltonian constraint operator which is well defined on a suitable generalization of the Lewandowski-Marolf Habitat and is anomaly free off-shell.
\end{abstract}
\newpage

\tableofcontents{}

\section{Introduction}
A quantum Hamiltonian constraint in Loop Quantum Gravity (LQG) was proposed by Thiemann in a series of remarkable papers [\cite{qsd1}, \cite{qsd2}, \cite{qsd3}].  Owing to the scalar (as opposed to scalar density) nature of the lapse, and diffeomorphism covariant regularization scheme, this operator admitted a well defined continuum limit on the kinematical Hilbert space of LQG. The continuum limit in turn had a remarkable property of diffeomorphism covariance and satisfied a consistency  condition known in the LQG literature as on-shell closure; which briefly implies that the Hamiltonian constraint $\hat{H}[N]$ satisfied 
\begin{equation}
\begin{array}{lll}
\hat{V}(g)\hat{H}[N]\hat{V}(g)^{-1}\ =\ \hat{H}[N]\\
\hat{U}(\phi)\hat{H}[N]\hat{U}(\phi)^{-1}\ =\ \hat{H}[\phi^{\star}\cdot N]\\
\Psi[\hat{H}[N], \hat{H}[M]]\vert{\bf \psi}\rangle\ =\ 0
\end{array}
\end{equation}
where $\hat{V}(g),\hat{U}(\phi)$ are unitaries associated to $SU(2)$ gauge transformations and spatial diffeomorphisms respectively. $\Psi$ is any diffeomorphism invariant state of the theory, $\vert{\bf \psi}\rangle\ \in\ {\cal H}_{kin}$ and hence the last equation implies that on the diffeomorphism invariant states commutator of two continuum constraints vanishes. This is consistent with a plausible quantization of the right hand side of $\{H[N], H[M]\}$ which is proportional to diffeomorphism constraint and could be expected to trivialize on diffeomorphism invariant states.\\ 
The most severe criticism to Thiemann's proposal came soon after in [\cite{lm}, \cite{lm2}] where it was essentially argued that although consistent \emph{on shell}, the operator hides an anomaly owing to its specify density weight. 
A related issue which has plagued this Hamiltonian constraint is the infinite multitude of choices which feed into its definition, each of which can potentially give rise to a different physical theory. It would be nice to have a physical criterian which can reduce this enormous ambiguity in the definition.\\
These arguments have been revisited recently. First in certain two dimensional generally covariant theories [\cite{alppftham}, \cite{ttppftham}], then in Hussain Kuchar model \cite{aldiff}, and finally in certain three and four dimensional theories whose gauge algebra is the Dirac algebra of canonical gravity.[\cite{hlt1}, \cite{tv}]\\
Through the analysis of these toy models, a clear picture emerges where few key facts tie themselves together. \noindent {(1)} The higher density weight constraints whose quantization leads to rather singular quantum operators (see below) such that the commutator does not trivialize, \noindent {(2)} The choice of quantum operator based on the understanding of its Hamiltonian vector fields in classical theory, and \noindent {(3)} Anomaly free representation of Dirac algebra in Quantum theory. Moreover for the models where the spectrum of the theory is known via other more transparent methods, constraints so obtained lead to kernels which match with the spactrum of the theory.\footnote{The reader is also encouraged to read introduction sections in (\cite{hlt1},\cite{tv}) to understand the underlying  philosophy behind the current pursuit of Hamiltonian constraint in LQG.}\\
In this paper, we apply these lessons to four dimensional Euclidean Hamiltonian constraint. We motivate the choices made in the quantization of the constraint through it`s action in classical Phase space and show that \noindent{\bf (a)} A continuum limit $\hat{H}[N]$ exists, \noindent{\bf (b)} The continuum limit of two finite triangulation constraints satisfy the elusive off-shell closure criteria. We believe that this provides some persuasive first hints that four-dimensional Euclidean Loop Quantum Gravity is anomaly free \emph{even off-shell}  \cite{nicolaietal}.\\
We heavily rely upon three key works available in the literature.\\
\noindent{\bf (i)} Thiemann's seminal work on Hamiltonian constraint in \cite{qsd1} which due to its remarkable properties like diffeomorphism covariance, still remains state of the art as far as $3+1$-dimensional canonical LQG is concerned.\\
\noindent{\bf (ii)} Tomlin, Varadarajan's quantization of  Hamiltonian constraint in $U(1)^{3}$ theory in $3+1$ dimensions in (\cite{tv}) and,\\
\noindent{\bf (iii)} Brugmann's derivation of Quantum Constraint algebra in the Loop representation \cite{brugmann}.\\
Indeed our work here is an attempt at taking key lessons out of these works and applying them to Euclidean gravity.\\
We will proceed as follows. After explaining a basic assumption underlying the current work, we will briefly outline the content of each section. As some of the later sections might appear a bit involved in the first reading, we explain the basic ideas behind our constructions in \ref{outline}. We then proceed to the main body of the paper.
\subsection{Quantizing the constraint of density weight two.} 
We are interested in a continuum Hamiltonian constraint which {\bf (a)} generates an anomaly-free quantum Dirac algebra and {\bf (b)} is well defined on habitat spaces of the type constructed in \cite{lm} or appropriate generalization thereof. 
Such Habitats are obtained by looking at specific (infinite) linear combination of spin-networks with coefficients in the linear combination being smooth functions on $\Sigma$. As argued by the authors in \cite{lm},\cite{lm2}; a finite triangulation Hamiltonian constraint will admit continuum limit which satisfies the above two conditions if it carries an over-all explicit factor of the ``triangulation scale" $\delta$ in the denominator.  This can be achieved by working with a Hamiltonian constraint of density weight $\frac{4}{3}$.\\
That is the constraint that we would ideally like to work with is, $C[N]\ =\ \frac{\epsilon^{ink}F_{ab}^{i}E^{a}_{j}E^{b}_{k}}{q^{\frac{1}{6}}}$. This constraint is such that a finite triangulation constraint operator $\hat{C}_{\delta}[N]\ =\ \frac{1}{\delta}\sum_{\triangle\ \in\ T}C_{\triangle}$. Quantization of this constraint such that the quantization of $F\cdot E\cdot E$ is motivated by certain classical considerations involves a \emph{separate} quantization of  ${q^{-\frac{1}{6}}}$. Classically, a slight variant of the Thiemann trick can be used to rewrite $q^{-\frac{1}{6}}$ as \cite{qsd5},\cite{tv}
\begin{equation}
\begin{array}{lll}
q^{-\frac{1}{6}}(x)\ =\ {\cal B}\tilde{\eta}^{abc}\epsilon_{ijk}\{A_{a}^{i}(x), V^{\frac{1}{3}}\}\cdot\{A_{b}^{j}(x),V^{-\frac{1}{3}}\}\cdot\{A_{c}^{k}(x), V^{-\frac{1}{3}}\}
\end{array}
\end{equation}
where all the numerical factors have been absorbed in ${\cal B}$.\\
At finite regularization, this identity ensures that $q^{-\frac{1}{6}}(x)$ can be quantized as a densely defined operator on ${\cal H}_{kin}$. However despite the stunning progress in the understanding of volume operator (\cite{brunemann},\cite{brunemann-thiemann},\cite{brunemann-rideout}) its action on a generic spin-network state can not be written down analytically. Hence asking detailed questions regarding the off-shell closure of constraint algebra generated by such density $\frac{4}{3}$ constraint is a formidable task.\\
We do not aim to solve this problem in the current paper, but seek a more modest  goal. Previous Experience regarding the off-shell closure of Dirac algebra in toy models suggest that, quantization of Hamiltonian constraint which yields such a closure involves a quantization of $F\cdot E\cdot E$ separately from quantization of $\frac{1}{q^{\alpha}}$. That is, schematically we quantize the constraint as,
\begin{equation}
\begin{array}{lll}
\hat{H}_{\delta}(v)\ \approx\ \widehat{F\cdot E\cdot E}\widehat{\frac{1}{q^{\alpha}}}
\end{array}
\end{equation}
Based on the work in such toy models, the central issue in the quantization of Hamiltonian constraint is regarding the quantization choices that one makes in defining $\widehat{F\cdot E\cdot E}$. The choices involved in the definition of $\frac{1}{q^{\frac{1}{\alpha}}}$ is then done so as to ensure off-shell closure,\cite{tv},\cite{hlt1}. As such exploring the quantization ambiguities in defining $\widehat{F\cdot E\cdot E}$ takes a precedent (conceptually as well as technically) over those involved in $\frac{1}{q^{\frac{1}{\alpha}}}$. In any case the upshot is that one quantizes $\widehat{F\cdot E\cdot E}$ and $\widehat{\frac{1}{q^{\alpha}}}$ separately. It is the quantization of the former that will be the focus of current work.\\
That is we will work with the Hamiltonian constraint which is polynomial in phase space variables and is a scalar density of weight two. At finite triangulation with triangulation scale $\delta$, the associated function will carry an over-all power of $\frac{1}{\delta^{3}}$. That is, schematically the finite triangulation operator would be of the form
\begin{equation}
\begin{array}{lll}
\hat{H}_{T(\delta)}[N]\ \approx\ \frac{1}{\delta^{3}}\sum_{\triangle\ \in\ T}\hat{H}_{\triangle}(v)
\end{array}
\end{equation}
This constraint has ``too many factors" of $\delta$ in the denominator and 
on the type of Habitat considered so far in the literature , this operator would not yield a continuum limit. Due to this reason, in this paper we work with the re-scaled constraint $\delta^{2}\hat{H}_{T(\delta)}[N]$. \emph{Our main goal in this paper is to show that there exists a quantum operator $\hat{H}_{T(\delta)}[N]$ and a linear space of distributions (which we will always refer to as a Habitat) such that $\lim_{\delta\rightarrow 0}\Psi\left(\delta^{2}\hat{H}_{T(\delta)}[N]\right)\vert{\bf s}\rangle$ is well defined $\forall\ \Psi\ \in\ \textrm{Habitat}, \forall\ {\bf s}$ and},
\begin{equation}\label{finaleqn}
\begin{array}{lll}
\lim_{\delta\rightarrow 0}\lim_{\delta^{\prime}\rightarrow 0}\Psi\left(\left[(\delta^{\prime})^{2}\hat{H}_{T(\delta^{\prime})}[M]\delta^{2}\hat{H}_{T(\delta)}[N]\ -\ N\leftrightarrow M\right]\right)\vert{\bf s}\rangle\ =\\
\vspace*{0.1in}
\hspace*{1.2in}\lim_{\delta\rightarrow 0}\lim_{\delta^{\prime}\rightarrow 0}\Psi\left(\widehat{q^{ab}\left(N\nabla_{a}M\ -\ M\leftrightarrow N\right)H_{b}}(\delta,\delta^{\prime})\right)\vert{\bf s}\rangle
\end{array}
\end{equation}
for a specific quantization of $q^{ab}\left(N\nabla_{a}M\ -\ M\leftrightarrow N\right)H_{b}$. Although we work with (re-scaled) constraints of density-weight two, we will  always remember our ultimate goal, which is to quantize the density $\frac{4}{3}$ constraint. Whence along the way we will make several choices which we believe will be robust under the re-scaling by  fractional power of inverse determinant. 

\subsection{Outline of the work}\label{outline}
We now provide a brief qualitative outline for the later sections. This we believe will help the reader to get a feel for some of the underlying ideas.\\
As the constraints of canonical gravity are composite fields built out of $SU(2)$ connection and the conjugate (sensitized) triad, like any composite operator in a local quantum field theory, there is an infinitude of choices which one can make to quantize such objects. In ordinary (flat-space) QFT one uses physical criteria like Causality and Locality to define composite operators in the theory. We would similarly  like to have certain set of  physical criteria which guide the choices we make to define Composite operators (specifically the constraints as far as we are concerned) in LQG. One strategy which has been particularly successful in recent years has been to quantize the constraints such that the choices we make, to define such operators at finite regularization (triangulation) be clearly associated with finite geometric transformations generated by their classical counter-parts. This strategy can be precisely employed for Gauss constraint and Diffeomorphism constraint. that is, e.g. in the case of diffeomorphism constraint, where we need to choose certain quantization prescription for $F_{ab}^{i}E^{b}_{i}$ in terms of holonomies and fluxes, we demand that the resulting operator (which is regularized
by the choice of loops and surfaces) generate spatial diffeomorphisms on the states. This strategy leads to a quantization of diffeomorphism constraint in LQG which is anomaly free (in the sense that it represents the Lie algebra of spatial vector fields faithfully) and whose kernel is the space of diffeomorphism invariant states labelled by (generalized) knot classes.\cite{aldiff}.\\
However in order to employ this strategy for the Hamiltonian constraint, we need to have a precise geometric interpretation for the symplectomorphisms generated by $H[N]$ on the phase space of canonical gravity. As was shown in \cite{hlt1},\cite{tv} just such an interpretation exists as far as action of $H[N]$ on $A_{a}^{i}$ is concerned. For the Hamiltonian constraint of density weight two, $H[N]\ =\ \int N(x)\textrm{Tr}(F\cdot E\cdot E)$, 
\begin{equation}
\{H[N],A_{a}^{i}(x)\}\ =\ \epsilon^{ijk}\left[{\cal L}_{NE_{j}}A_{a}^{k}(x)\ +\ {\cal D}_{a}(E^{b}_{j}A_{b}^{k})\right]
\end{equation}
That is, action of $H[N]$ on $A(x)$ can be understood in terms of certain dynamical spatial diffeomorphisms and dynamical gauge transformations. Indeed this was the starting point for quantization of Hamiltonian constraint in $U(1)^{3}$ theory in [\cite{hlt1},\cite{tv}]. This picture was completed by Ashtekar in \cite{abhaycalc}. He has shown that  even on  conjugate triad fields the Hamiltonian constraint action can be understood in terms of the above, very specific kind of dynamical  diffeomorphism and gauge transformation. This interpretation of the Hamiltonian constraint is a starting point for quantization performed in this paper.\\ 
Fixing a basis in $su(2)$, the dynamical spatial diffeomorphisms mentioned above are generated by the triplet of vector fields $NE_{i}$. In section (\ref{qshift}), we quantize this vector field and interpret the action of the resulting operator on any spin-network state in terms of tangent vectors at the vertices of the state, and certain insertion of Pauli-matrices $\tau_{i}$ along the edges of the state. In accordance with the nomenclature adopted in \cite{hlt1}, we refer to $\hat{NE_{i}}$ for each $i$ as a quantum shift.\\
In section (\ref{finitetrianglecon}), we quantize $H[N]$ to obtain $\hat{H}_{T(\delta)}[N]$  at (one parameter family of) finite triangulations $T(\delta)$\footnote{$\delta$ denotes fine-ness of the triangulation, and we will always refer to it as triangulation scale. In $\delta\rightarrow 0$ limit, the triangulation $T\rightarrow\ \Sigma$.} 
which satisfies two criteria.\\
(1) \emph{The quantum constraint action on a spin-network state reflects, in a precise way the action of classical Hamiltonian constraint on (classical) cylindrical functions.}\\
(2)  \emph{The new vertices created  by $\hat{H}_{T(\delta)}[N]$ are non-degenrate.}\footnote{Throughout this paper, we will refer to vertices which are in the kernel of volume operator as degenerate vertices.}

In section (\ref{deformedstates}) we analyze the states that lie in the image of $\hat{H}_{T(\delta)}[N]$. That is, we analyze how a given spin-net state $\vert{\bf s}\rangle$  gets deformed under the action of the quantum constraint. The resulting states which are some very specific linear combination of spin-net states are referred to as cylindrical-networks. We analyze structure of these states in connection representation and establish several orthogonality relations for these states, which make them suitable for the analysis of the continuum limit, without further decomposing them into spin-network states.\\
In section \ref{con-limit} we  construct a subspace of the space of distributions ${\cal D}^{\star}$, ${\cal V}_{hab}$ on which continuum limit of finite triangulation constraint exists as a linear map ${\cal V}_{hab}\rightarrow\ {\cal D}^{\star}$ and off-shell closure of constraint algebra can be probed. We will refer to this space simply as  Habitat as it is a natural(albeit a weaker) generalization of the Habitat defined by Lewandwoski and Marolf to explore the continuum limit of regularized diffeomorphism constraint \cite{lm}. In the case of Lewandowski and Marolf's Habitat, each state was labelled by a class of spin-networks $[{\bf s}_{1}]\ =\ ({\bf s}_{1},{\bf s}_{2},\dots)$ and a smooth function $f$ on the set of vertices associated to $[{\bf s}_{1}]$.As detailed in sections (\ref{habitat},\ref{naivcont}) in our case, each habitat state will be labelled by an appropriate set of cylindrical networks $[{\bf s}]$, and 
 not only by a smooth function $f$ but a function ${\cal P}$ which will probe the intertwiner structure at the vertices of the cylindrical states.\\
In section (\ref{conlimitofLHS})  we show that on such a Habitat, the continuum limit of $\hat{H}_{T(\delta)}[N]$ as well as finite triangulation commutator is well defined.\\
In sections (\ref{conlimitofLHS}) and (\ref{RHS-section}) we quantize the Right hand side of the story by using the remarkable identity from \cite{tv} expressing $q^{ab}H_{b}(N\nabla_{a} M\ -\ M\leftrightarrow N)$ as a Poisson bracket of two (triad dependent) diffeomorphisms. Once again, we show that the continuum limit of the resulting finite triangulation operator is well defined on the Habitat and it matches with the continuum limit LHS.\\
In the final section we  discuss the shortcomings of our work and conclude with the resulting open problems.\\
We set $\hbar\ =\ G\ =\ c\ =\ 1$.\\

\section{Geometric interpretation of Hamiltonian constraint action}\label{geometricinterofcon}
In the spirit of previous quantization of constraints (e.g. Diffeomorphism constraints, Parametrized field theories, $U(1)^{3}$ gauge theories), in LQG or loop quantized generally covariant field theories, we recognize the importance of understanding the geometric interpretation of Hamiltonian constraint action on phase space data. We propose just such an interpretation for the constraint of $SU(2)$ theory in this section. We propose a number of results which are inter-related to each other and try to paint a coherent picture of the underlying geometric interpretation. Based on this interpretation we will make quantization choices.\\
We first recall the results in $U(1)^{3}$ theory as obtained in [\cite{hlt1}, \cite{tv}].
\begin{equation}
\begin{array}{lll}
H[N]\ =\ \frac{1}{2}\int_{\Sigma}\ d^{3}x \epsilon^{ijk}F_{ab}^{i}E^{a}_{j}E^{b}_{k}\\
\vspace*{0.1in}
\{H[N], A_{a}^{i}(x)\}\ =\ \epsilon^{ijk}{\cal L}_{NE_{j}}A_{a}^{k}(x)
\end{array}
\end{equation}
We now seek a similar interpretation for the Euclidean gravity. Briefly, as we show below, the action of Hamiltonian constraint on the phase space variables can, in this case be cast in the following form
\begin{equation}\label{key-eq}
\begin{array}{lll}
\{H[N], A_{a}(x)\}\ = [\tau_{k}, {\cal L}_{N\tilde{E}_{k}}A_{a}(x)]\ -\ [\tau_{k}, {\cal R}_{\Lambda_{(k)}}A_{a}(x)]\\
\vspace*{0.1in}
\{H[N], \tilde{E}^{a}(x)\}\ =\ [\tau_{k}, {\cal L}_{N\tilde{E}_{k}}\tilde{E}^{a}(x)] \ +\ [\tau_{k}, {\cal R}_{\Lambda^{'}_{(k)}}\tilde{E}^{a}(x)]
\end{array}
\end{equation}
In the above equations, we understand $NE_{k}$ as being a triplet of vector fields labelled by $k$, and $\Lambda_{k}, \Lambda^{\prime}_{k}$ as being triple of Gauge generators, which will be defined below.\\
The second equation in \ref{key-eq} was derived by Ashtekar in \cite{abhaycalc}.\footnote{We are indebted to A.Ashtekar for explaining us his result prior to publication.} In fact he interpreted the result in terms of so called Gauge-covariant Lie derivative. We reproduce his derivation here, but with a minor spin-off on the final result. Instead of expressing the action of Hamitonian constraint in terms of gauge-covariant Lie derivative\cite{jackiw}, we will express it in terms of ordinary Lie derivative. This of course has a direct utility in quantum theory as ordinary Lie derivatives can be approximated by one parameter family of spatial diffeomorphisms.\footnote{Anologously, the gauge covariant Lie derivatives can be approximated by so called Matrix-valued diffeomorphisms\cite{jandeboer}. However to the best of author's understanding, such structures are not sufficiently understood, hence we refrain from exploring related ideas here.}
\\ 
We now turn to the detailed derivation of the eqs.(\ref{key-eq})\\
\subsection{Action of $H[N]$ on $A(x)$}
Let $A_{b}(x)\ =\ A_{b}^{i}(x)\tau_{i}$, where $\tau_{i}\ =\ -i\sigma_{i}$, with $\sigma_{i}$ being a Pauli-matrix.
\begin{equation}\label{eqjuly7(1)}
\begin{array}{lll}
\{H[N], A_{b}(x)\}\ =\ -2N(x)\ \epsilon^{ijk}\ F_{ab}(x)\tilde{E}^{a}_{j}(x)\tau_{k}\\
\vspace*{0.1in}
=\ 2\ [\tau_{j}, \tau_{i}\ F_{ab}^{i}(x)\ N(x)\tilde{E}^{a}_{j}(x)]
\end{array}
\end{equation}
Our motivation for keeping $\tau_{j}$ separate from $E_{j}$ comes from our experience with the $U(1)^{3}$ theory where the three vector fields, $E_{j}$ 
defined a triple of so-called quantum shifts which turned out to be the generator of (singular) diffeomorphisms in the quantum theory.\\
We now note that
\begin{equation}\label{eqjuly7(2)}
\begin{array}{lll}
{\cal L}_{N\tilde{E}_{j}}A_{b}^{i}(x)\ =\ N\tilde{E}^{a}_{j}(x)F_{ab}^{i}(x) + {\cal D}_{b}\left(\tilde{E}^{a}_{j}A_{a}^{i}(x)\right)\\
\vspace*{0.1in}
\implies\ N\tilde{E}^{a}_{j}(x)F_{ab}^{i}(x)\ =\ {\cal L}_{N\tilde{E}_{j}}A_{b}^{i}(x) - {\cal D}_{b}\Lambda_{(j)}^{i}(x)
\end{array}
\end{equation}
Substituting \ref{eqjuly7(2)} in \ref{eqjuly7(1)} we have
\begin{equation}\label{HonA-final}
\begin{array}{lll}
\{H[N], A_{b}(x)\}
 =\ 2[\tau_{j}, {\cal L}_{N\tilde{E}_{j}}A_{b}(x)]\ -\ 2[\tau_{j}, {\cal D}_{b}\Lambda_{(j)}(x)]
 \end{array}
 \end{equation}
 where the triplet of Gauge parameters are defined as, $\Lambda_{(j)}\ :=\ \tilde{E}^{a}_{j}A_{a}^{i}\tau_{i}$.\\ 
 We thus see that Action of Hamiltonian constraint on a connection $A$ is a concomitant of diffeomorphism and gauge transformation generated by the triples $\tilde{E}_{j}$ and $\Lambda_{(j)}$ respectively, supplemented by an adjoint action of $\vec{\tau}$.
\subsection{Action of $H[N]$ on $\tilde{E}(x)$} 
For the benefit of the reader, we now reproduce the computation done in \cite{abhaycalc} which shows that  {\emph precisely} the same interpretation can be given to the action of Hamiltonian constraint on $\tilde{E}^{a}(x)\ =\ \tilde{E}^{a}_{i}(x)\tau_{i}$.\\
\begin{equation}\label{july7(6)}
\begin{array}{lll}
\{H[N], \tilde{E}^{b}_{k}(x)\}\ =\ 2{\cal D}_{a}(N\tilde{E}^{a}_{i}\tilde{E}^{b}_{j}\epsilon^{ijk}) \\
=\ 2\partial_{a}(N\tilde{E}^{a}_{i} \tilde{E}^{b}_{j}\epsilon^{ijk})\ +\ 2\epsilon^{kmn}\ A_{a}^{m}\ \omega^{n}
\end{array}
\end{equation}
where $\omega^{n} = \epsilon^{ijn}N\tilde{E}^{a}_{i}\tilde{E}^{b}_{j}$.
As shown in appendix \ref{HactiononE}, the right hand side of eq.(\ref{july7(6)}) can be re-written as,
\begin{equation}\label{july7(7)}
\begin{array}{lll}
\{H[N], \tilde{E}^{b}_{k}(x)\} =\ 2\partial_{a}(N\tilde{E}^{a}_{i} \tilde{E}^{b}_{j}\epsilon^{ijk})\ +\ 2\epsilon^{kmn}\ A_{a}^{m}\ \omega^{n}\\
\vspace*{0.1in}
\hspace*{0.5in}=\ {\cal L}_{NE_{jk}}\tilde{E}^{b}_{j}\ +\ N_{0}e^{b}_{j}\partial_{a}\tilde{E}^{a}_{jk} + 2 \epsilon^{kmn}\ A_{a}^{m}\ \epsilon^{ijn}N\tilde{E}^{a}_{i}\tilde{E}^{b}_{j}\\
\vspace*{0.1in}
\hspace*{0.5in}=\ {\cal L}_{NE_{jk}}\tilde{E}^{b}_{j}\ +\ N_{0}e^{b}_{j}{\cal D}_{a}\tilde{E}^{a}_{jk} +  \epsilon^{kmn}\ A_{a}^{m}\ \epsilon^{ijn}N\tilde{E}^{a}_{i}\tilde{E}^{b}_{j}
\end{array}
\end{equation}

As ${\cal D}_{a}\tilde{E}^{a}_{jk}\ =\ \epsilon^{ink}{\cal D}_{a}\tilde{E}^{a}_{i}$ vanishes on the constraint surface, we have the following remarkable result
\begin{equation}\label{HonE-final}
\begin{array}{lll}
\{H[N], \tilde{E}^{b}_{k}(x)\tau_{k}\}\ =\\
\vspace*{0.1in}
\hspace*{1.5in}[\tau_{i}, {\cal L}_{N\tilde{E}_{i}}\tilde{E}^{b}_{j}\tau^{j}]\ +\ [\tau^{m}, \epsilon^{ijn}A_{a}^{m}\left(NE^{a}_{i}\right)E^{b}_{j}\tau_{n}]\\
\vspace*{0.1in}
\hspace*{1.5in}= [\tau_{m}, {\cal L}_{N\tilde{E}_{m}}\tilde{E}^{b}]\ +\ [\tau^{m}, {\cal R}_{\Lambda^{\prime}_{(m)}}E^{b}]
\end{array}
\end{equation}
where $\Lambda^{\prime}_{(m)}$ is a triple of gauge parameters  with $(\Lambda^{\prime}_{(m)})^{i}$ being $A_{a}^{m}\ N\tilde{E}^{a}_{i}$.
Eqs. (\ref{HonA-final}) and (\ref{HonE-final}) con action of Euclidean Hamiltonian constraint on the phase space can be understood in terms of spatial diffeomorphism , local Gauge transformation and an adjoint action by a fixed element in $su(2)$. 

The action of Hamiltonian constraint on connection can also be understood by considering the following vector field on phase space.
\begin{equation}\label{eq:hamvecfield}
\begin{array}{lll}
X_{H[N]}\ =\ \int_{\Sigma}d^{3}x\textrm{Tr}\left(\left(\left[\tau^{k}, {\cal L}_{NE^{k}}A_{a}\right]\ +\ \left[\tau^{k}, {\cal R}_{\Lambda^{(k)}}A_{a}\right]\right)(x)\frac{\delta}{\delta A_{a}(x)}\right)
\end{array}
\end{equation}
This formal operator is the $SU(2)$ counterpart of a similar expression in the $U(1)^{3}$ theory \cite{tv}, whose heuristic operator action on cylindrical functions motivated a finite triangulation operator in quantum theory.\\
In fact
 it is instructive to compute the commutator $[X_{H[N]}, X_{H[M]}]$ as done in appendix (\ref{commutator}). It equals the vector field associated to Diffeomorphism constraint with triad dependent Shift field. This shows why we expect to quantize Right hand side of the constraint algebra in terms of phase space dependent diffeomorphisms.\\ 
We will now analyze the geometric action of Hamiltonian constraint on a holonomy functional and show that there exists a regularization of the resulting expression in terms of finite diffeomorphisms and holonomies along ``small" segments. A priori this seems incorrect as Hamiltonian is quadratic in the triads and hence action of it`s Hamiltonian vector field will map a holonomy into a function of triad and connection. However, and this is one of the key points of our construction that this function can be thought of as a holonomy defined with respect to an edge which is diffeomorphically related to original edge by a diffeomorphism which generated by the triad field.\\
Whence if $X_{H[N]}$ is the vector field associated to the Hamiltonian constraint as defined in (\ref{eq:hamvecfield}), we have 
\begin{equation}
\begin{array}{lll}
X_{H[N]}\ h_{e}(A)\ =\ \int dt h_{e[0,t]}\circ X_{H[N]}A_{a}(e(t))\dot{e}^{a}(t)\circ h_{e[t,1]}\\
\vspace*{0.1in}
\hspace*{0.3in}=\ \int dt h_{e[0,t]}\circ \left([\ \tau_{k}, {\cal L}_{NE_{k}}A_{a}(e(t))\dot{e}^{a}(t) ] - [\ \tau_{k}, {\cal D}_{a}N\tilde{E}^{b}_{k}A_{b}\ ]\circ\ h_{e[t,1]}\right) 
\end{array}
\end{equation}
We now make the following assumption. Let us assume that support of $N\tilde{E}_{i}$ is only in an infinitesimal neighborhood of $b(e)$. This assumption is motivated by the the following scenario in quantum theory. In LQG one usually works with quantum operator corresponding to $N\frac{\tilde{E}^{i}}{q^{\alpha}}$ where $\alpha\ \neq\ 0$. As the inverse volume operator $\widehat{\frac{1}{q^{\alpha}}}$ has support only at a certain non-degenerate vertices of the spin-networks.\\ 
If we consider the support of $N\tilde{E}_{k}$ to be in a ball of co-ordinate radius $\delta$ then up to terms of order $\delta^{2}$ one can approximate the above equation by, 
\begin{equation}\label{eq:regularizedconstraintaction}
\begin{array}{lll}
X_{H[N]}(h_{e}(A))\ \approx\ \delta X_{H[N]}A_{a}(b(e))\dot{e}^{a}(0)\circ h_{e}(A)\\
\vspace*{0.1in}
\hspace*{0.5in}=\ \delta[\tau_{k}, {\cal L}_{NE_{k}}A_{a}(b(e))\dot{e}^{a}(0)]\circ h_{e}(A)\ -\ \delta[\tau_{k}, {\cal D}_{a}\left(N\tilde{E}^{b}_{k}A_{b}\right)(b(e))\dot{e}^{a}(0)]\circ h_{e}(A)\\
\end{array}
\end{equation}
We now analyze the two terms in the above equation separately and choose one \emph{specific} regularized approximant which is expressed in terms of holonomies, finite diffeomorphisms and finite gauge transformations.\\
Let $\delta{\cal L}_{NE_{k}}A_{a}(b(e))\dot{e}^{a}(0)\ \approx\ h_{s_{e}^{\delta}}({\cal L}_{NE_{k}}A_{a})\ -\ {\bf 1}$. $s_{e}^{\delta}$ is a segment of parameter length $\delta$ which begins at $b(e)$ and is contained in $e$.
\begin{equation}\label{dec27-1}
\begin{array}{lll}
\delta[\tau_{k}, {\cal L}_{NE_{k}}A_{a}(b(e))\dot{e}^{a}(0)]\circ h_{e}(A)\ =\\
\vspace*{0.1in}
\hspace*{0.5in} [\tau_{k}, h_{s_{e}^{\delta}}\left({\cal L}_{NE_{k}}(A)\right)]\circ h_{e}(A)\approx\\
\vspace*{0.1in}
\hspace*{0.5in} \frac{1}{\delta}[\tau_{k}, h_{\phi^{\delta}_{N\tilde{E}_{k}}\circ s_{e}^{\delta}}(A)\ -\ h_{s_{e}^{\delta}}(A)]\circ h_{e}(A)=\\
\vspace*{0.1in}
\hspace*{0.5in} \frac{1}{\delta}[\tau_{k}, h_{\phi^{\delta}_{N\tilde{E}_{k}}\circ s_{e}^{\epsilon}}(A)h_{s_{e}^{\delta}}(A)^{-1}]\circ h_{s_{e}^{\delta}}(A)\circ h_{e}(A)\approx\\
\vspace*{0.1in}
\hspace*{0.5in} \frac{1}{\delta}[\tau_{k}, h_{\phi^{\delta}_{N\tilde{E}_{k}}\circ s_{e}^{\epsilon}}(A)h_{s_{e}^{\delta}}(A)^{-1}]\circ h_{e}(A)=\\
\vspace*{0.1in}
\hspace*{0.5in} \frac{1}{\delta}\left(\tau_{k}\circ h_{\phi^{\delta}_{N\tilde{E}_{k}}\circ s_{e}^{\epsilon}}\circ h_{e-s_{e}^{\delta}}\ -\ h_{\phi^{\delta}_{N\tilde{E}_{k}}\circ s_{e}^{\epsilon}}h_{s_{e}^{\delta}}^{-1}\circ \tau_{k}\circ h_{e}\right)
\end{array}
\end{equation}
In the second line, we have approximated a Lie dragged holonomy by difference of two holonomies which are related by a one parameter family of (parametrized by $\delta$) finite diffeomorphisms. In the forth line, we have replaced $h_{s_{e}^{\delta}}$ outside the commutator by 1, as this is the only term that is relevant at leading order in $\delta$. we now perform the same analysis for the second term in (\ref{eq:regularizedconstraintaction}) as 
follows.
\begin{equation}\label{eq:Gausstermregularized}
\begin{array}{lll}
\int dt h_{e[0,t]}\circ [\tau_{k}, {\cal D}_{a}(\Lambda^{(k)})]\circ h_{e[t,1]}=\\
\vspace*{0.1in}
\int dt \left(h_{e[0,t]}\circ\left({\cal D}_{a}(\Lambda^{(k)}\circ \tau_{k})\right)(e(t))\dot{e}^{a}(t)\circ h_{e[t,1]}\right)\ -\ \int dt \left(h_{e[0,t]}\circ\left({\cal D}_{a}(\tau_{k}\circ\Lambda^{(k)})\right)(e(t))\dot{e}^{a}(t)\circ h_{e[t,1]}\right)\\
\vspace*{0.1in}
\hspace*{0.5in}=\left(\Lambda^{(k)}\circ\tau_{k}\right)(b(e))\circ h_{e}\ -\ \left(\tau^{k}\circ\Lambda^{(k)}\right)(b(e))\circ h_{e}\\
\vspace*{0.1in}
\hspace*{0.5in}=\left(\tau_{i}\circ\ N\tilde{E}^{a}_{i}A_{a}^{k}\tau_{k}\right)(b(e))\circ h_{e}\ -\  \left(N\tilde{E}^{a}_{i}A_{a}^{k}\tau_{k}\circ\ \tau_{i}\right)(b(e))\circ h_{e}\\
\vspace*{0.1in}\approx\frac{1}{\delta}\left([\tau^{i}, h_{s^{\delta}_{N\tilde{E}_{i}(b(e))}}]\circ h_{e}\right)
\end{array}
\end{equation}
In going from the second to the third line in the above equation, we have used the usual action of infinitesimal gauge transformation on a holonomy. However as, we have assumed that support of $N\tilde{E}_{i}$ is only in the neighborhood of $b(e)$, the terms $h_{e}\circ \Lambda(f(e))$ vanish. In the final equation we have approximated $N\tilde{E}^{a}_{i}A_{a}^{k}\tau_{k}(b(e))$ by a holonomy along the vector $N\tilde{E}_{i}(b(e))$ of parameter length $\delta$.\\ 
Whence using eqs. (\ref{dec27-1}) and (\ref{eq:Gausstermregularized}), we have the following.
\begin{equation}\label{eq:regconstraintfinal}
\begin{array}{lll}
X_{H[N]}\ h_{e}\ =\\
\frac{1}{\delta}\left(\tau_{k}\circ h_{\phi^{\delta}_{N\tilde{E}_{k}}\circ s_{e}^{\delta}}\circ h_{e-s_{e}^{\delta}}\ -\ h_{\phi^{\delta}_{N\tilde{E}_{k}}\circ s_{e}^{\delta}}h_{s_{e}^{\delta}}^{-1}\circ \tau_{k}\circ h_{e}\right)\ -\ \frac{1}{\delta}\left([\tau^{k}, h_{s^{\delta}_{N\tilde{E}_{k}(b(e))}}]\circ h_{e}\right)
\end{array}
\end{equation}
We would like to make choices to define regularized (density two) Hamiltonian constraint such that its action on a spin-network ``mimics" the action of the classical constraint as in (\ref{eq:regconstraintfinal}). However as the Hamiltonian constraint operator involves curvature which in-turn can be defined in terms of closed loops. On the other hand the (regularized) action of $X_{H[N]}$ on $h_{e}$ involves open segments $h_{\phi_{N\tilde{E}_{k}}\circ s_{e}^{\delta}}h_{s_{e}^{\delta}}^{-1}$, $h_{s^{\delta^{\prime}}_{N\tilde{E}_{k}(b(e))}}$. 
However as can be easily verified, to leading order in $\delta$ we can replace the right hand side of (\ref{eq:Gausstermregularized}) by yet another expression which involves closed loops based at $b(e)$.\\
\begin{equation}\label{eq:finalclassicalaction}
\begin{array}{lll}
X_{H[N]}\ h_{e}\ =\\ \sum_{k}\left[\frac{1}{\delta}\left(\tau_{k}\circ h_{s^{\delta}_{N\tilde{E}_{k}(b(e))}}\circ h_{\phi^{\delta}_{N\tilde{E}_{k}}\circ s_{e}^{\delta}}\circ h_{s_{e}^{\delta}}^{-1}\ -\ h_{s^{\delta}_{N\tilde{E}_{k}(b(e))}}\circ h_{\phi^{\delta}_{N\tilde{E}_{k}}\cdot s_{e}^{\delta}}h_{s_{e}^{\delta}}^{-1}\circ \tau_{k}\right)\right]\circ h_{e}
\end{array}
\end{equation}
$s^{\delta}_{N\tilde{E}_{k}(b(e))}$ is any segment with tangent at the initial point being given by $N\tilde{E}_{k}(b(e))$.\\
Intuitively, $h_{s^{\delta}_{N\tilde{E}_{k}(b(e))}}\circ h_{\phi_{N\tilde{E}_{k}}\circ s_{e}^{\delta}}\circ h_{s_{e}^{\delta}}^{-1}$ can form a loop based at $b(e)$ as it first moves along the integral vector field of $N\tilde{E}_{k}$ by say affine parameter distance $\delta$, then it joins the segment $\phi^{\delta}_{N\tilde{E}_{k}}\circ s_{e}^{\delta}$ and finally rejoins $b(e)$ along $(s_{e}^{\delta})^{-1}$.\\
Using the Leibiniz rule and a basic identity $\delta\sum_{i}x_{i}\ =\ \prod_{i}(1+\delta x_{i})\ -\ 1\ +\ \textrm{O}(\delta^{2})$, one can also show that, 
\begin{equation}
\begin{array}{lll}
X_{H[N]}\ \pi_{j_{e}}(h_{e})\ =\\ \sum_{k}\left[\frac{1}{\delta}\left(\pi_{j_{e}}(\tau_{k})\circ \pi_{j_{e}}(h_{s^{\delta}_{N\tilde{E}_{k}(b(e))}}\circ h_{\phi^{\delta}_{N\tilde{E}_{k}}\circ s_{e}^{\delta}}\circ h_{s_{e}^{\delta}}^{-1})\ -\ \pi_{j_{e}}(h_{s^{\delta}_{N\tilde{E}_{k}(b(e))}}\circ h_{\phi^{\delta}_{N\tilde{E}_{k}}\cdot s_{e}^{\delta}}h_{s_{e}^{\delta}}^{-1})\circ \pi_{j_{e}}(\tau_{k})\right)\right]\\
\vspace*{0.1in}
\hspace*{5.7in}\circ \pi_{j_{e}}(h_{e})
\end{array}
\end{equation}
Whence given a spin-network functional $f(A)\ =\ \otimes_{i=1}^{N_{v}}\pi_{j_{e_{i}}}(h_{e_{i}})$ based on a collection of edges $e_{1},\dots, e_{N_{v}}$ which meet at a vertex $v$, the action of $X_{H[N]}$ on $f(A)$ can be approximated as,
\begin{equation}\label{actionofHonclassicalspinnet}
\begin{array}{lll}
X_{H[N]}f(A)\ =\\
\frac{1}{\delta}\sum_{i=1}^{N_{v}}\pi_{j_{e_{1}}}(h_{e_{1}})\otimes\dots\otimes\left( \sum_{k}\left[\pi_{j_{e_{i}}}(\tau_{k}), \pi_{j_{e_{i}}}(h_{s^{\delta}_{N\tilde{E}_{k}(v)}}\circ h_{\phi^{\delta}_{N\tilde{E}_{k}}\circ s_{e_{i}}^{\delta}}\circ h_{s_{e_{i}}^{\delta}}^{-1})\right]\circ \pi_{j_{e_{i}}}(h_{e_{i}})\right)\otimes\dots\\
\vspace*{0.1in}
\hspace*{5.7in}\otimes\pi_{j_{e_{N_{v}}}}(h_{e_{N_{v}}})
\end{array}
\end{equation}

We will motivate a quantization of Hamiltonian constraint using holonomies and fluxes whose action of spin-network \emph{precisely} encapsulates (\ref{actionofHonclassicalspinnet}).\\
Finally before proceeding further we note that in order to quantize the constraint, we need to construct an analog of $N\tilde{E}^{a}_{k}(v)$ in quantum theory. In the context of $U(1)^{3}$ theory, such operators were called quantum-shifts. In section \ref{qshift}, we construct Quantum shifts for the $SU(2)$ theory, taking a cue from [\cite{tv},\cite{hlt1}].
\section{The Quantum shift}\label{qshift}
Let $\vert{\bf s}\rangle$ be a spin-network with a vertex $v$. Consider a chart $\{x\}$ at $v$. Let ${\cal B}_{\epsilon}(v)$ be a ball of radius $\epsilon$ contained in the co-ordinate chart $\{x\}$.\\
Consider
\begin{equation}\label{qs1}
\begin{array}{lll}
\widehat{N^{a}_{i}}(v)\vert_{\epsilon}\vert{\bf s}\rangle :=\ N(x(v))\frac{1}{\frac{4\pi\epsilon^{3}}{3}}\int_{{\cal B}_{\epsilon}(v)}d^{3}x\hat{E}^{a}_{i}(x)\vert{\bf s}\rangle
\end{array}
\end{equation}
for $v^{\prime}\in{\cal B}_{\epsilon}(v)$, 
\begin{equation}\label{qs2}
\hat{E}^{a}_{i}(x)\vert{\bf s}\rangle\ =\ \sum_{e\in E({\bf s})\vert b(e)=v}\int_{0}^{1}dt\dot{e}^{a}(t)\delta^{3}(e(t), x(v^{\prime}))\hat{\tau}^{j}\vert_{e(t)}\ \vert{\bf s}\rangle
\end{equation}
where $\hat{\tau}^{j}\vert_{e(t)}$ is an insertion of the Pauli matrix in the spin-network ${\bf s}$ at $e(t)$.\\
Substituting \ref{qs2} in \ref{qs1} we get,
\begin{equation}
\begin{array}{lll}
\widehat{N^{a}_{i}}\vert_{\epsilon}(v)\vert{\bf s}\rangle=\\
\vspace*{0.1in}
N(x(v))\frac{1}{\frac{4\pi\epsilon^{3}}{3}}
\sum_{e\in E({\bf s})\vert b(e)=v}\int_{{\cal B}_{\epsilon}(v)\cap e}\ de^{a}\ \hat{\tau}^{j}\vert_{e}\vert{\bf s}\rangle\\
\end{array}
\end{equation}
In the above equation $\hat{\tau}^{j}\vert_{e}$ is an insertion at a point along edge $e$ which lies inside the ball ${\cal B}_{\epsilon}(v)$.Whence to leading order in $\epsilon$,
\begin{equation} 
\begin{array}{lll}
\int_{{\cal B}_{\epsilon}(v)\cap e}\ de^{a}\ \hat{\tau}^{j}\vert_{e}\vert{\bf s}\rangle\ = \epsilon\hat{e}^{a}(0)\hat{\tau}^{j}\vert_{v}\vert{\bf s}\rangle
\end{array}
\end{equation}
where $\hat{e}(0)$ is the vector tangent to $e$ at it`s beginning point $v$ and is normalized w.r.t the co-ordinate chart $\{x\}_{v}$.\\
Thus finally, the (regularized) quantum shift is defined as
\begin{equation}\label{qshift-definition}
\begin{array}{lll}
\widehat{N^{a}_{i}}(v)\vert_{\epsilon}\vert{\bf s}\rangle :=\ N(x(v))\frac{1}{\frac{4\pi\epsilon^{2}}{3}}\sum_{e\in E(\gamma)\vert b(e)=v}\hat{e}^{a}(0)\hat{\tau}_{i}\vert_{v}\vert{\bf s}\rangle
\end{array}
\end{equation}
Reader can easily verify for herself that action of the quantum shift on state $\vert{\bf s}\rangle$ as defined in eq. (\ref{spinnetdef}) results in a linear combination of spin-network states as
\begin{equation}
\begin{array}{lll}
\widehat{N^{a}_{i=1}}(v)\vert_{\epsilon}\vert{\bf s}\rangle_{m_{e_{1}}\dots m_{e_{N_{v}}}}\ =\\ 
\frac{1}{2i}N(x(v))\frac{1}{\frac{4\pi\epsilon^{2}}{3}}\sum_{e\in E(\gamma)\vert b(e)=v}\hat{e}^{a}(0)\left[\alpha(+,j_{e},m_{e})\vert{\bf s}\rangle_{m_{e_{1}}\dots m_{e}+1\dots m_{e_{N_{v}}}}\ +\ \alpha(-,j_{e},m_{e})\vert{\bf s}\rangle_{m_{e_{1}}\dots m_{e}-1\dots m_{e_{N_{v}}}}\right]\\
\widehat{N^{a}_{i=2}}(v)\vert_{\epsilon}\vert{\bf s}\rangle_{m_{e_{1}}\dots m_{e_{N_{v}}}}\ =\\
\frac{1}{2i}N(x(v))\frac{1}{\frac{4\pi\epsilon^{2}}{3}}\sum_{e\in E(\gamma)\vert b(e)=v}\hat{e}^{a}(0)\left[\alpha(+,j_{e},m_{e})\vert{\bf s}\rangle_{m_{e_{1}}\dots m_{e}+1\dots m_{e_{N_{v}}}}\ -\ \alpha(-,j_{e},m_{e})\vert{\bf s}\rangle_{m_{e_{1}}\dots m_{e}-1\dots m_{e_{N_{v}}}}\right]\\
\widehat{N^{a}_{i=3}}(v)\vert_{\epsilon}\vert{\bf s}\rangle_{m_{e_{1}}\dots m_{e_{N_{v}}}}\ =\ \frac{1}{2i}N(x(v))\frac{1}{\frac{4\pi\epsilon^{2}}{3}}\sum_{e\in E(\gamma)\vert b(e)=v}m_{e}\hat{e}^{a}(0)\vert{\bf s}\rangle_{m_{e_{1}}\dots m_{e_{N_{v}}}}
\end{array}
\end{equation}
where
\begin{equation}
\begin{array}{lll}
\alpha(+,j_{e},m_{e})\ =\ \sqrt{j_{e}(j_{e}+1)\ -\ m_{e}(m_{e}-1)}\\
\vspace*{0.1in}
\alpha(-, j_{e},m_{e})\ =\ \sqrt{j_{e}(j_{e}+1)\ -\ m_{e}(m_{e}+1)}
\end{array}
\end{equation}

\section{Defining $\hat{H}_{T(\delta)}[N]$.}\label{finitetrianglecon}
Following the seminal work of Thiemann in \cite{qsd1,qsd2,qsd3} there is a clear skeleton of how to define quantum Hamiltonian constraint in Loop Quantum Gravity. However as is well known, the construction comes with an infinite-dimensional parameter space worth of ambiguities. Although a Large subset of the ambiguities are irrelevant once the continuum limit is taken, an infinite dimensional worth of them, typically associated to choice of discrete approximant for the curvature survive giving rise to an infinite number of candidate Hamiltonian constraints. We will now argue that among such choices, there do exist a set, such that the resulting quantum constraint action of spin-network states correspond to the discretized classical action on holonomies as datailed in \ref{eq:regconstraintfinal}. We assume that the reader is familiar with the constructions in \cite{qsd1}, \cite{aldiff}. We refer the reader to \cite{alreview},\cite{nicolaietal},\cite{blreview} for slightly less technical overview of the discretization schemes in LQG.\\
We sketch a quantization of $H[N]$ on ${\cal H}_{kin}$. A rigorous 
derivation of all the composite operators involved in the definition is not presented here in the interest of brevity. We will outline the main steps involved in the derivation and highlight the key assumptions which are based on defining such operators in LQG. For detailed applications of such ideas, we refer the reader to \cite{aldiff},\cite{hlt1}.\\ 
Once again we will like to remind the reader that we will be quantizing Hamiltonian constraint of density weight two. However as our main goal (beyond this paper) is to quantize density $\frac{4}{3}$ constraint, we will make certain quantization choices with an eye towards quantization of such a constraint.\\

Given a graph $\gamma({\bf s})$ underlying some spin-network ${\bf s}$, we choose a one parameter family of ``triangulations" $T(\delta)$ where $\delta$ measures an appropriate fine-ness of the triangulation.\footnote{It is not strictly necessary that one works with triangulations. Cubulations suffice as well and in fact, in \cite{aldiff}, \cite{hlt1} some weaker versions of cubulations were used, where certain cuboids overlapped each other in co-dimension zero regions, which however did not affect the approximation of an integral by a Riemann sum in the sense that, the error induced by such regions was sub-dominant. Such details on triangulation of $\Sigma$ drop out in the continuum limit.}  \cite{qsd1} 
\begin{displaymath}
\lim_{\delta\rightarrow 0}T(\delta)\ =\ \Sigma
\end{displaymath}
Our choice of $T({\delta})$ is as follows.\\
{\noindent (1)} Away from the support of $\gamma({\bf s})$ we choose an arbitrary triangulation of $\Sigma$.\\
{\noindent (2)} Given a $v\in V(\gamma({\bf s})$, which is $N_{v}$-valent, for all $\delta$  the triangulation in the neighborhood of $v$ is given by a single three-dimensional tetrahedron which intersects all the edges incident at $v$.\\
{\noindent (3)} With each $\triangle\ \in\ T$, we associate a vertex $v_{\triangle}$ such that for $\triangle$ which bounds $v$, we have $v_{\triangle}\ =\ v$.\\
As a result of triangulating $\Sigma$, $H[N]$ can be approximated by
\begin{equation}
\begin{array}{lll}
H_{T}[N]\ =\\
\vspace*{0.1in}
\frac{1}{2}\sum_{\triangle\in T_{\delta}}\textrm{vol}(\triangle)\ N(v_{\triangle})\epsilon^{ijk}F_{ab}^{i}(v_{\triangle})\ E^{a}_{j}(v_{\triangle})\ E^{b}_{k}(v_{\triangle})
\end{array}
\end{equation}
The basic idea now is to use $N(v_{\triangle})\hat{E}^{a}_{j}(v_{\triangle})$ as a dynamical Shift field. If we were quantizing a density $\frac{4}{3}$ constraint, then the  Shift vector would be $N(v_{\triangle})\hat{E}^{a}_{j}(v_{\triangle})\frac{1}{q^{\frac{1}{6}}}(v_{\triangle})$, whose quantization will have non-trivial action on a state $\vert{\bf s}\rangle$ iff $v_{\triangle}$ is a non-degenerate (in the sense of non-degeneracy of volume operator) vertex of ${\bf s}$.  In order to replicate the same scenario in the case at hand, we assume  a definition of $N(v_{\triangle})\hat{E}^{a}_{j}(v_{\triangle})$ such that given a state $\vert{\bf s}\rangle$ with a $N_{v}$ valent vertex $v$, 
\begin{equation}
N(v_{\triangle})\hat{E}^{a}_{i}(v_{\triangle})\vert{\bf s}\rangle\ =\ 0
\end{equation}
unless $v_{\triangle}\ =\ v$. In which case the operator acts via eq.(\ref{qshift-definition}).
Whence we now quantize the Hamiltonian constraint on $\vert{\bf s}\rangle$ as,
\begin{equation}\label{dec28-2}
\begin{array}{lll}
\hat{H}_{T}[N]\ =\ \sum_{m=1}^{N_{v}}\frac{{\cal A}^{\prime}}{\delta^{2}}\delta^{3}\frac{1}{2}N(v)\ \left[\epsilon^{ijk}\widehat{\hat{e}^{a}_{m}(0)\ F_{ab}^{i}\ E^{b}_{k}}(v)\otimes\pi_{j_{e}}(\hat{\tau}_{j})\right]\vert{\bf s}\rangle
\end{array}
\end{equation}
${\cal A}^{\prime}\ =\ \frac{3}{4\pi}$.\\
The above equation needs some clarification. The quantum shift acts additively by inserting a Pauli-matrix along each edge $e$.This is the reason for the hat on $\tau_{i}$; implying that it is an insertion operator.
 We will now assume that, there exists a quantization of $\epsilon^{ijk}\widehat{\hat{e}^{a}_{m}(0)\ F_{ab}^{i}\ E^{b}_{k}}(v)$ on $\vert{\bf s}\rangle$ which is given by
\begin{equation}\label{dec28-1}
\begin{array}{lll}
\epsilon^{ijk}\ \widehat{\hat{e}^{a}_{m}(0)\ F_{ab}^{i}\ E^{b}_{k}}(v)\ \vert{\bf s}\rangle\ =\ \frac{1}{\delta^{4}}\sum_{n=1\vert n\neq m}^{N_{v}}[\ \pi_{j_{e_{n}}}(\hat{\tau}_{j}), \hat{h}_{\alpha(s_{e_{m}^{\delta},s_{e_{n}}^{\delta}})}]\vert{\bf s}\rangle
\end{array}
\end{equation}
$s_{e}^{\delta}$ is a segment in the edge $e$ with $b(s_{e}^{\delta})\ =\ v$. its parameter range is $\delta$. $\alpha(s_{e_{m}}^{\delta},s_{e_{n}}^{\delta})$ is a loop obtained by first traversing along the segment $s_{e_{m}}^{\delta}$, then along an arc that we will refer to as $\phi^{\delta}_{\hat{e_{m}}(0)}(s_{e_{n}}^{\delta})$ and finally along $(s_{e_{n}}^{\delta})^{-1}$.  The meaning of $\phi^{\delta}_{\hat{e_{m}}(0)}$ is the following : It is formally defined to be a one parameter family of diffeomorphism along the (regularized) quantum shift as defined in eq.(\ref{qshift-definition}). One possibility to define such a diffeomorphism precisely is to extend the tangent $\hat{e}_{m}(0)$ smoothly to some compact set around $v$ as done in \cite{tv}. In this paper, we will simply assume that the arc $\phi^{\delta}_{\hat{e}_{m}(0)}(s_{e_{n}}^{\delta})$ is an arc which is such that the loop $\alpha(s_{e_{m}}^{\delta},s_{e_{n}}^{\delta})$ is planar which is consistent with the idea that $\phi^{\delta}_{\hat{e}_{m}(0)}(s_{e_{n}}^{\delta})$ is a deformation of $s_{e_{n}}^{\delta}$ along a ``diffeomorphism" generated by $\hat{e}_{m}(0)$. All the details regarding the precise shape of the loop $\alpha$ will drop out in the continuum limit, except the degree of its differentiabilityy at the beginning and end points of $s_{e_{n}}^{\delta}$ which we will specify below.\\
We also assume that the co-ordinate area of the loop $\alpha$ is $\delta^{2}$. the hat on  $\hat{h}_{\alpha(s_{e_{m}^{\delta},s_{e_{n}}^{\delta}})}$ is meant to indicate that this operator acts along the edge $e_{n}$ by multiplying $\pi_{j_{e_{n}}}(h_{e_{n}})$ from the left
\begin{equation}
\hat{h}_{\alpha(s_{e_{m}^{\delta},s_{e_{n}}^{\delta}})}\vert{\bf s}\rangle\ \equiv
\dots\otimes\pi_{j_{e_{n}}}(\hat{h}_{\alpha(s_{e_{m}^{\delta},s_{e_{n}}^{\delta}})}\cdot h_{e_{n}})\otimes\dots
\end{equation}
Based on similar constructions in \cite{aldiff},\cite{hlt1} we believe there is sufficient evidence that such an assumption can be justified by detailed rigorous analysis. It is easy to see that such a quantization exists in $j=\frac{1}{2}$ case. That is, there is a choice of curvature operator such that action of $\epsilon^{ijk}\ \widehat{\hat{e}^{a}_{m}(0)F_{ab}^{i}E^{b}_{k}}(v)$ on a spin-$\frac{1}{2}$, single edge state is given by
\begin{equation}
\begin{array}{lll}
\epsilon^{ijk}\ \widehat{\hat{e}^{a}_{m}(0)F_{ab}^{i}E^{b}_{k}}(v)\ (h_{e^{\prime}})_{AB}
\approx\\
\epsilon^{ijk}\ \widehat{\hat{e}^{a}_{m}(0)F_{ab}^{i}\hat{e}^{\prime b}(0)}\hat{\tau}_{k}\vert_{b(e^{\prime})}\ (h_{e^{\prime}})_{AB}\\
\vspace*{0.1in}
= \epsilon^{ijk}\ \textrm{Tr}(\hat{h}_{\alpha(s_{e},s_{e^{\prime}})}\cdot\tau_{i})\vert_{b(e^{\prime})}\hat{\tau}_{k}\vert_{b(e^{\prime})} \ (h_{e^{\prime}})_{AB}
\end{array}
\end{equation}
where in the second line, we have quantized $E^{b}_{k}$ as a operator valued distribution which acts via an insertion operator at $b(e^{\prime})$.\footnote{The distributional factors like $\delta(v, b(e^{\prime}))$ are suppressed as they are not reelvant for the discussion at hand.} In the third line, we have chosen a particular (in fact standard) quantization of $\hat{F}_{ab}^{i}\hat{e}^{a}(0)\hat{e^{\prime}}^{b}(0)$, which due to operator ordering acts after the insertion of $\tau_{k}$. On using 
\begin{equation}
\epsilon^{ijk}\hat{\tau}_{i}\ =\ [\hat{\tau}_{j},\hat{\tau}_{k}]
\end{equation}
and the completeness relation for Pauli Matrices,
\begin{equation}
\sum_{i}(\tau_{i})_{AB}(\tau_{i})_{CD}\ =\ 2\delta_{AD}\delta_{BC}\ -\ \delta_{AB}\delta_{CD}
\end{equation}
we see that our assumption is valid when the operator acts on a spin$\frac{1}{2}$ holonomy.
\begin{equation}
\epsilon^{ijk}\ \hat{e}^{a}_{m}(0)F_{ab}^{i}E^{b}_{k}(v)\ (h_{e_{n}}) =\ 
[\ (\hat{\tau}_{j}), \hat{h}_{\alpha(s_{e_{m},s_{e_{n}}})}]\ (h_{e_{n}})
\end{equation}
On substituting eq.(\ref{dec28-1}) in eq.(\ref{dec28-2}) we see that, the finite triangulation Hamiltonian constraint operator acting on $\vert{\bf s}\rangle$ is given by,
\begin{equation}\label{dec28-5}
\begin{array}{lll}
\hat{H}_{T(\delta)}[N]\ =\ \frac{{\cal A}^{\prime}}{\delta^{3}}\frac{1}{2}N(v)\sum_{m=1}^{N_{v}}\sum_{n=1\vert n\neq m}^{N_{v}}\left([\ \pi_{j_{e_{n}}}(\hat{\tau}_{j}), \hat{h}_{\alpha(s_{e_{m}^{\delta},s_{e_{n}}^{\delta}})}]\otimes\pi_{j_{e_{m}}}(\hat{\tau}_{j})\right)\vert{\bf s}\rangle
\end{array}
\end{equation}
Notice that as the two operators $[\ \pi_{j_{e_{n}}}(\hat{\tau}_{j}), \hat{h}_{\alpha(s_{e_{m}^{\delta},s_{e_{n}}^{\delta}})}]$, $\pi_{j_{e_{m}}}(\hat{\tau}_{j})$ are insertion operators at $v$ along distinct edges $e_{n}, e_{m}$, there is no operator ordering ambiguity in the definition above.\\
This is one of our key results. We immediately recognize the conceptual similearity between the finite triangulation constraint defined in eq.(\ref{dec28-5}) and the discrete approximate to the $X_{H[N]}$ action as derived in eq. (\ref{actionofHonclassicalspinnet}). Indeed, the vector field $NE^{k}$ which turns into the quantum shift, is a linear combination of (Lie algebra valued) edge tangents. Whence the loop $h_{s^{\delta}_{N\tilde{E}_{k}(v)}}\circ h_{\phi^{\delta}_{N\tilde{E}_{k}}\circ s_{e_{n}}^{\delta}}\circ h_{s_{e_{n}}^{\delta}}^{-1}$ in eq. (\ref{actionofHonclassicalspinnet}) has a precise quantum analog in 
$\sum_{m} \pi_{j_{e_{m}}}(\hat{\tau}_{k})\otimes \hat{h}_{\alpha(s_{e_{m}}^{\delta},s_{e_{n}}^{\delta})}$ which occurs in eq.(\ref{dec28-5}).\\
Rather remarkably, close variants of the operator defined above have occurred (at least) twice in the literature before. We briefly comment on those two works below.\\
\noindent{\bf (1)}\ There is a spiritual similarity between the constraint written above and the Hamiltonian constraint operator constructed by Brugmann in a beautiful piece of work in the Loop representation, \cite{brugmann}. Brugmann's constraint was an operator on the so-called loop states and Modulo various smearings which are not relevant for the discussion here, his operator had the following structure. Consider a loop $\gamma$ which has a kink at $v$. (That is, the loop is not differentiable at $v$.) Let $\Psi(\gamma)$ be trace of Wilson loop along $\gamma$. Then, 
\begin{equation}
\hat{H}_{\delta}[N]\Psi(\gamma) \approx\ N(v)\hat{\tau}_{j}(s)\otimes\hat{e}^{a}(t)[\frac{\delta}{\delta \gamma^{a}(t)},\hat{\tau}_{j}(t)]\kappa_{\delta}(v,\gamma(s))\kappa_{\delta}(v,\gamma(t))
\end{equation}
where $\kappa_{\delta}$ is a characteristic function which has support in a ball of radius ($\delta$) around $v$. $\frac{\delta}{\delta \gamma^{a}(t)}$ is a so-called Shift operator which deforms the loop $\gamma$ along the direction in which it is contracted, (in our case $\hat{e}^{a}(t)$). This implies that the loop deforms at $v$ along one of the edges which is outgoing at the kink. Conceptually the action of this constraint is remarkably similar to the one we have obtained in eq.(\ref{dec28-5}). 
What is surprising is that Brugmann arrived at the quantum constraint, (specifically the Shift operator) through quite different considerations then the way that led us to our definition. As his quantum constraints satisfied the  \emph{formal} off-shell closure of constraint algebra in the old loop representation, it gives us an indication that there is chance that the constraint defined here may generate anomaly free Dirac algebra.\\

\noindent{\bf (2)} A slight variant of choices made here would lead to (in two space dimensions) the finite triangulation Hamiltonian constraint which is closely related to the constraint used by Bonzom and Freidel  in \cite{valentine1}  in the case of $2+1$ dimensional gravity.\\ 
Indeed it is easy to see that, we could have chosen a different quantization of $\epsilon^{ijk}\ \hat{e}^{a}_{m}(0)F_{ab}^{i}E^{b}_{k}(v)$ such that,
\begin{equation}\label{bonzom1}
\begin{array}{lll}
\hat{H}_{T}[N]\ =\ \frac{{\cal A}^{\prime}}{\delta^{3}}\frac{1}{2}N(v)\sum_{m=1}^{N_{v}}\sum_{n=1\vert n\neq m}^{N_{v}}\left(\left(\pi_{j_{e_{n}}}(\hat{\tau}_{j})
-\pi_{j_{e_{n}}}(\hat{h}_{\alpha(s_{e_{m}^{\delta},s_{e_{n}}^{\delta}})}^{-1}\tau_{j}\cdot\hat{h}_{\alpha(s_{e_{m}^{\delta},s_{e_{n}}^{\delta}})})\right)\otimes\pi_{j_{e_{m}}}(\hat{\tau}_{j})\right)\vert{\bf s}\rangle
\end{array}
\end{equation}
This is because the two composite operators $\pi_{j_{e_{n}}}(\hat{\tau}_{j})
-\pi_{j_{e_{n}}}(\hat{h}_{\alpha(s_{e_{m}^{\delta},s_{e_{n}}^{\delta}})}^{-1}\tau_{j}\cdot\hat{h}_{\alpha(s_{e_{m}^{\delta},s_{e_{n}}^{\delta}})})$ and,\\ $[\ \pi_{j_{e_{n}}}(\hat{\tau}_{j}), \hat{h}_{\alpha(s_{e_{m}^{\delta},s_{e_{n}}^{\delta}})}]$ differ only by operators of $O(\delta^{3})$. However the constraint defined in eq.(\ref{bonzom1}) is closely related to the quantum constraint used in \cite{valentine1} in the sense that, it uses precisely the same definition of curvature operator and same structure of insertions at the vertex. They key difference between the two definition lies in the choice of triangulation. As we are interested in the continuum limit, we work with one parameter family of triangulations, and hence the loops used in the construction of curvature are finer then the graph underlying ${\bf s}$. On the other hand, in \cite{valentine1}, the authors work with a fixed triangulation which is provided by the graph itself. However we still find the similarities surprising as the underlying motivations for making quantization choices are completely different in the two cases. The constraints defined in \cite{valentine1} has several remarkable properties e.g. its precise relationship with dynamical equations of Ponzano-Regge model. Hence it is an interesting open question to see if such an operator, analyzed in the spirit advocated in this paper would lead to an anomaly-free off-shell closure in three dimensional gravity.\\

As far as Hamiltonian constraint of density weight two is concerned, starting with $\hat{H}_{T(\delta)}[N]$ in eq.(\ref{dec28-5}) we could proceed to analyze the continuum limit of the finite triangulation commutators of the rescaled constraint $\delta^{2}\hat{H}_{T(\delta)}[N]$. However as our ultimate goal is to seek an anomaly-free quantization of constraint whose density weight is $\frac{4}{3}$, we modify the definition further which would make the structures chosen in this paper plausibly robust to the density $\frac{4}{3}$ case.\\
The new vertices created by the quantum constraint in eq.(\ref{dec28-5}) when it acts on $\vert{\bf s}\rangle$ are at most trivalent. This implies that if we were working with any constraint which was rescaled by $q^{-\alpha},\ \alpha\ >\ 0$, 
the action of quantum constraint on newly created vertices would vanish as they are in the kernel of volume operator.\footnote{There may very well exist alternative quantization of inverse volume operator which are such that the action of such an operator on so called degenerate (zero-volume) vertices is non-trivial, however we will ignore such alternatives in this paper.} In order to ensure that the newly created vertices can be non-degenerate, we modify $\hat{H}_{T(\delta)}[N]$ as follows.
Consider a finite triangulation operator, whose action on $\vert{\bf s}\rangle$ is defined as
\begin{equation}\label{dec30-1}
\begin{array}{lll}
\hat{H}^{\prime}_{T(\delta)}[N]\vert{\bf s}\rangle\ =\\
\frac{{\cal A}^{\prime}}{\delta^{3}}\frac{1}{2}N(v)\sum_{m=1}^{N_{v}}\sum_{n=1\vert n\neq m}^{N_{v}}\left(\left\{[\ \pi_{j_{e_{n}}}(\hat{\tau}_{j}), \hat{h}_{\alpha(s_{e_{m}^{\delta},s_{e_{n}}^{\delta}})}]\otimes_{i\neq n,m}\hat{h}_{\alpha(s_{e_{m}}^{\delta},s_{e_{i}}^{\delta})}\right\}\otimes\pi_{j_{e_{m}}}(\hat{\tau}_{j})\right)\vert{\bf s}\rangle
\end{array}
\end{equation}
where $\hat{h}_{\alpha(s_{e_{m}}^{\delta},s_{e_{i}}^{\delta})}$ acts on the state $\vert{\bf s}\rangle$ by a left multiplication on $\pi_{j_{e_{i}}}(h_{e_{i}})$.
To leading order in $\delta$ we have,
\begin{equation}
\hat{H}_{T(\delta)}[N]\ =\ \hat{H}^{\prime}_{T(\delta)}[N]
\end{equation}
 This is clear as, to $O(\delta^{2})$, only one of the holonomy operators along loops will contribute, and it has to be $\alpha(s_{e_{m}^{\delta},s_{e_{n}}^{\delta}})$, as to order $\delta^{2}$ if we consider the contribution due to any other loop, the commutator inside the operator vanishes. As long as we are interested in the continuum limit, we could choose to work with either of the two constraints given in eqs.(\ref{dec28-5}) or (\ref{dec30-1}).\\
For $i\neq n,m$ we will choose the loops as
\begin{equation}
\alpha(s_{e_{m}}^{\delta},s_{e_{i}}^{\delta})\ =\ (s_{e_{i}}^{\delta})^{-1}\circ a_{im}\circ s_{e_{m}}^{\delta}
\end{equation}
where $a_{im}$ is an arc chosen so that the loops $\alpha(s_{e_{m}}^{\delta},s_{e_{i}}^{\delta})$ are planar. That is, for  a given pair $(n,m)$ if $i \neq\ n,m$ then we do not relate these loops to the singular diffeomorphisms generated by the quantum shift.\\
A moment of reflection reveals that the new vertices created by $\hat{H}_{T(\delta)}[N]$ are at least $N_{v}$ valent, (we will see this feature explicitly in section (\ref{deformedstates})) whence \emph{even for} lower density weighted constraints which involve the inverse volume operator, these vertices will yield a non-trivial commutator in the continuum limit. Because of this reason $\hat{H}_{T(\delta)}^{\prime}[N]$ is an interesting candidate for the finite triangulation quantum constraint. Instead of however working with either $\hat{H}_{T(\delta)}[N]$ or $\hat{H}_{T(\delta)}^{\prime}[N]$ we choose to work with a quantum constraint defined as
\begin{equation}\label{averagingonH}
\begin{array}{lll}
\hat{H}_{T(\delta)}^{\prime\prime}[N]\ =\ \frac{1}{2}\hat{H}_{T(\delta)}[N]\ +\ \frac{1}{2}\hat{H}_{T(\delta)}^{\prime}[N]
\end{array}
\end{equation}
As we will see in later sections, $\hat{H}_{T(\delta)}^{\prime\prime}[N]$ not only creates new non-degenerate vertices; it also admits continuum limit on the Habitat considered in this paper. The same however is not true for $\hat{H}_{T(\delta)}^{\prime}[N]$. That is, on the Habitat space defined below, $\hat{H}_{T(\delta)}^{\prime}[N]$ does not admit a well defined continuum limit. It is plausible (and in fact likely in our opinion) that as the definition of Habitat is refined in the future, $\hat{H}_{T(\delta)}^{\prime}[N]$ will have a continuum limit. In this paper, we solve the problem rather unsatisfactorily by instead choosing the quantum constraint as defined in eq.(\ref{averagingonH}).\footnote{As far as density weight two Hamiltonian is concerned, the key mechanisms which underlie off-shell closure in this paper will in fact go through (modulo some technical modifications in the definition of the habitat), had we chosen to work purely with the original constraint which created only degenerate vertices. This we believe is an interesting check on the ideas presented in this paper.}\\
It will be useful to express the action of $\hat{H}_{T(\delta)}^{\prime}[N]$ as 
\begin{equation}\label{keydefinitionofH}
\begin{array}{lll}
\hat{H}^{\prime}_{T(\delta)}[N]\vert{\bf s}\rangle\ =\  N(v)\frac{{\cal A}^{\prime}}{2\delta^{3}}\sum_{m=1}^{N_{v}}\sum_{n=1\vert n\neq m}^{N_{v}}\left[\vert{\bf s}_{\delta}(1;v,e_{m},e_{n})\rangle\ -\ \vert{\bf s}_{\delta}(2;v,e_{m},e_{n})\rangle\right]
\end{array}
\end{equation}
where 
\begin{equation}
\begin{array}{lll}
\vert{\bf s}_{\delta}(1;v,e_{m},e_{n})\rangle\ =\ \left(\{\pi_{j_{e_{n}}}(\tau_{j})\cdot \hat{h}_{\alpha(s_{e_{m}}^{\delta},s_{e_{n}}^{\delta})}\}\otimes_{p\neq n,m}\hat{h}_{\alpha(s_{e_{m}}^{\delta},s_{e_{p}}^{\delta})}\right)\otimes\pi_{j_{e_{m}}}(\hat{\tau}_{j})\ \vert{\bf s}\rangle\\
\vspace*{0.1in}
\vert{\bf s}_{\delta}(2;v,e_{m},e_{n})\rangle\ =\ \left(\{\hat{h}_{\alpha(s_{e_{m}}^{\delta},s_{e_{n}}^{\delta})}\cdot\pi_{j_{e_{n}}}(\hat{\tau}_{j})\}\otimes_{p\neq n,m}\hat{h}_{\alpha(s_{e_{m}}^{\delta},s_{e_{p}}^{\delta})}\right)\otimes\pi_{j_{e_{m}}}(\hat{\tau}_{j})\ \vert{\bf s}\rangle
\end{array}
\end{equation}
Similarly we define states obtained by the action of $\hat{H}_{T(\delta)}[N]$ as,
\begin{equation}
\begin{array}{lll}
\hat{H}_{T(\delta)}[N]\vert{\bf s}\rangle\ =\ N(v)\frac{{\cal A}^{\prime}}{2\delta^{3}}\sum_{m=1}^{N_{v}}\sum_{n=1\vert n\neq m}^{N_{v}}\left[\vert{\bf s}_{\delta}(3;v,e_{m},e_{n})\rangle\ -\ \vert{\bf s}_{\delta}(4;v,e_{m},e_{n})\rangle\right]
\end{array}
\end{equation}
where
\begin{equation}
\begin{array}{lll}
\vert{\bf s}_{\delta}(3;v,e_{m},e_{n})\rangle\ =\ \left(\pi_{j_{e_{n}}}(\tau_{j})\cdot \hat{h}_{\alpha(s_{e_{m}}^{\delta},s_{e_{n}}^{\delta})}\right)\otimes\pi_{j_{e_{m}}}(\hat{\tau}_{j})\ \vert{\bf s}\rangle\\
\vspace*{0.1in}
\vert{\bf s}_{\delta}(4;v,e_{m},e_{n})\rangle\ =\ \left(\hat{h}_{\alpha(s_{e_{m}}^{\delta},s_{e_{n}}^{\delta})}\cdot\pi_{j_{e_{n}}}(\hat{\tau}_{j})\right)\otimes\pi_{j_{e_{m}}}(\hat{\tau}_{j})\ \vert{\bf s}\rangle
\end{array}
\end{equation}
So finally we have
\begin{equation}\label{modifiedHonstate}
\hat{H}_{T(\delta)}^{\prime\prime}[N]\vert{\bf s}\rangle\ =\ 
N(v)\frac{{\cal A}^{\prime}}{4\delta^{3}}\sum_{m=1}^{N_{v}}\sum_{n=1\vert n\neq m}^{N_{v}}\sum_{I=1}^{4}(-1)^{I+1}\vert{\bf s}_{\delta}(I;v,e_{m},e_{n})\rangle
\end{equation}
As we will not have any occasion to use either $\hat{H}_{T(\delta)}[N]$ or $\hat{H}_{T(\delta)}^{\prime}[N]$ individually in this paper, from now on we will denote $\hat{H}_{T(\delta)}^{\prime\prime}[N]$ simply as $\hat{H}_{T(\delta)}[N]$.\\
It is important to note that, there is yet another possibility to obtain non-trivial continuum commutator even for the case of density $\frac{4}{3}$ weighted constraint without modifying the definition given in eq.(\ref{dec28-5}). This is because, Even though the vertices created by such a $\hat{H}_{T(\delta)}[N]$ are degenerate, there location is dynamical in the sense that it is governed by the quantum shift. As the triad itself transforms non-trivially under the action of $H[N]$, it is possible to modify the definition of quantum constraint such that on the vertices which are obtained by motion along the quantum shift, constraint  acts in such a way so as to take into account a quantization of eq.(\ref{HonE-final}). 
It was precisely this scenario that was realized in \cite{hlt1} in the case of a $U(1)^{3}$ theory. Preliminary investigations indicate that this approach will also lead to an anomaly free definition of the Hamiltonian constraint, \cite{Iinprep}.\\

\section{Defining the constraint on ${\cal L}^{2}({\cal A})$}
At this stage it is not clear to us if $\hat{H}_{T(\delta)}[N]$ is a gauge invariant operator.\footnote{Although the explicit choice of curvature operator in the case of $j=\frac{1}{2}$ holonomy indicates that it maybe so.}
 Whence instead of defining the operator on gauge-invariant spin-nets, we will define the regularized operator on ${\cal L}^{2}(A)$. This even serves a pedagogical purpose as it will be notationally more convenient for us to explicitly write the states which lie in the image of $\hat{H}_{T(\delta)}[N]$ in the functional representation. We will touch upon the issue of gauge invariance of the \emph{continuum} Hamiltonian constraint in section (\ref{conclusions}), however a detailed analysis of this question is left for future work.\\
Given a graph $\gamma$ with a vertex $v$, such that edges $e_{1},\dots,e_{N_{v}}$ are incident at $v$ we will consider gauge-variant spin-networks defined as,
\begin{equation}\label{spinnetdef}
\begin{array}{lll}
\vert{\bf s}\rangle_{m_{e_{1}}\dots m_{e_{N_{v}}}}\ :=\ \left(\pi_{j_{e_{1}}}(h_{e_{1}})\otimes\dots\otimes\pi_{j_{e_{N_{v}}}}(h_{e_{N_{v}}})\right)_{m_{e_{1}}\dots m_{e_{N_{v}}},n_{e_{1}}\dots n_{e_{N_{v}}}}\cdot {\cal M}_{\gamma-\{v\}}^{n_{e_{1}}\dots n_{e_{N_{v}}}}
\end{array}
\end{equation}
where ${\cal M}_{\gamma-\{v\}}$ conceals the data contained in the other vertices of ${\bf s}$. We will always assume that the spin incident at $v$, are such that $
0\ \in\ \vec{j_{e_{1}}}\ +\ \dots\ +\ \vec{j}_{e_{N_{v}}}$.\\

Throughout this paper, we will consider action of the quantum constraint on spin-network vertices which are not gauge-invariant as described above. We will always denote any spin-network state as $\vert{\bf s}\rangle$, however it should be clear from the context that such a state might carry uncontracted magnetic indices.

\section{Structure of deformed states}\label{deformedstates}
In the previous section we quantized the Hamiltonian constraint and obtained a densely defined operator on ${\cal H}_{kin}$. We now analyze the action of this (finite triangulation) operator in some detail by explicitly writing down the resulting states. This will tell us, among other things the intertwiner structure at the new as well as original vertices of the spin-networks. It will also help us understand the conceptual underpinnings of the mechanism through which, in the continuum limit the quantum constraints satisfy Dirac algebra.\\
We recall that the finite triangulation constraint action on a spin-network is a sum of local operators which act on the vertices of the state. Whence we define
\begin{equation}
\hat{H}_{T(\delta)}(v)\vert{\bf s}\rangle\ :=\ \sum_{i,j\vert j\neq i, b(e_{i})=b(e_{j})=v}(-1)^{I+1}\vert{\bf s}_{\delta}(I;v,e_{i}, e_{j})\rangle
\end{equation}

In the connection  representation $\vert{\bf s}\rangle$ is given by
\begin{equation}
T_{{\bf s}}(A)\ =\ \left(\pi_{j_{e_{1}}}(h_{e_{1}})\otimes\dots\otimes\pi_{j_{e_{N_{v}}}}(h_{e_{N_{v}}})\right)_{m_{e_{1}}\dots m_{e_{N_{v}}},n_{e_{1}}\dots n_{e_{N_{v}}}}{\cal M}_{\gamma({\bf s})-\{v\}}^{n_{e_{1}}\dots n_{e_{N_{v}}}}
\end{equation}
Hence the cylindrical functions corresponding to $\vert{\bf s}_{\delta}(1;v,e_{i},e_{j})\rangle\,\ \vert{\bf s}_{\delta}(2;v,e_{i},e_{j})\rangle$
are given by,
\begin{equation}\label{deformedstates-1}
\begin{array}{lll}
T_{{\bf s}_{\delta}(1;v,e_{i},e_{j})}(A)\ =\\
\left(\pi_{j_{e_{1}}}(h_{\alpha(s_{e_{i},\delta}, s_{e_{1},\delta})}\cdot h_{e_{1}})\otimes\dots\otimes\pi_{j_{e_{i}}}(\tau_{i}\cdot h_{e_{i}})\otimes\dots\otimes\pi_{j_{e_{j}}}(\tau_{i}\cdot h_{\alpha(s_{e_{i},\delta},s_{e_{j},\delta})}\cdot h_{e_{j}})\otimes\right.\\
\hspace*{3.3in}\left.\dots\otimes\pi_{j_{e_{N_{v}}}}(h_{\alpha(s_{e_{i},\delta},s_{e_{N_{v}},\delta})}\cdot h_{e_{N_{v}}})\right)_{m_{e_{1}}\dots m_{e_{N_{v}}},n_{e_{1}}\dots n_{e_{N_{v}}}}\\
\hspace*{5.5in}{\cal M}_{\gamma({\bf s})-\{v\}}^{n_{e_{1}}\dots n_{e_{N_{v}}}}\\
=\ \left(\pi_{j_{e_{i}}}(\tau_{i})\otimes\pi_{j_{e_{j}}}(\tau_{i})\right)_{m_{e_{i}}m_{e_{i}}^{\prime},\ m_{e_{j}}m_{e_{j}}^{\prime}}\\
\hspace*{1.0in}\left(\pi_{j_{e_{1}}}(h_{s_{e_{1},{\delta}}})\dots\otimes\pi_{j_{e_{N_{v}}}}(h_{s_{e_{1}},\delta})\right)_{m_{e_{1}}\dots m_{e_{i}}^{\prime}\dots m_{e_{j}}^{\prime}\dots m_{e_{N_{v}}},m_{e_{1}}^{\prime\prime}\dots m_{e_{i}}^{\prime\prime}\dots m_{e_{j}}^{\prime\prime}\dots m_{e_{N_{v}}}^{\prime\prime}}\\
\hspace*{0.7in}\left(\pi_{j_{e_{1}}}(h_{a_{i1}^{\delta}}\cdot h_{e_{1}-s_{e_{1},\delta}})\otimes\dots\otimes\pi_{j_{e_{i}}}(h_{e_{i}-s_{e_{i},\delta}})\otimes\dots\otimes\pi_{j_{e_{j}}}(h_{\phi^{\delta}_{\hat{e}_{i}(0)}\cdot s_{e_{j},\delta}}\cdot h_{e_{j}-s_{e_{j}\delta}})\otimes\right.\\
\hspace*{3.4in}\left.\dots\otimes \pi_{j_{e_{N_{v}}}}(h_{a_{iN_{v}}^{\delta}}\cdot h_{e_{N_{v}}-s_{e_{N_{v}}\delta}})\right)_{m_{e_{1}}^{\prime\prime}\dots m_{e_{N_{v}}}^{\prime\prime},n_{e_{1}}\dots n_{e_{N_{v}}}}\\
\vspace*{0.1in}
=\  \left(\pi_{j_{e_{1}}}(h_{s_{e_{1},{\delta}}})\dots\otimes\pi_{j_{e_{N_{v}}}}(h_{s_{e_{1}},\delta})\right)_{m_{e_{1}}\dots m_{e_{i}}\dots m_{e_{j}}\dots m_{e_{N_{v}}},m_{e_{1}}^{\prime\prime}\dots m_{e_{i}}^{\prime}\dots m_{e_{j}}^{\prime}\dots m_{e_{N_{v}}}^{\prime\prime}}\\
\hspace*{3.5in}\left(\pi_{j_{e_{i}}}(\tau_{i})\otimes\pi_{j_{e_{j}}}(\tau_{i})\right)_{m_{e_{i}}^{\prime}m_{e_{i}}^{\prime\prime},\ m_{e_{j}}^{\prime}m_{e_{j}}^{\prime\prime}}\\
\hspace*{0.7in}\left(\pi_{j_{e_{1}}}(h_{a_{i1}^{\delta}}\cdot h_{e_{1}-s_{e_{1},\delta}})\otimes\dots\otimes\pi_{j_{e_{i}}}(h_{e_{i}-s_{e_{i},\delta}})\otimes\dots\otimes\pi_{j_{e_{j}}}(h_{\phi^{\delta}_{\hat{e}_{i}(0)}\cdot s_{e_{j},\delta}}\cdot h_{e_{j}-s_{e_{j}\delta}})\otimes\right.\\
\hspace*{3.4in}\left.\dots\otimes \pi_{j_{e_{N_{v}}}}(h_{a_{iN_{v}}^{\delta}}\cdot h_{e_{N_{v}}-s_{e_{N_{v}}\delta}})\right)_{m_{e_{1}}^{\prime\prime}\dots m_{e_{N_{v}}}^{\prime\prime},n_{e_{1}}\dots n_{e_{N_{v}}}}\\
\hspace*{5.5in}{\cal M}_{\gamma({\bf s})-\{v\}}^{n_{e_{1}}\dots n_{e_{N_{v}}}}
\end{array}
\end{equation}
In going from first to second line in the above equation we have used,\\
\noindent{\bf (i)} $h_{\alpha(s_{e_{i},\delta},s_{e_{j},\delta})}\ =\ h_{s_{e_{i},\delta}}\circ h_{\phi^{\delta}_{\hat{e}_{i}(0)}\cdot s_{e_{j},\delta}}\circ h_{s_{e_{j},\delta}}^{-1}$, $h_{\alpha(s_{e_{i},\delta},s_{e_{k},\delta})}\ =\ h_{s_{e_{i}}^{\delta}}\circ h_{a_{ik}^{\delta}}\circ h_{s_{e_{k}}^{\delta}}^{-1}\ \forall\ k\ \neq\ j$.\\

\noindent{\bf (ii)} In the last line we have used the following property of \emph{gauge-invariant} intertwiner 
$\pi_{j_{e_{i}}}(\tau_{i})\otimes\pi_{j_{e_{j}}}(\tau_{i})$ in the last line.
\begin{equation}\label{jan1-1}
\begin{array}{lll}
\left(\pi_{j_{e_{i}}}(\tau_{i})\otimes\pi_{j_{e_{j}}}(\tau_{i})\right)_{m_{e_{i}}m_{e_{i}}^{\prime},\ m_{e_{j}}m_{e_{j}}^{\prime}}\\
\vspace*{0.1in}
\hspace*{1.0in}\left(\pi_{j_{e_{1}}}(h_{s_{e_{1},{\delta}}})\dots\otimes\pi_{j_{e_{N_{v}}}}(h_{s_{e_{1}},\delta})\right)_{m_{e_{1}}\dots m_{e_{i}}^{\prime}\dots m_{e_{j}}^{\prime}\dots m_{e_{N_{v}}},m_{e_{1}}^{\prime\prime}\dots m_{e_{i}}^{\prime\prime}\dots m_{e_{j}}^{\prime\prime}\dots m_{e_{N_{v}}}^{\prime\prime}}\ =\\
\left(\pi_{j_{e_{1}}}(h_{s_{e_{1},{\delta}}})\dots\otimes\pi_{j_{e_{N_{v}}}}(h_{s_{e_{1}},\delta})\right)_{m_{e_{1}}\dots m_{e_{i}}\dots m_{e_{j}}\dots m_{e_{N_{v}}},m_{e_{1}}^{\prime\prime}\dots m_{e_{i}}^{\prime}\dots m_{e_{j}}^{\prime}\dots m_{e_{N_{v}}}^{\prime\prime}}\\
\vspace*{0.1in}
\hspace*{2.7in}\left(\pi_{j_{e_{i}}}(\tau_{i})\otimes\pi_{j_{e_{j}}}(\tau_{i})\right)_{m_{e_{i}}^{\prime}m_{e_{i}}^{\prime\prime},\ m_{e_{j}}^{\prime}m_{e_{j}}^{\prime\prime}}
\end{array}
\end{equation}
We thus see that in the deformed state associated to ${\bf s}_{\delta}(1;v,e_{i},e_{j})$ has (with respect to the state $T_{{\bf s}}(A)$) the following properties.\\
\noindent {\bf (a)} As can be clearly seen from the above equation, the segments $s_{e_{m}}\ \forall\ m \neq i\in\{1,\dots,n_{v}\}$ are absent in the resulting state and whence the original vertex $v$ is no longer a  \emph{non-degenerate} vertex for the graph underlying the new state. Instead, the new vertex sits at $v_{e,\delta}$ with the invariant tensor $\left(\hat{\tau}_{i}\vert_{e}\otimes\hat{\tau}_{i}\vert_{e^{\prime}}\right)$. The reader can easily verify in-fact that had we started with a gauge-invariant spin-network state with a vertex $v$, in the resulting state ${\bf s}_{\delta}(1;v, e_{i},e_{j})$ would have no vertex at $v$ but infact a new vertex at $v_{e_{i},\delta}$ with a different intertwiner then the one at $v$. \footnote{The state $\vert {\bf s}_{1}^{\delta}(v,e,e^{\prime})\rangle$ is infact a linear combination of spin-network states obtained by expanding $\pi_{j_{e_{1}}}(h_{s_{e_{1},\delta}})\otimes\dots\otimes\pi_{j_{e_{N_{v}}}}(h_{s_{e_{N_{v}}},\delta})$ in terms of direct sum of irreducible representations.}\\
\noindent{\bf (b)} $f(s_{e_{j}}^{\delta})$ and $f(s_{e_{m}}^{\delta}), m \ \neq\ (i,j)$ are bi-valent vertices. As the diffeomorphisms generated by quantum shift are ``singular", we 
quantify this singular structure by placing certain constraints on differentiability of the edges at such vertices. Whence we choose the loops $\alpha(s_{e_{j}}^{\delta}, s_{e_{m}}^{\delta})$ to be such that, $f(e_{j}^{\delta})$ is a $C^{0}$ ``kink vertex,\footnote{This means that the edge $\phi^{\delta}_{\hat{e}_{i}(0)}(e_{j})$ is only continuous at $f(s_{e_{j}}^{\delta})$.} and for technical convenience we choose the loops $\alpha(s_{e_{j}}^{\delta}, s_{e_{m}}^{\delta})$ such that $f(a_{I'm}^{\delta})$ are $C^{1}$-kinks. That is, the tangent vector to the edge $a_{I'm}^{\delta}\circ(e_{m}-s_{e_{m}}^{\delta})$ exists, but its acceleration is ill-defined.\footnote{This condition will ensure that the key mechanism which ensures off-shell closure is same as the one which would have occurred for the constraint defined in eq.(\ref{dec28-5}).}\\
Note that as $f(e_{j}^{\delta})$ is a unique $C^{0}$ bivalent vertex in a $\delta_{0}$ co-ordinate nbd. of $v$ in ${\bf s}_{\delta}(1;v,e_{i}, e_{j})$, we have,
\begin{equation}
\langle{\bf s}_{\delta}(1;v,e_{i}, e_{j})\vert{\bf s}_{\delta}(1;v,e_{i}, e_{j}^{\prime})\rangle\ =\ 0
\end{equation}
because $\gamma({\bf s}_{\delta}(1;v,e_{i}, e_{j}))\ \neq\ \gamma({\bf s}_{\delta}(1;v,e_{i}, e_{j}^{\prime}))$.\\
In addition to the above two properties which are natural consequence of the definition of the quantum constraint, we put an additional condition on the (unit) tangent vector $\hat{e}_{j}^{\delta}(0)$ to the edge $e_{j}^{\delta}$ at the new non-degenerate vertex $v_{e_{i},\delta}$. Consider $\delta$ to be small enough such that $v_{e_{i},\delta}$ lies inside the co-ordinate patch $\{x\}_{v}$. We then define the deformation generated by $\hat{H}_{T(\delta)}[N]$ to be such that 
\begin{equation}\label{deformedtangent}
\hat{e}_{j}^{\delta}(0)\ =\ \hat{e}_{j}(0). 
\end{equation}
This condition is by the following heuristic interpretation of quantum Shift as an averaging of a vector field over a co-ordinate ball. The anomaly freedom of constraint algebra is not sensitive to such a choice however.
The ``deformed" state $\vert{\bf s}_{\delta}(1;v,e_{i}, e_{j})\rangle$ is shown in the figure below.\\

\begin{figure}
\begin{center}
\includegraphics[height=2in, width=5in]{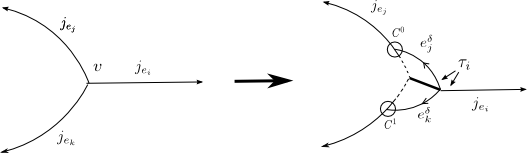}
\caption{${\bf s}_{\delta}(1;v,e_{i},e_{j})$ is shown. For simplicity we have displayed a three-valent vertex. The intertwiner at the new vertex is $\pi_{j_{e_{i}}}(\tau_{i})\otimes\pi_{j_{e_{j}}}(\tau_{i})$. The differentiability of edges at degenerate vertices is displayed explicitly. The edges $e_{j}^{\delta}$ and $e_{k}^{\delta}$ corresponds to $\phi^{\delta}_{\hat{e}_{i}(0)}\cdot e_{j}$ etc.}
\label{fig1}
\end{center}
\end{figure}

We similarly analyze the structure of $\vert{\bf s}_{\delta}(2;v,e_{i},e_{j})\rangle$ in some detail by writing it explicitly in the functional representation.
\begin{equation}\label{deformedstates-2}
\begin{array}{lll}
T_{{\bf s}_{\delta}(2;v,e_{i},e_{j})}(A)\ =\\
\left(\pi_{j_{e_{1}}}(h_{\alpha(s_{e_{i},\delta}, s_{e_{1},\delta})}\cdot h_{e_{1}})\otimes\dots\otimes\pi_{j_{e_{i}}}(\tau_{i}\cdot h_{e_{i}})\otimes\dots\otimes\pi_{j_{e_{j}}}(h_{\alpha(s_{e_{i},\delta},s_{e_{j},\delta})}\cdot\tau_{i}\cdot h_{e_{j}})\otimes\right.\\
\hspace*{3.4in}\left.\dots\otimes\pi_{j_{e_{N_{v}}}}(h_{\alpha(s_{e_{i},\delta},s_{e_{N_{v}},\delta})}\cdot h_{e_{N_{v}}})\right)_{m_{e_{1}}\dots m_{e_{N_{v}}},n_{e_{1}}\dots n_{e_{N_{v}}}}\\
\hspace*{4.8in}{\cal M}^{n_{e_{1}}\dots n_{e_{N_{v}}}}_{\gamma-\{v\}}\\
\vspace*{0.1in}
=\pi_{j_{e_{i}}}(\tau_{i})_{m_{e_{i}}m_{e_{i}}^{\prime\prime}}\cdot\left(\pi_{j_{e_{1}}}(h_{s_{e_{i}}^{\delta}})\otimes\dots\otimes\pi_{j_{e_{N_{v}}}}(h_{s_{e_{i}}^{\delta}})\right)_{m_{e_{1}}\dots m_{e_{i}}^{\prime\prime}\dots m_{e_{N_{v}}};m_{e_{1}}^{\prime}\dots m_{e_{i}}^{\prime}\dots m_{e_{N_{v}}}^{\prime}}\cdot\\
\left(\pi_{j_{e_{1}}}(h_{a_{i1}^{\delta}}\cdot h_{e_{1}-s_{e_{1},\delta}})\otimes\dots\otimes\pi_{j_{e_{i}}}(h_{e_{i}-s_{e_{i},\delta}})\otimes\dots\otimes\pi_{j_{e_{j}}}(h_{\phi^{\delta}_{\hat{e}_{i}(0)}\cdot s_{e_{j},\delta}}\cdot h_{s_{e_{j},\delta}}^{-1}\cdot \tau_{j}\cdot h_{e_{j}})\otimes\right.\\
\hspace*{2.7in}\left.\dots\otimes \pi_{j_{e_{N_{v}}}}(h_{a_{iN_{v}}^{\delta}}\cdot h_{e_{N_{v}}-s_{e_{N_{v}}\delta}})\right)\cdot{\cal M}\\
\vspace*{0.1in}
=\left\{\left(\pi_{j_{e_{1}}}(h_{s_{e_{i}}^{\delta}})\otimes\dots\otimes\pi_{j_{e_{N_{v}}}}(h_{s_{e_{i}}^{\delta}})\right)_{m_{e_{1}}\dots m_{e_{i}}\dots m_{e_{N_{v}}};m_{e_{1}}^{\prime}\dots m_{e_{i}}^{\prime}\dots m_{e_{N_{v}}}^{\prime}}\otimes\pi_{J=1}(h_{s_{e_{i}}}^{\delta})_{ij}\right\}
\pi_{j_{e_{i}}}(\tau_{j})_{m_{e_{i}}^{\prime}m_{e_{i}}^{\prime\prime}}\cdot\\
\hspace*{0.4in}\left(\pi_{j_{e_{1}}}(h_{\phi^{\delta}_{\hat{e}_{i}(0)}\cdot s_{e_{1},\delta}}\cdot h_{e_{1}-s_{e_{1},\delta}})\otimes\dots\otimes\pi_{j_{e_{i}}}(h_{e_{i}-s_{e_{i},\delta}})\otimes\dots\otimes\pi_{j_{e_{j}}}(h_{\phi^{\delta}_{\hat{e}_{i}(0)}\cdot s_{e_{j},\delta}}\cdot h_{s_{e_{j},\delta}}^{-1}\cdot \tau_{j}\cdot h_{e_{j}})\otimes\right.\\
\hspace*{3.0in}\left.\dots\otimes \pi_{j_{e_{N_{v}}}}(h_{\phi^{\delta}_{\hat{e}_{i}(0)}\cdot s_{e_{N_{v}},\delta}}\cdot h_{e_{N_{v}}-s_{e_{N_{v}}\delta}})\right)_{m_{e_{1}}\dots m_{e_{N_{v}}},n_{e_{1}}\dots n_{e_{N_{v}}}}\\
\hspace*{4.7in}{\cal M}^{n_{e_{1}}\dots n_{e_{N_{v}}}}_{\gamma-\{v\}}
\end{array}
\end{equation}
In going from first to second line in eq.(\ref{deformedstates-2}), we have used (i) and (ii) listed below eq.(\ref{deformedstates-1}) and the following property of a spin-1 intertwiner.
\begin{equation}
\begin{array}{lll}
\left({\cal I}\cdot\left(\pi_{j_{e_{i}}}(\tau_{i})\right)\right)^{m_{e_{1}}\dots m_{e_{N_{v}}}}_{v}\left(\pi_{j_{e_{1}}}(h_{s_{e_{i},\delta}})\otimes\dots\otimes\pi_{j_{e_{N_{v}}}}(h_{s_{e_{i},\delta}})\right)_{m_{e_{1}},\dots,m_{e_{N_{v}}}}^{n_{e_{1}}\dots n_{e_{N_{v}}}}\\
\vspace*{0.1in}
\hspace*{1.9in}=\pi_{j=1}(h_{s_{e_{i},\delta}})_{ij}\left({\cal I}\cdot\left(\pi_{j_{e_{i}}}(\tau_{i})\right)\right)^{m_{e_{1}}\dots m_{e_{N_{v}}}}_{v}
\end{array}
\end{equation}
From Eq. (\ref{deformedstates-2}) we see that,\\
\noindent{\bf (i)} $\vert{\bf s}_{\delta}(2;v,e_{i},e_{j})$ has a gauge-variant $N_{v}+1$-valent vertex at $v_{e_{i},\delta}$ with a spin-1 intertwiner\\ $\left(\pi_{j_{e_{i}}}(\tau_{i})\right)$.\\
\noindent{\bf (ii)} There is a bivalent ``spin-1" vertex at $v$ (which was the location of non-degenerate vertex of $s$) with the corresponding spin-1 intertwiner being $\pi_{j_{e_{i}}}(\tau_{i})$.\\
\noindent{\bf (iii)} $\vert{\bf s}_{\delta}(2;v,e_{i},e_{j})$ has $N_{v}-2$ gauge-invariant bivalent kink vertices which are located at $\{v_{e_{k},\delta}\vert k\neq\{i,j\}\}$, and a three-valent vertex located $v_{e_{j},\delta}$. We place the same demand on the differentiability of edges at such vertices as given in \noindent {\bf (b)} below eq.(\ref{jan1-1}). That is, all the bivalent kink vertices are $C^{1}$ and the trivalent kink vertex $f(s_{e_{j}}^{\delta})$ is such that, the edge $\phi^{\delta}_{\hat{e}_{i}(0)}(s_{e_{j}}^{\delta})\circ (e_{j}-s_{e_{j}}^{\delta})$ is continuous but not differentiable at the vertex.\\
Details of this state are easiest to analyze by meditating over the figure \ref{figure2}.\\
\begin{figure}\label{figure2}
\begin{center}
\includegraphics[height=2in, width=5in]{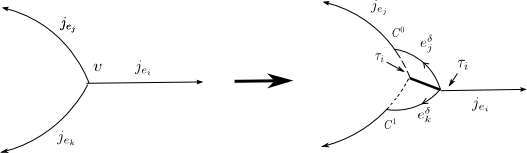}
\caption{${\bf s}_{\delta}(2;v,e_{i},e_{j})$ shown. The insertion operators are displayed at different vertices to contrast with the structure of ${\bf s}_{\delta}(1;v,e_{i},e_{j})$. Thick line represents edge being in a reducible rep. of $SU(2)$ and thick dotted line represents edge being in reducible rep. of $j_{e_{j}}\otimes j_{e_{j}}$}
\end{center}
\end{figure}

Similarly  we can write down the functional representation for remaining two states. The details are pretty similar to the ones for the states $\vert{\bf s}_{\delta}(I,\dots)\rangle_{I\in\{1,2\}}$ so here we simply give the final result.
\begin{equation}
\begin{array}{lll}
T_{{\bf s}_{\delta}(3;v,e_{i},e_{j})}\ =\\
\left(\pi_{j_{e_{i}}}(h_{s_{e_{i},{\delta}}})\otimes\pi_{j_{e_{j}}}(h_{s_{e_{j}}^{\delta}})\right)_{m_{e_{i}}m_{e_{j}},m_{e_{i}}^{\prime}m_{e_{j}}^{\prime}}\\
\hspace*{3.5in}\left(\pi_{j_{e_{i}}}(\tau_{i})\otimes\pi_{j_{e_{j}}}(\tau_{i})\right)_{m_{e_{i}}^{\prime}m_{e_{i}}^{\prime\prime},\ m_{e_{j}}^{\prime}m_{e_{j}}^{\prime\prime}}\\
\hspace*{0.7in}\left(\pi_{j_{e_{1}}}(h_{e_{1}})\otimes\dots\otimes\pi_{j_{e_{i}}}(h_{e_{i}-s_{e_{i},\delta}})\otimes\dots\otimes\pi_{j_{e_{j}}}(h_{\phi^{\delta}_{\hat{e}_{i}(0)}\cdot s_{e_{j},\delta}}\cdot h_{e_{j}-s_{e_{j}\delta}})\otimes\right.\\
\hspace*{3.4in}\left.\dots\otimes \pi_{j_{e_{N_{v}}}}(h_{e_{N_{v}}})\right)_{m_{e_{1}}\dots m_{e_{N_{v}}},n_{e_{1}}\dots n_{e_{N_{v}}}}\\
\hspace*{5.5in}{\cal M}_{\gamma({\bf s})-\{v\}}^{n_{e_{1}}\dots n_{e_{N_{v}}}}
\end{array}
\end{equation}
Figure \ref{figure3} is sufficient to understand the details associated to graph and intertwiner structure of $T_{{\bf s}_{\delta}(3;v,e_{i},e_{j})}$.\\
\begin{figure}\label{figure3}
\begin{center}
\includegraphics[height=2in, width=5in]{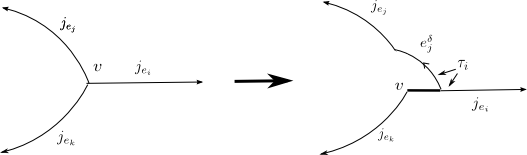}
\caption{${\bf s}_{\delta}(3;v,e_{i},e_{j})$ shown. The valence of $v$ goes down by one, whence it is non-degenerate. The new vertex is degenerate.}
\end{center}
\end{figure}

Similarly a functional representation for $\vert{\bf s}_{\delta}(4;v,e_{i},e_{j})\rangle$ ca be written down. However in light of its similarity with the state associated to ${\bf s}_{\delta}(2;v,e_{i},e_{j})$ we simply provide a figure (figure \ref{figure4}) which 
captures all the relevant details.\\ 
\begin{figure}\label{figure4}
\begin{center}
\includegraphics[height=2in, width=4in]{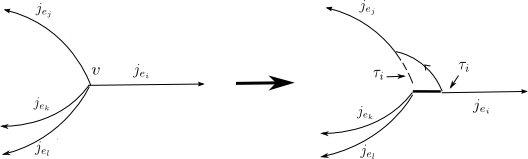}
\caption{${\bf s}_{\delta}(4;v,e_{i},e_{j})$ shown. The valence of $v$ goes down by one, whence it is non-degenerate. The new vertex is degenerate.The dotted thick line along $e_{j}$ is in reducible representation $j_{e_{j}}\otimes j_{e_{j}}$. The thick line along $e_{i}$ is in reducible rep. $j_{e_{i}}\otimes j_{e_{j}}$.}
\end{center}
\end{figure}
We once again stress that $\vert{\bf s}_{\delta}(I;v,\dots)$ are not spin-network states. They are linear combination of gauge-variant spin-nets which could be obtained by expansion of product of various intertwiners in terms of $6-j$ symbols. However our contention is to check if the Hamiltonian constraint proposed above is anomaly free, and as we will see, this question can be asked precisely by working with the states obtained above. \emph{We will refer to such states as cylindrical-networks}.\\

Each of the above states are linear combination of spin-network states based on a graph which has \emph{at most} a degenerate vertex at $v$ and a unique non-degenerate vertex at $v_{e,\delta}$. The grasping operators $\tau_{i}$ can be expressed as (appropriately normalized) 3-j symbols, and can be used to re-write the above states in terms of spin-network basis. 
We now show some rather surprising orthogonality relations of such cylindrical networks which make them well suited for construction of Habitat spaces.\\
\subsection{Orthogonality relations among deformed states.}
In this section we display two orthogonality properties of the cylindrical networks which make them suitable for the construction of Habitat states.\\
We will show that,
\begin{equation}\label{orthoproperty}
\begin{array}{lll}
\langle {\bf s}_{\delta}(I;v,e_{m}, e_{n})\vert{\bf s}_{\delta}(J;v,e_{m^{\prime}},e_{n^{\prime}})\rangle\ =\ 0\ \forall\ I,J\ \textrm{if}\ (m,n)\neq (m^{\prime}, n^{\prime})\\
\vspace*{0.1in}
\langle {\bf s}_{\delta}(I;v,e_{m}, e_{n})\vert{\bf s}_{\delta}(J;v,e_{m^{\prime}},e_{n^{\prime}})\rangle\ =\ 0 
\end{array}
\end{equation}
$\forall\ ((m,n);(m^{\prime},n^{\prime}))$ if $I\neq J$.\\
The first orthogonality follows from the fact that, as long as $(m,n)\ \neq\ (m^{\prime}, n^{\prime})$ (we recall that by this we mean, $m\ \neq\ m^{\prime},\ n\ \neq n^{\prime}$), the graphs underlying the cylindrical networks are necessarily distinct in such a way as to make the two states orthogonal.\\
More in detail, when $m\ \neq\ m^{\prime}$, the states are easily seen to be orthogonal as deformations are along different edges $e_{m}$ or $e_{m^{\prime}}$.
If $m\ =\ m^{\prime}$ but $n\ \neq\ n^{\prime}$ then when $\vert{\bf s}_{\delta}(I;v,e_{m}, e_{n})\rangle$ is written as a linear combination of spin-networks, 
each under-lying spin-network has a (at most trivalent) $C^{0}$-kink vertex at  $f(e_{n}^{\delta})$ and \emph{all the other} bivalent kink vertices are $C^{1}$.\\
Thus the graphs underlying each such spin-networks are distinct from the graph underlying spin-networks associated to ${\bf s}_{\delta}(I;v,e_{m}, e_{n^{\prime}})$. Hence the orthogonality property follows.\\
We now prove that the second orthogonality property holds.\\
In light of the first orthogonality relation, it suffices to prove that 
\begin{equation}
\langle {\bf s}_{\delta}(I;v,e_{m}, e_{n})\vert{\bf s}_{\delta}(J;v,e_{m},e_{n})\rangle\ =\ 0 
\end{equation}
It is of sufficient to analyze $I=1, J\ \in\ \{2,3,4\}$ case, as other cases are analogous. In fact if $J\ \in\ \{3,4\}$ then as the under-lying graphs are distinct, the states are orthogonal to each other.Thus we only need to compute 
$\langle {\bf s}_{\delta}(I=1;v,e_{m}, e_{n})\vert{\bf s}_{\delta}(J=2;v,e_{m},e_{n})\rangle$ to prove the second relation.\\
Note that there exists a canonical decomposition of cylindrical-networks into three cylindrical functions, one for each pair of insertion operators.
In other words, we can write
\begin{equation}
\vert{\bf s}_{\delta}(I;v,e_{m}, e_{n})\rangle\ =\ \sum_{i=1}^{3}\vert{\bf s}_{\delta}( I, i;v,e_{m}, e_{n})\rangle
\end{equation}
with
\begin{equation}
\begin{array}{lll}
\vert{\bf s}_{\delta}( I, i=1;v,e_{m}, e_{n})\rangle\ :=\ \vert{\bf s}_{\delta}( I;v,(e_{m}, +), (e_{n}, -))\rangle\\
\vspace*{0.1in}
\vert{\bf s}_{\delta}( I, i=2;v,e_{m}, e_{n})\rangle\ :=\ \vert{\bf s}_{\delta}( I;v,(e_{m}, -), (e_{n}, +))\rangle\\
\vert{\bf s}_{\delta}( I, i=3;v,e_{m}, e_{n})\rangle\ :=\ \vert{\bf s}_{\delta}( I;v,(e_{m}, 3), (e_{n}, 3))\rangle
\end{array}
\end{equation}
where the labels accompanying the edges correspond to one of the three insertion operators $(\tau_{+}, \tau_{-},\ \textrm{or}\ \tau_{3})$.\footnote{We will not have an occasion to use $\vert{\bf s}_{\delta}( I, i;v,e_{m}, e_{n})\rangle$ after establishing the orthogonality of cylindrical-networks. Whence this notation will never be used anywhere else in the paper.}
We will now show that,
\begin{equation}
\begin{array}{lll}
\langle{\bf s}_{\delta}(I=1; v, (e_{m}, i), (e_{n}, i))\vert  {\bf s}_{\delta}(I=2; v, (e_{m}, j), (e_{n}, j))\rangle\ =\ 0\\
\end{array}
\end{equation}
$\forall\ i,j$.\\
\underline{{\bf Proof}} : Let us look at the state $\vert{\bf s}_{\delta}(I=2; v, (e_{m}, j), (e_{n}, j))\rangle$ in holonomy representation.
\begin{equation}
\begin{array}{lll}
T_{{\bf s}_{\delta}(2;v,e_{m},e_{n}, j)}\ =\\
\vspace*{0.1in}
\left(\pi_{j_{e_{1}}}(h_{s_{e_{m}}^{\delta}}\cdot h_{\alpha(s_{e_{m}}^{\delta}, s_{e_{1}}^{\delta})}\cdot h_{e_{1}-s_{e_{1}}^{\delta}})_{m_{e_{1}}n_{e_{1}}}\otimes\dots\otimes\pi_{j_{e_{m}}}(\tau_{j}\cdot h_{s_{e_{m}}^{\delta}}\cdot h_{e_{m}-s_{e_{m}}^{\delta}})_{m_{e_{m}}n_{e_{m}}}\otimes\dots\otimes\right.\\
\hspace*{2.8in}\pi_{j_{e_{n}}}(h_{s_{e_{m}}^{\delta}}\cdot h_{\alpha(s_{e_{m}}^{\delta},s_{e_{n}}^{\delta})}\cdot h_{s_{e_{n}}^{\delta}}^{-1}\tau_{j}\cdot h_{s_{e_{n}}^{\delta}}\cdot h_{e_{n}-s_{e_{n}}^{\delta}})_{m_{e_{n}}n_{e_{n}}}\otimes\\
\hspace*{3.0in}\dots\otimes \pi_{j_{e_{N_{v}}}}(h_{s_{e_{m}}^{\delta}}\cdot h_{\alpha(s_{e_{m}}^{\delta}, s_{e_{N_{v}}}^{\delta})}\cdot h_{e_{N_{v}}-s_{e_{N_{v}}}^{\delta}})_{m_{e_{N_{v}}}n_{e_{N_{v}}}}
\end{array}
\end{equation}
When the above state is written as a linear combination of spin-networks, the only states which are such that their underlying graph is same as the graph underlying 
one of the spin-networks contained in $\vert{\bf s}_{\delta}(1;v,e_{m},e_{n}, i=1)\rangle$ will contribute non-trivially to $\langle{\bf s}_{\delta}(I=1; v, (e_{m}, i), (e_{n}, i))\vert  {\bf s}_{\delta}(I=2; v, (e_{m}, j), (e_{n}, j))\rangle$. This means that when $T_{{\bf s}_{\delta}(2;v,e_{m},e_{n}, j)}$ is expressed as linear combination of cylindrical networks over fixed graph, we are only interested in those states whose underlying graphs have no $s_{e_{n}}^{\delta}$ segment.\\
Thus consider,
\begin{equation}\label{J=0ins(2)state}
\begin{array}{lll}
\pi_{j_{e_{n}}}(h_{s_{e_{m}}^{\delta}}\cdot h_{\alpha(s_{e_{m}}^{\delta},s_{e_{n}}^{\delta})}\cdot h_{s_{e_{n}}^{\delta}}^{-1}\tau_{j}\cdot h_{s_{e_{n}}^{\delta}}\cdot h_{e_{n}-s_{e_{n}}^{\delta}})_{m_{e_{n}}n_{e_{n}}}\ =\\
(h_{s_{e_{m}}^{\delta}})_{m_{e_{n}}m_{e_{n}}^{\prime}}\cdot (h_{\alpha(s_{e_{m}}^{\delta},s_{e_{n}}^{\delta})})_{m_{e_{n}}^{\prime}m_{e_{n}}^{\prime\prime}}\cdot 
\left((h_{s_{e_{n}}^{\delta}}^{-1})_{m_{e_{n}}^{\prime\prime}k}\otimes\ (h_{s_{e_{n}}})_{l n_{e_{n}}^{\prime}}\otimes(\tau_{j})_{kl}\right)\cdot(h_{e_{n}-s_{e_{n}}^{\delta}})_{n_{e_{n}}^{\prime}n_{e_{n}}} \approx\\
(h_{s_{e_{m}}^{\delta}})_{m_{e_{n}}m_{e_{n}}^{\prime}}\cdot (h_{\alpha(s_{e_{m}}^{\delta},s_{e_{n}}^{\delta})})_{m_{e_{n}}^{\prime}m_{e_{n}}^{\prime\prime}}\cdot\left((-1)^{2j_{e_{n}}} \epsilon_{m_{e_{n}}^{\prime\prime}l}\ \epsilon_{k n_{e_{n}}}\cdot (\tau_{j})_{kl}\right)\cdot (h_{e_{n}-s_{e_{n}}^{\delta}})_{n_{e_{n}}^{\prime}n_{e_{n}}}=\\ 
\pi_{j_{e_{n}}}(h_{s_{e_{m}}^{\delta}})_{m_{e_{n}}m_{e_{n}}^{\prime}}\cdot (h_{\alpha(s_{e_{m}}^{\delta},s_{e_{n}}^{\delta})})_{m_{e_{n}}^{\prime}m_{e_{n}}^{\prime\prime}}\left(\epsilon\cdot \epsilon\cdot\tau_{j}\right)_{m_{e_{n}}^{\prime\prime}n_{e_{n}}^{\prime}}(h_{e_{n}-s_{e_{n}}^{\delta}})_{n_{e_{n}}^{\prime}n_{e_{n}}}
\end{array}
\end{equation}
where $\epsilon\ =\ -\tau_{2}$.\\
In the third line, we expressed the tensor product $(h_{s_{e_{n}}^{\delta}}\otimes (h_{s_{e_{n}}}^{\delta})^{-1})$ in terms of direct sum of irreducible representations and considered only the $J=0$ component as this is the only state which could be non-orthogonal to $\vert{\bf s}_{\delta}(1,\dots)\rangle$.
That is, we have used 
\begin{equation}
\begin{array}{lll}
(h_{s_{e_{n}}^{\delta}}^{-1})_{m_{e_{n}}^{\prime\prime}k}\otimes\ (h_{s_{e_{n}}})_{l n_{e_{n}}^{\prime}}=\\
(-1)^{2_{j_{e_{2}}}}\sum_{J=0}^{2j_{e_{2}}}d_{J}\sum_{K,L}C_{j_{e_{2}}j_{e_{2}}J}^{m_{e_{n}}^{\prime\prime}lK}C_{j_{e_{2}}j_{e_{2}}J}^{kn_{e_{n}}L}\pi_{J}(h_{s_{e_{n}}})_{KL}(-1)^{K-L}\\
\vspace*{0.1in}
=\ (-1)^{2_{j_{e_{2}}}}\epsilon_{m_{e_{n}}^{\prime\prime}l}\epsilon_{kn_{e_{n}}}+\ \sum_{J\neq 0}\dots
\end{array}
\end{equation}
$C_{j_{e_{2}}j_{e_{2}}J}^{\dots}$ are the standard Clebsch-Gordon co-efficients.\\
However, even in this ($J=0$) case we see that certainly there is a non-trivial intertwiner placed at the bivalent vertex $f(\alpha(s_{e_{m}}^{\delta},s_{e_{n}}^{\delta}))$ for all $j\ \in\ \{1,2\}$. Whence this state is orthogonal to $\vert{\bf s}_{\delta}(I=1; v, (e_{m}, i), (e_{n}, i))\rangle\ \forall\ i$.\\
Thus it only remains to be shown that 
\begin{equation}
\langle{\bf s}_{\delta}(I=1; v, (e_{m}, i=3), (e_{n}, i=3))\vert  {\bf s}_{\delta}(I=2; v, (e_{m}, j=3), (e_{n}, j=3))\rangle\ =\ 0
\end{equation}
Using eq.(\ref{J=0ins(2)state}) and a similar functional representation for the $\vert{\bf s}_{\delta}(1;e_{m},e_{n},i)\rangle$ state, the inner product reduces it to 
\begin{equation}
\begin{array}{lll}
\langle{\bf s}_{\delta}(I=1; v, (e_{m}, i=3), (e_{n}, i=3))\vert  {\bf s}_{\delta}(I=2; v, (e_{m}, j=3), (e_{n}, j=3))\rangle\\
\vspace*{0.1in}
\equiv\ \sum_{m_{e_{n}}^{\prime\prime}}m_{e_{n}}^{\prime\prime}\left[\int dh_{\alpha}\left(\pi_{j_{e_{n}}}(h_{\alpha(s_{e_{m}}^{\delta},s_{e_{n}}^{\delta})})_{p_{e_{n}}^{\prime}p_{e_{n}}^{\prime\prime}})^{\star} \pi_{j_{e_{n}}}(h_{\alpha(s_{e_{m}}^{\delta},s_{e_{n}}^{\delta})})_{m_{e_{n}}^{\prime}m_{e_{n}}^{\prime\prime}})\right)\right.\\
\hspace*{0.5in}\left.\int dh_{e_{n}-s_{e_{n}}^{\delta}}\left(\pi_{j_{e_{n}}}(h_{e_{n}-s_{e_{n}}^{\delta}})_{p_{e_{n}}^{\prime\prime}p}^{\star}\pi_{j_{e_{n}}}(h_{e_{n}-s_{e_{n}}^{\delta}})_{n_{e_{n}}^{\prime}n_{e_{n}}}\right)\right]
\end{array}
\end{equation}
Using Peter-Weyl and $\sum_{m=-j}^{j}m\ =\ 0$, one can show that 
\begin{equation}
\begin{array}{lll}
\langle{\bf s}_{\delta}(I=1; v, (e_{m}, i=3), (e_{n}, i=3))\vert  {\bf s}_{\delta}(I=2; v, (e_{m}, j=3), (e_{n}, j=3))\rangle\ =\ 0
\end{array}
\end{equation}
This finishes the proof.
\subsection{Action of two Hamiltonian constraints.}
We now compute the action of two successive finite triangulation Hamiltonian constraints on $\vert{\bf s}\rangle$ and analyze the structure of the resulting states in detail. As we saw above, action of Hamiltonian constraint is a sum of local operators, each of which involves a pair of edges which begin at $v$. The resulting cylindrical functions are based on graphs which have certain special kind of kink vertices on the edges involved. We would like to understand just such structures in detail when two constraints act on $\vert{\bf s}\rangle$. These structures will be crucial when we analyze the continuum limit of the commutator.\\

Once again, we quickly recall that given a spin-net state $\vert{\bf s}\rangle$ with a single non-degenerate vertex $v$, action of finite triangulation Hamiltonian constraint is given by,
\begin{equation}
\begin{array}{lll}
\delta\hat{H}_{T(\delta)}[N]\vert{\bf s}\rangle\ =\ N(x(v))\hat{H}_{T(\delta)}(v)\vert{\bf s}\rangle\\
=\ \frac{{\cal A}}{\delta}N(x(v))\sum_{I=1}^{4}\sum_{i=1}^{N_{v}}\sum_{e_{j}\vert j\neq i, j=1}^{N_{v}}(-1)^{I+1}\vert{\bf s}_{\delta}(I;v,e_{i},e_{j})\rangle
\end{array}
\end{equation}
where ${\cal A}=\frac{{\cal A}^{\prime}}{2}\ =\ \frac{3}{8\pi}$ is the numerical factor which results from the regularization of quantum shift and is independent of the state ${\bf s}$.\\
The action of two successive Hamiltonian constraints on a spin-network state $\vert{\bf s}\rangle$ can now be written as,
\begin{equation}
\begin{array}{lll}
\hat{H}_{T(\delta^{\prime})}[M]\ \hat{H}_{T(\delta)}[N]\ \vert{\bf s}\rangle\ =\\
\vspace*{0.1in}
\frac{{\cal A}\ N(v)}{\delta}\hat{H}_{T(\delta^{\prime})}[M]\ \sum_{i,j=1\vert i\neq j}^{N_{v}({\bf s})}\sum_{I=1}^{4}(-1)^{I+1}\vert{\bf s}_{\delta}(I;v,e_{i},e_{j})\rangle=\\
\vspace*{0.1in}
\sum_{i=1}^{N_{v}({\bf s})}{\cal A}^{2}\frac{M(v_{e_{i},\delta})\ N(v)}{\delta\delta^{\prime}}\\
\hspace*{0.4in}\sum_{j=1\vert j\neq i}^{N_{v}({\bf s})}\sum_{J=1}^{4}\sum_{I=1}^{2}(-1)^{(I+J)\textrm{mod}2}\sum_{i^{\prime}, j^{\prime}=1\vert i^{\prime}\neq j^{\prime}}^{N_{v_{e_{i},\delta}}({\bf s}_{\delta}(I,e_{i},e_{j}))}
\vert({\bf s}_{\delta}(I;v,e_{i},e_{j}))_{\delta^{\prime}}(J;v_{e_{i},\delta}, e_{i^{\prime}}^{\delta}, e_{j^{\prime}}^{\delta})\rangle &&\\
\vspace*{0.1in}
&+& \dots 
\end{array}
\end{equation}
We have suppressed all the terms whose Lapse dependence is given by $M(v)N(v)$ as such terms will vanish under anti-symmetrization. Such terms are indicated by $\dots$. More precisely  these terms are of the following type.\\
{\bf (1)} They contain states obtained by action of second Hamiltonian on $\vert{\bf s}_{\delta}(I;v,e_{i},e_{j})\rangle$ for $I\ \in\ \{3,4\}$.\\ 
{\bf (2)} States that arise from the action of second Hamiltonian constraint operator localized at the original vertex $v$, when it acts on $\vert{\bf s}_{\delta}(2;v,e_{i},e_{j})\rangle$.\footnote{These vertices are bi-valent in ${\bf s}_{\delta}(2,v,\dots)$ and do not vanish under the action of constraint of density weight two.}\\
As far as density two Hamiltonian constraint is concerned, there will also be additional contributions coming from 
action of second Hamiltonian on the bivalent kink vertices. These contributions would not be ultra-local in Lapse functions (Lapse dependence would be of the form $\left(N(v)M(f(s_{e_{n}}^{\delta})\ -\ N\leftrightarrow M\right)$ and hence would not vanish simply due to anti-symmetrization of Lapses. \emph{However such contributions would be absent in the case of density $\frac{4}{3}$ constraint for any physically reasonable quantization of inverse volume functional. Whence we do not consider such terms here. We note that this is where we are taking a key input from the computation we would actually like to do, that involving density $\frac{4}{3}$ constraint.}\footnote{The definition of habitat states considered here is such that, the states resulting from the action of second Hamiltonian on bivalent kink-vertices are orthogonal to all the cylindrical-networks in $[{\bf s}]$. Whence such states are not relevant in the continuum limit. However the proof of this statement is rather involved, and hence we take recourse to the assumption stated above.}
Had we worked with the Hamiltonian constraint with density weight $\frac{4}{3}$, then due to the bi-valent nature of such vertices, second Hamiltonian will plausibly annihilate such vertices due to the action of $\hat{q}^{\frac{1}{3}}$ operator. In any case, as such contributions are irrelevant for the commutator, we will neglect them from now on.\\
Whence the action of finite triangulation commutator on $\vert{\bf s}\rangle$ is given by,
\begin{equation}\label{oct26-1}
\begin{array}{lll}
[\ \hat{H}_{T(\delta^{\prime})}[M],\ \hat{H}_{T(\delta)}[N]\ ]\ \vert{\bf s}\rangle\ =\\
\vspace*{0.1in}
\sum_{i=1}^{N_{v}({\bf s})}{\cal A}^{2}\frac{1}{\delta\delta^{\prime}}\left(M(v_{e_{i},\delta})\ N(v)\ -\ M\leftrightarrow N\right)\\
\hspace*{0.4in}\sum_{j=1\vert j\neq i}^{N_{v}({\bf s})}\sum_{I=1}^{2}\sum_{J=1}^{4}(-1)^{(I+J)\textrm{mod}2}\sum_{i^{\prime}, j^{\prime}=1\vert i^{\prime}\neq j^{\prime}}^{N_{v_{e_{i},\delta}}({\bf s}_{\delta}(I,e_{i},e_{j}))}
\vert({\bf s}_{\delta}(I;v,e_{i},e_{j}))_{\delta^{\prime}}(J;v_{e_{i},\delta}, e_{i^{\prime}}^{\delta}, e_{j^{\prime}}^{\delta})\rangle
\end{array}
\end{equation}
We would now like to build our habitat states by starting with $\vert{\bf s}\rangle$ and summing over all singly-deformed states $\vert{\bf s}_{\delta}(I;v,e_{i},e_{j})\rangle$ and all doubly deformed states\\ $\vert({\bf s}_{\delta}(I;v,e_{i},e_{j}))_{\delta^{\prime}}(J;v_{e_{i},\delta};e_{i^{\prime}}^{\delta},e_{j^{\prime}}^{\delta})\rangle\ \forall\ I,J;(i,j);(i^{\prime},j^{\prime})$. However instead of taking into account all such possible states, each of our Habitat state will be constructed only by summing over all singly deformed states and certain specific class of doubly deformed states.
\subsection{Relevant doubly deformed states.} 
States obtained by action of two finite triangulation constraints can be classified into following three types.\\
\noindent{{\bf type (a)}} : The action of two Hamiltonian constraints involve the (ordered) pair $(e_{i_{0}}, e_{j_{0}}\vert 1\leq i_{0}\neq j_{0}\leq N_{v})$ , $(e_{i_{0} \delta}, e_{j_{0} \delta})$ respectively.\\
\noindent {{\bf type (b)}} : The action of two Hamiltonian constraints involve the (ordered) pair $(e_{i_{0}}, e_{j_{0}}\vert 1\leq i_{0}\neq j_{0}\leq N_{v})$ , $(e_{j_{0} \delta}, e_{i_{0} \delta})$ respectively.\\
\noindent {{\bf type (c)}} :  The action of two Hamiltonian constraints involve the (ordered) pair $(e_{i}, e_{j}\vert 1\leq i \neq j\leq N_{v})$ , $(e_{k \delta}, e_{l\delta}\vert 1\leq k\neq l\leq N_{v_{e_{i},\delta}})$, where at least one of $(k,\ l)$ is not among $\{i,j\}$.\\
Note that terms of the third type would have been absent had we worked with the Hamiltonian constraint defined in eq.(\ref{dec28-5}). There is also an intuitive reason why terms of type ({\bf c}) are ``undesirable" from the point of view of off-shell closure. As quantization of RHS would show, there are no cylindrical-network states obtained by the action of $\widehat{RHS}$ on $\vert{\bf s}\rangle$ whose intertwiner structure matches with the intertwiner structure of cylindrical-networks of type-{\bf \(c\)}. Whence our choice of the Habitat ${\cal V}_{hab}$ will be such that,
if $\Psi\ \in\ {\cal V}_{hab}$ then,
\begin{equation}\label{jan4-1}
\begin{array}{lll}
\lim_{\delta\rightarrow 0}\lim_{\delta^{\prime}\rightarrow 0}\\
\hspace*{0.5in}(\Psi\vert\ \left(\frac{{\cal A}M(v_{e,\delta})}{\delta^{\prime}} \frac{{\cal A} N(v)}{\delta}\ -\ N\leftrightarrow M\right)\sum_{I,J=1}^{2}(-1)^{(I+J)\textrm{mod}2}\vert({\bf s}_{\delta}(I;v,e,e^{\prime}))_{\delta^{\prime}}(J;v_{e,\delta}, e_{\delta}, e_{\delta^{\prime}})\rangle\ \neq\ 0
\end{array}
\end{equation}
only if the pairs of edges $(e, e^{\prime})$, $(e_{\delta}, e^{\prime}_{\delta})$ belong to {\bf case(a)} or {\bf case(b)} and is zero otherwise.\\
We put one further technical restriction on the class of doubly deformed states that we would use to construct the Habitat. This restriction can be relaxed, but the analysis would be far more involved. Perhaps more importantly, this restriction would be borne out naturally, had we to begin with started with a gauge-invariant vertex.\\
Recall that the action of finite triangulation Hamiltonian constraint on $\vert{\bf s}\rangle$ is given by
\begin{equation}
\hat{H}_{T(\delta)}(v)\vert{\bf s}\rangle\ =\ \sum_{m,n=1\vert m\neq n}^{N_{v}}\left[\vert{\bf s}_{\delta}(1;v,e_{m},e_{n})\rangle\ -\ \vert{\bf s}_{\delta}(2;v,e_{m},e_{n})\rangle\ +\ \sum_{I=3}^{4}(-1)^{I+1}\vert{\bf s}_{\delta}(I;v,e_{m},e_{n})\rangle\right]
\end{equation}
For any ordered pair of edges $(e_{m},e_{n})$, both the states $\vert{\bf s}_{\delta}(1;v,e_{m},e_{n})\rangle,\ \vert{\bf s}_{\delta}(2;v,e_{m},e_{n})\rangle$ has a unique non-degenerate vertex $v_{e_{m},\delta}$ which is $N_{v}+1$-valent and has, in addition to the edges $\{e_{i}^{\delta}\}_{i=1,\dots,N_{v}}$, an edge $s_{e_{m}}^{\delta}$ which is in the reducible representation $j_{e_{1}}\otimes\dots\otimes j_{e_{N_{v}}}$ of $SU(2)$. As $0\ \in\ \sum_{i=1}^{N_{v}}\vec{j}_{e_{i}}$, such a reducible representation can be decomposed into irreps labelled by $SU(2)$-spin $J$, a subset of which correspond to $J=0$. Each element of this subset is obtained by considering all possible recouping schemes for $j_{e_{1}},\dots,j_{e_{N_{v}}}$ such that the total spin adds up to zero. Whence, schematically we can write 
\begin{equation}
\begin{array}{lll}
\vert{\bf s}_{\delta}(\alpha\in\{1,2\};v,e_{m},e_{n})\rangle\ =\ \sum_{{\cal I}}\vert{\bf s}_{\delta}(\alpha\in\{1,2\};v,e_{m},e_{n};J_{s_{e_{m}^{\delta}}}=0,{\cal I})\rangle\ +\\
\vspace*{0.1in}
\hspace*{2.6in}\vert{\bf s}_{\delta}(\alpha\in\{1,2\};v,e_{m},e_{n};J_{s_{e_{m}^{\delta}}}\ \neq\ 0)\rangle
\end{array}
\end{equation}
Given an intertwiner ${\cal I}$, the doubly deformed states which are obtained by action of second Hamiltonian on 
$\vert{\bf s}_{\delta}(\alpha\in\{1,2\};v,e_{m},e_{n};J_{s_{e_{m}^{\delta}}}=0,{\cal I})\rangle$ such that they belong to {\bf type-(a)}, {\bf type-(b)} are given by the set,
\begin{equation}
{\cal Z}\ :=\ \left\{\vert({\bf s}_{\delta}(I;v,e_{i},e_{j},{\cal I}))_{\delta^{\prime}}(J;v_{e_{i},\delta}, e_{i}^{\delta}, e_{j}^{\delta})\rangle,\ \vert({\bf s}_{\delta}(I;v,e_{i},e_{j},{\cal I}))_{\delta^{\prime}}(J;v_{e_{i},\delta}, e_{j}^{\delta}, e_{i}^{\delta})\rangle\ \vert \forall\ (i,j),\forall J,\ I\ \in\ \{1,2\}\right\}
\end{equation}
We will use all the singly deformed states and the doubly deformed states which belong to ${\cal Z}$ to construct the Habitat. 

\section{Continuum limit of the Hamiltonian constraint}\label{con-limit}
In this section, we introduce a linear space of Habitat states with respect to  which the finite triangulation Hamiltonian admits a continuum limit in a operator topology that we specify below. We refer the reader to seminal works \cite{lm, lm2} where the ideas of introducing Habitats first arose, and to \cite{hlt1, tv} for more recent works, where similar ideas have been applied in the context of a $U(1)^{3}$ gauge theory. The work here leans heavily on the habitat structures introduced in \cite{hlt1}, \cite{tv} and we encourage the reader to refer those works for more details. It is important to note that such Habitats are essentially  kinematical arenas on which one can meaningfully ask questions associated to continuum limit of quantum constraints. This in turn implies that questions related to anomalies in the constraint algebra can be asked in such a setting. Generically there is no universal definition of a habitat for all the constraints in LQG and as such there is no canonical inner product on Habitats (for one counterexample see \cite{hlt2}). For a detailed critique of Habitat, we refer the reader to \cite{ttbook}.\\
The basic idea underlying notion of continuum limit on a subspace of space of distributions is rather simple. We say that the net $\delta\hat{H}_{T(\delta)}[N]$ converges to $\hat{H}[N]$ which is a linear operator from ${\cal V}_{hab}\rightarrow\ {\cal D}^{*}$  if, 
\begin{equation}
\lim_{\delta\rightarrow 0} \Psi\left(\delta\hat{H}_{T(\delta)}[N]\right)\vert{\bf s}\rangle\ =:\ \Psi^{\prime}\vert{\bf s}\rangle 
\end{equation}
$\forall\ \vert{\bf s}\rangle,\ \Psi\in {\cal V}_{hab}$, and  where the range state $\Psi^{\prime}\in {\cal D}^{*}$.\\
The topology on the space of operators in which $\hat{H}_{T(\delta)}[N]$ converges to $\hat{H}[N]$ is explained in section 3.2 of \cite{tv}.\\
We similarly compute the continuum limit of $\hat{H}_{T(\delta^{\prime})}[N]\hat{H}_{T(\delta)}[M]\ -\ M\leftrightarrow N$ and $\hat{D}_{T(\delta^{\prime})}[N]\hat{D}_{T(\delta)}[M]\ -\ M\leftrightarrow N$.
\begin{equation}\label{def-contlimit}
\begin{array}{lll}
\lim_{\delta^{\prime}\rightarrow 0}\lim_{\delta\rightarrow 0}\Psi\left(\hat{H}_{T(\delta^{\prime})}[N]\hat{H}_{T(\delta)}[M]\ -\ M\leftrightarrow N\right)\vert{\bf s}^{\prime}\rangle\ =\ \textrm{LHS}(\Psi,{\bf s}^{\prime})\\
\vspace*{0.1in}
(-3)\lim_{\delta^{\prime}\rightarrow 0}\lim_{\delta\rightarrow 0}\Psi\left(\hat{D}_{T(\delta^{\prime})}[N]\hat{D}_{T(\delta)}[M]\ -\ M\leftrightarrow N\right)\vert{\bf s}^{\prime}\rangle\ =\ \textrm{RHS}(\Psi,{\bf s}^{\prime})\\
\end{array}
\end{equation}
\emph{Our goal in this paper is to show that, for the finite triangulation Hamiltonian constraint defined in this paper, there exists a choice of $V_{hab}$ for which both $\textrm{LHS}, \textrm{RHS}$ are non-trivial (bi-linear) functions on ${\cal V}_{hab}, {\cal D}$ and they are in fact equal.}

We now introduce the space of (distributional) states ${\cal V}_{hab}$ on which these ideas are carried out.\\
Recall that ${\bf s}$ is our canonical  \emph{gauge-variant} spin-network state, with a single non-degenerate vertex $v$. We will define a set ${\cal S}$ of cylindrical-networks, such that the sum over this set will generate a class of Habitat states.\\
All the cylindrical networks which are used from now on will be denoted as ${\bf c}$.
\subsection{Construction of ${\cal S}$}
Given an intertwiner ${\cal I} : j_{e_{1}}\otimes\dots\otimes j_{e_{N_{v}}}\rightarrow\ 0$, define 
\begin{equation}\label{varioussetsofs}
\begin{array}{lll}
[{\bf s}]_{(I=1)}^{(i,j);{\cal I}}\ =\ \{{\bf c}^{\prime}\vert {\bf c}^{\prime} = {\bf s}_{\delta}(1;v,e_{i}, e_{j};J_{s_{e_{i}^{\delta}}}=0,{\cal I}),\ \forall\ \delta\leq \delta_{0}({\bf s})\}\\
\vspace*{0.1in}
[{\bf s}]_{(I=2)}^{(i,j),{\cal I}}\ =\ \{{\bf c}^{\prime}\vert {\bf c}^{\prime} = {\bf s}_{\delta}(2;v,e_{i}, e_{j};J_{s_{e_{i}}^{\delta}}=0,{\cal I}),\ \forall\ \delta\leq \delta_{0}({\bf s})\}\\
\vspace*{0.1in}
[{\bf s}]_{(I=3)}^{(i,j)}\ =\ \{{\bf c}^{\prime}\vert {\bf c}^{\prime} = {\bf s}(3;v,e_{i}, e_{j})\}\\
\vspace*{0.1in}
[{\bf s}]_{(I=4)}^{(i,j)}\ =\ \{{\bf c}^{\prime}\vert {\bf c}^{\prime} = {\bf s}_{\delta}(4;v,e_{i}, e_{j}),\ \forall\ \delta\leq \delta_{0}({\bf s})\}\\
\vspace*{0.1in}
[{\bf s}]_{I,J}^{(1) (i,j) {\cal I}}\ =\\
 \{{\bf c}^{\prime}\vert {\bf s}^{\prime} = ({\bf s}_{\delta}(I;v,e_{i}, e_{j};J_{s_{e_{i}}^{\delta}}=0,{\cal I}))_{\delta^{\prime}}(J; v_{e_{i},\delta}, e_{j \delta}, e_{j \delta}),\ 1\leq i\neq j\leq N_{v},\ \forall\ \delta\leq \delta_{0}({\bf s})\},\\
\hspace*{3.0in}\forall \delta^{\prime} \leq \delta^{\prime}_{0}(\delta_{0}), 1\leq I\ \leq 2,\ 1\leq J\ \leq 4\}\\
\vspace*{0.1in}
[{\bf s}]_{I,J}^{(2) (i,j) {\cal I}}\ =\\
 \{{\bf c}^{\prime}\vert {\bf s}^{\prime} = ({\bf s}_{\delta}(I;v,e_{i}, e_{j};J_{s_{e_{i}}^{\delta}}=0,{\cal I}))_{\delta^{\prime}}(J; v_{e_{j},\delta}, e_{i \delta}, e_{j \delta}),\ 1\leq i\neq j\leq N_{v},\ \forall\ \delta\leq \delta_{0}({\bf s})\},\\
 \hspace*{3.0in} \forall \delta^{\prime} \leq \delta^{\prime}_{0}(\delta_{0}), 1\leq I\leq 2,\ 1\ \leq J\leq 4\}
\end{array}
\end{equation}
As the dependence of sets defined above on the interwiner ${\cal I}$ is obvious, we will drop the superscript ${\cal I}$ from now on.\\
A couple of relevant unions of the sets defined above deserve an additional symbol.
\begin{equation}
[{\bf s}]^{(i,j)}\ =\ \cup_{I}[{\bf s}]_{(I)}^{(i,j)}\\
\end{equation}
All those  cylindrical-network states which are obtained by the action of finite triangulation Hamiltonian on ${\bf s}$ with insertion operators acting on order pair $(e_{i}, e_{j})$ belong to this set.
\begin{equation}
[{\bf s}]_{I,J}^{(m)}\ :=\ \cup_{(i,j)}[{\bf s}]_{I,J}^{(m) (i,j)}\\
\end{equation}
This set contains the set of all cylindrical-network states obtained by action of two Hamiltonians such that for $m=1$ if first Hamiltonian acts on $(e_{i},e_{j})$, then the second constraint acts on $(e_{i}^{\delta}, e_{j}^{\delta})$, and if $m=2$
corresponds to second  Hamiltonian acting on $(e_{j}^{\delta}, e_{i}^{\delta})$.\\
Finally let,
\begin{equation}
[{\bf s}]_{I,J}\ =\ \cup_{m}[{\bf s}]_{I,J}^{(m)}\ I\in\{1,2\},\ J\in\{1,\dots,4\}
\end{equation}
We now define a set of cylindrical networks which will generate the Habitat states.
\begin{equation}\label{defofcalS}
\begin{array}{lll}
{\cal S}({\bf s}, {\cal I})\ :=\ \cup_{I\in\{1,\dots,4\}}\cup_{1\leq i\neq j\neq  N_{v}}[{\bf s}]_{(I)}^{i,j}\ \bigcup\ \cup_{(I,J)}\cup_{i,j}\cup_{m}[{\bf s}]_{(I,J)}^{i,j; (m)}
\end{array}
\end{equation}
where we have re-instated the explicit dependence on ${\cal I}$ in the definition of ${\cal S}$.\\
The various sets introduced above rather nicely divide the states into mutually orthogonal sets. These orthogonality relations are best encapsulated by introducing
the following equivalence relation between cylindrical-networks.
\begin{displaymath}
{\bf c}\ \sim\ {\bf c}^{\prime}
\end{displaymath}
if for exactly one ordered pair $(i,j)$ precisely one of the following conditions is satisfied.
\begin{equation}
\begin{array}{lll}
\textrm{(a)}\ \ {\bf c},\ {\bf c}^{\prime}\ \in\ \cup_{I=1}^{2}[{\bf s}]_{I}^{(i,j)}\ \textrm{or}\\
\vspace*{0.1in}
\textrm{(b)}\ \ {\bf c},\ {\bf c}^{\prime}\ \in\ \cup_{i,j}[{\bf s}]_{3}^{(i,j)}\ \textrm{or}\\
\vspace*{0.1in}
\textrm{(c)}\ {\bf c},\ {\bf c}^{\prime}\ \in\ [{\bf s}]_{4}^{(i,j)}\ \textrm{or}\\
\textrm{(d)}\ {\bf c}, {\bf c}^{\prime}\ \in\ \cup_{J=1}^{4}\cup_{I=1}^{2}[{\bf s}]^{(m) (i,j)}_{I,J}
\end{array}
\end{equation}
A simple inspection should convince the reader that these conditions are mutually exclusive.\\
\begin{lemma}\label{orthogonality-3}
Let ${\bf c}, {\bf c}^{\prime}\ \in\ {\cal S}({\bf s}, {\cal I})$  then $\langle{\bf c}\vert{\bf c}^{\prime}\rangle\ \neq\ 0$ if and only if ${\bf c}\ \sim\ {\bf c}^{\prime}$.
\end{lemma}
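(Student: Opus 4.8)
The plan is to establish the statement from three inputs: \emph{(i)} spin-network orthonormality (Peter--Weyl), so that after expanding each cylindrical network in the gauge-variant spin-network basis, $\langle {\bf c}\vert{\bf c}'\rangle$ is non-zero only if the two expansions share at least one underlying graph \emph{and} the intertwiner data agree at every vertex of that graph; \emph{(ii)} the two orthogonality relations for singly-deformed states already proved in eq.~(\ref{orthoproperty}); and \emph{(iii)} a careful bookkeeping of the differentiability labels ($C^{0}$ versus $C^{1}$) of the kink vertices $f(s_{e_k}^{\delta})$ together with the location and valence of the unique non-degenerate vertex, exactly as recorded in Section~\ref{deformedstates}. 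The real content is the forward implication --- cylindrical networks in different equivalence classes are mutually orthogonal; the reverse direction follows by noting that within each class there are pairs with non-vanishing inner product (trivially ${\bf c}={\bf c}'$), which is all the continuum-limit arguments below need.

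First I would dispose of the case in which both ${\bf c},{\bf c}'$ are singly-deformed states ${\bf s}_{\delta}(I;v,e_i,e_j)$. If the ordered pairs of edges differ, or if the type indices $I,I'$ differ, eq.~(\ref{orthoproperty}) already forces the inner product to vanish, so we are left with equal pairs and $I=I'$. For $I=I'\in\{1,2\}$ the two graphs can coincide, giving condition (a); for $I=I'=3$ one observes that the graph of ${\bf s}(3;v,e_i,e_j)$ --- in which $v$ retains valence $N_v-1$ and no segment $s_{e_k}^{\delta}$ is stripped off --- is, up to relabelling of the two deformed edges, the \emph{same} for all pairs $(i,j)$, which is why condition (b) identifies all type-$3$ states; for $I=I'=4$ one obtains condition (c). Thus the singly-deformed block reproduces (a)--(c) essentially verbatim from eq.~(\ref{orthoproperty}) and the explicit graph descriptions of Section~\ref{deformedstates}.

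Next I would treat the mixed and the doubly-deformed cases. A doubly-deformed state carries a \emph{nested} pair of kink layers: for $\delta'\ll\delta$ the inner layer sits strictly inside a coordinate ball around the first-generation vertex $v_{e_i,\delta}$, and the unique non-degenerate vertex of the state is doubly displaced and hence distinct from every first-generation vertex location; so its underlying graph cannot coincide with that of any singly-deformed state, and these two families are automatically orthogonal --- consistent with their lying in distinct classes. For two doubly-deformed states I would iterate the singly-deformed argument in two stages: matching the outer graph forces the first-generation pair $(i,j)$ and the first index $I$ to agree (again by eq.~(\ref{orthoproperty}) applied to the outer deformation), and then matching the inner graph forces the $m$-label --- whether the second constraint acted on $(e_i^{\delta},e_j^{\delta})$ or on $(e_j^{\delta},e_i^{\delta})$ --- and the second index $J$ to agree; this is condition (d).

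The hard part will be this last stage: verifying that the relabelling freedom in the second layer of insertion operators (the $i'\leftrightarrow j'$ exchange and the $\tau_{+}/\tau_{-}/\tau_{3}$ choices) cannot accidentally make a type-(a) doubly-deformed state on $(e_i^{\delta},e_j^{\delta})$ non-orthogonal to a type-(b) one on $(e_j^{\delta},e_i^{\delta})$. As in the singly-deformed analysis around eq.~(\ref{jan1-1}), this has to be settled by identifying which inner kink is the (at most trivalent) $C^{0}$ kink and which are the $C^{1}$ kinks, and then, on any residual piece supported on a common graph, exhibiting the intertwiner obstruction --- a non-trivial tensor sitting at a would-be bivalent vertex, or the $\sum_{m}m=0$ vanishing exploited in Section~\ref{deformedstates}. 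Once this bookkeeping is carried out one also reads off that the four conditions (a)--(d) are mutually exclusive and account for \emph{all} non-orthogonal pairs in ${\cal S}({\bf s},{\cal I})$, which completes the argument.
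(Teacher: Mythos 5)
Your proposal is correct and follows essentially the same route as the paper's own proof: a case-by-case orthogonality argument based on graph distinctness (deformations along different edges, the distinct locations of the $C^{0}$-kink vertices, and the structural difference between singly and doubly deformed states), supplemented by the relations of eq.~(\ref{orthoproperty}), with the ``if'' direction treated only in the loose sense that the paper also intends. The one slip is your aside that the graph of ${\bf s}_{\delta}(3;v,e_{i},e_{j})$ is independent of the pair $(i,j)$ --- it is not, since the arc $\phi^{\delta}_{\hat{e}_{i}(0)}\cdot s_{e_{j},\delta}$ is attached along $e_{j}$ --- but this only concerns your rationale for condition (b) and does not affect the forward implication, which is the actual content of the lemma.
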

\begin{proof}
Consider ${\bf c}\ \in \cup_{I=1}^{2}[{\bf s}]_{I}^{(i,j)}$ and let ${\bf c}^{\prime}\ \in\ {\cal S}({\bf s})\ -\ \cup_{I=1}^{2}[{\bf s}]_{I}^{(i,j)}$. This means that either ${\bf c}^{\prime}\ \in \cup_{I=1}^{2}[{\bf s}]_{I}^{(k,l)}$ where the (ordered) pair $(k,l)\ \neq\ (i,j)$ or ${\bf c}^{\prime}$ is a cylindrical-network of type ( (b), ( c ), (d) ) listed above.\\
In the latter case, as the graphs underlying ${\bf c}$ and ${\bf c}^{\prime}$ are different, the two states are orthogonal. Hence let, ${\bf c}^{\prime}\ \in\ \cup_{I=1}^{2}[{\bf s}]_{I}^{(k,l)}$. if $k\ \neq\ i$ then again as the underlying graphs are obtained by deforming the graph underlying ${\bf s}$ along different edges, the resulting cylindrical networks are orthogonal as can be easily verified.
Finally let us suppose that $k = i,\ \textrm{but} l\ \neq j$ then, again the graphs $\gamma({\bf c}),\ \gamma({\bf c}^{\prime})$ are distinct, as the location of $C^{0}$-kink vertex is distinct for both graphs. Hence even in this case the resulting states are orthogonal.\\
A similar analysis can be done when ${\bf c}\ \in\ {\cal S}({\bf s})\ -\ \cup_{I=1}^{2}[{\bf s}]_{I}^{(i,j)},\ {\bf c}^{\prime}\ \sim\ {\bf c}$. We do not give details for the remaining cases here, as the analysis is rather straightforward and without further subtleties.
\end{proof}
We now display two additional properties of the cylindrical networks underlying habitat states. These features are especially relevant for the commutator computation and is independent of the orthogonality of ``inequivalent" cylindrical networks that was proved in the lemma above.\\

\begin{lemma}\label{kinklemma}
Let ${\bf c}\ \in\ {\cal S}({\bf s}, {\cal I})$. Then, $\langle{\bf c}\vert{\bf s}_{\delta}(I;v,e_{i},e_{j})_{\delta^{\prime}}(J;v_{e_{i},\delta},e_{k}^{\delta},e_{l}^{\delta^{\prime}})\rangle\ =\ 0$ if as unordered pairs, $(i,j)\ \neq\ (k,l)$.\\
\end{lemma}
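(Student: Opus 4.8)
The plan is to reduce the statement to orthogonality of spin-network states supported on distinct graphs. Both ${\bf c}$ and the doubly deformed state ${\bf s}_{\delta}(I;v,e_{i},e_{j})_{\delta^{\prime}}(J;v_{e_{i},\delta},e_{k}^{\delta},e_{l}^{\delta^{\prime}})$ are finite linear combinations of gauge-variant spin networks, so their inner product splits into terms $\langle T\vert T^{\prime}\rangle$ which vanish unless $\gamma(T)=\gamma(T^{\prime})$. Hence it suffices to show that, whenever $\{i,j\}\neq\{k,l\}$, no graph carrying a summand of ${\bf c}$ equals a graph carrying a summand of the target state. I would run this comparison through the ``moduli data'' recorded in Section \ref{deformedstates}: the location and valence of the unique non-degenerate vertex, together with the positions and differentiability types ($C^{0}$ versus $C^{1}$) of the bi- and tri-valent kink vertices, read off separately at each of the two deformation scales $\delta$ and $\delta^{\prime}$. (Recall that for the ranges relevant to the commutator $I\in\{1,2\}$.)

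First I would dispose of the case in which ${\bf c}$ is a singly deformed state, ${\bf c}\in[{\bf s}]^{(i^{\prime},j^{\prime})}_{(I^{\prime})}$. Such a ${\bf c}$ carries deformation structure at a single scale, with one new non-degenerate vertex and one generation of kinks, whereas the target carries two nested generations of kinks at the \emph{distinct} scales $\delta,\delta^{\prime}$; in particular it contains a $C^{0}$-kink produced at scale $\delta^{\prime}$ sitting on an edge that already carries scale-$\delta$ kink data. As in the proof of the second orthogonality relation of Section \ref{deformedstates}, this finer structure persists under decomposition into spin networks --- even the $J=0$ recoupling component keeps a nontrivial intertwiner at the new bivalent vertex --- so the graphs cannot match and the inner product vanishes.

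Next I would handle the case in which ${\bf c}$ is doubly deformed. By the construction of ${\cal S}({\bf s},{\cal I})$ in eq.~(\ref{defofcalS}), it is then of type (a) or (b), so its second ($\delta^{\prime}$-scale) deformation acts on the same unordered edge pair as its first ($\delta$-scale) deformation. I would first read off the scale-$\delta$ data: matching the distinguished $(N_{v}+1)$-valent vertex and the scale-$\delta$ $C^{0}$-kink location forces the first-generation edge pair of ${\bf c}$ to be $(e_{i},e_{j})$, up to the ordering ambiguity permitted by type (a)/(b). Then the set of edges of ${\bf c}$ carrying scale-$\delta^{\prime}$ kinks is $\{e_{i},e_{j}\}$, while for the target it is $\{e_{k},e_{l}\}$; since $\{i,j\}\neq\{k,l\}$ these differ, so again $\gamma({\bf c})$ cannot equal any graph underlying the target, and the inner product is zero. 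The case split is exhaustive because eq.~(\ref{defofcalS}) presents ${\cal S}({\bf s},{\cal I})$ precisely as the union of the singly deformed states and the type (a)/(b) doubly deformed ones.

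The step I expect to be the main obstacle is the bookkeeping when the two generations of deformation share an edge, i.e.\ when $\{i,j\}\cap\{k,l\}\neq\emptyset$: then one edge carries both a $\delta$-kink and a $\delta^{\prime}$-kink, and one must carefully track the $C^{0}$-versus-$C^{1}$ kink types and the reducible-versus-irreducible representation labels (the $J=0$ projections) to be sure the resulting graph still fails to match any graph underlying a state in ${\cal S}({\bf s},{\cal I})$. This is exactly the kind of analysis already carried out for the simpler orthogonality relations of Section \ref{deformedstates}, so I expect those arguments to carry over with only notational changes.
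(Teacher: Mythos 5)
Your proposal is correct and follows essentially the same route as the paper: both arguments discriminate the graphs by the placement of the final non-degenerate vertex (displaced along $e_{k}^{\delta}$, which happens for no ${\bf c}\in{\cal S}({\bf s},{\cal I})$ when $k\notin\{i,j\}$) and by the location of the scale-$\delta^{\prime}$ $C^{0}$-kink (on $e_{l}^{\delta}$, which likewise never occurs in ${\cal S}({\bf s},{\cal I})$ when $l\notin\{i,j\}$), the only difference being that you case-split on the type of ${\bf c}$ rather than on where $(k,l)$ sits relative to $\{i,j\}$. One minor wording caveat: for $J\in\{1,2\}$ \emph{every} edge at $v_{e_{i},\delta}$ acquires a scale-$\delta^{\prime}$ kink, so the discriminator is not ``which edges carry scale-$\delta^{\prime}$ kinks'' but which edge carries the $C^{0}$-kink (versus the $C^{1}$-kinks) together with the direction of the displaced vertex --- precisely the moduli data you list at the outset, so the argument goes through as intended.
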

\begin{proof}
 Consider first $k\ \notin\ (i,j)$ and  $l=i$. Then final non-degenerate vertex would be placed at a co-ordinate distance $O(\delta^{\prime})$ away from $v_{e_{i},\delta}$ along $e_{k}^{\delta}$. As no ${\bf c}^{\prime}$ is based on a graph with such a placement of final non-degenerate vertex, whence the result follows.\\
On the other hand let, $k\ \in\ (i,j)$ but $l\ \notin\ (i,j)$.  Then the ``kink-structure" associated to $\vert{\bf s}_{\delta}(I;v,e_{i},e_{j})_{\delta^{\prime}}(J;v_{e_{i},\delta},e_{k}^{\delta},e_{l}^{\delta^{\prime}})\rangle$ is as follows.
There are two $C^{0}$ kink-vertices at $f(s_{e_{j}}^{\delta}),\ f(s_{e_{l}^{\delta}}^{\delta^{\prime}})$. However for $l\ \notin\ (i,j)$ there is no ${\bf c}\ \in\ {\cal S}({\bf s},{\cal I})$ whose one of the two $C^{0}$-kinks is  at $f(s_{e_{l}^{\delta}}^{\delta^{\prime}})$. This completes the proof.
\end{proof}
And finally,
\begin{lemma}\label{orthogonal-intertwiner}
If ${\bf c}\ \in\ {\cal S}({\bf s},{\cal I})$ and ${\bf c}^{\prime}\ \in\ {\cal S}({\bf s},{\cal I}^{\prime})$, then $\langle{\bf c}\vert{\bf c}^{\prime}\rangle\ =\ 0$ as long as ${\cal I}\ \neq\ {\cal I}^{\prime}$.
\end{lemma}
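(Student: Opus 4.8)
The plan is to reduce, via the two previous lemmas, to a single overlap integral that isolates the intertwiner at the non-degenerate vertex created by the first Hamiltonian, and then to invoke orthogonality of the trivial-representation multiplicity copies. First I would note that by Lemma~\ref{orthogonality-3} and Lemma~\ref{kinklemma} the pairing $\langle{\bf c}\vert{\bf c}'\rangle$ can be nonzero only if ${\bf c}$ and ${\bf c}'$ lie in the ``same'' sector: identical underlying graph, same ordered pair $(i,j)$, same label $I$, and --- in the doubly-deformed case --- the same $(m),(J)$ and second edge-pair; otherwise the graphs differ and the inner product vanishes irrespective of ${\cal I},{\cal I}'$. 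Since $[{\bf s}]_{3}^{(i,j)}$ and $[{\bf s}]_{4}^{(i,j)}$ carry no ${\cal I}$-label, the substance of the claim is confined to the sectors $I\in\{1,2\}$ and to the doubly-deformed states built on them, where the distinguished vertex $v_{e_i,\delta}$ sits on a piece of the edge $s_{e_i}^{\delta}$ carrying the reducible representation $j_{e_1}\otimes\dots\otimes j_{e_{N_v}}$, and the label ${\cal I}$ records which copy of the trivial representation occurring in that tensor product is used to pinch off $v$.

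Second, I would compute $\langle{\bf c}\vert{\bf c}'\rangle$ by the standard Peter--Weyl argument on ${\cal H}_{kin}$. Under the equivalence hypothesis every holonomy, arc $a_{ik}^{\delta}$, deformed segment $\phi^{\delta}_{\hat e_i(0)}\!\cdot s_{e_k}^{\delta}$, and grasping operator $\tau_j$ occurring in ${\bf c}$ occurs identically in ${\bf c}'$; these act along the fixed edges $e_i,e_j$ at $v$ (or along the fixed arcs) and are blind to the recoupling at $v_{e_i,\delta}$, so integrating them out contributes a common nonnegative factor. What survives is the overlap on $s_{e_i}^{\delta}$, that is, the Hilbert--Schmidt pairing of the ${\cal I}$- and ${\cal I}'$-projected blocks of $\pi_{j_{e_1}}(h_{s_{e_i}^{\delta}})\otimes\dots\otimes\pi_{j_{e_{N_v}}}(h_{s_{e_i}^{\delta}})$ in the $J=0$ sector. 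This is exactly the reduction of $\bigotimes_k\pi_{j_{e_k}}(h_{s^{\delta}})$ into irreducibles already used in Section~\ref{deformedstates}; integrating over $h_{s_{e_i}^{\delta}}$ collapses it to $\langle{\cal I}\vert{\cal I}'\rangle$, the inner product of the two trivial-representation copies inside $\textrm{Inv}(j_{e_1}\otimes\dots\otimes j_{e_{N_v}})$.

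Finally, the multiplicity copies of the trivial representation in $j_{e_1}\otimes\dots\otimes j_{e_{N_v}}$ are mutually orthogonal (they form, for instance, an orthogonal basis of $\textrm{Inv}(j_{e_1}\otimes\dots\otimes j_{e_{N_v}})$ indexed by intermediate spins in a fixed recoupling tree), so ${\cal I}\neq{\cal I}'$ forces $\langle{\cal I}\vert{\cal I}'\rangle=0$ and hence $\langle{\bf c}\vert{\bf c}'\rangle=0$.

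I expect the real work to be in the doubly-deformed case: one must verify that the second finite-triangulation constraint, being localised at $v_{e_i,\delta}$ and not at $v$, neither touches $s_{e_i}^{\delta}$ nor mixes its $J=0$ block with the $J\neq0$ blocks, so that the factorisation $\langle{\bf c}\vert{\bf c}'\rangle=(\textrm{common factor})\cdot\langle{\cal I}\vert{\cal I}'\rangle$ is genuinely preserved; the singly-deformed case is then the degenerate instance of the same bookkeeping. A subsidiary point is to confirm that the $\tau_j$-insertions and the arcs $a_{ik}^{\delta}$ (for indices $i\neq n,m$), which are common to ${\bf c}$ and ${\bf c}'$, really do drop out of the overlap, so that nothing living outside $\textrm{Inv}(j_{e_1}\otimes\dots\otimes j_{e_{N_v}})$ can interfere with the vanishing.
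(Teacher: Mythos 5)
Your proof is correct, and it follows the same skeleton as the paper's: first reduce, via the graph-based orthogonality relations already established (eq.~(\ref{orthoproperty}) and Lemmas~\ref{orthogonality-3}, \ref{kinklemma}), to two states in the same sector (same $I$, same ordered edge pair, same deformation data), and then show that the only dependence on ${\cal I},{\cal I}'$ that survives is their pairing inside the invariant subspace of $j_{e_{1}}\otimes\dots\otimes j_{e_{N_{v}}}$, which vanishes for ${\cal I}\neq{\cal I}'$. Where you differ is in how that key step is justified: the paper argues through the recoupling presentation of the insertion operator $\pi_{j_{e_{m}}}(\tau_{i})\otimes\pi_{j_{e_{n}}}(\tau_{i})$ as a fiducial spin-1 edge between $e_{m}$ and $e_{n}$ (following \cite{thiemannalesci}), concluding that the insertion does not alter the intertwining data already present at $v_{e_{m},\delta}$, so the orthogonality of the undressed intertwiners carries over; you instead factorize the inner product explicitly via Peter--Weyl, noting that in the $J_{s_{e_{m}}^{\delta}}=0$ sector the segment contributes a rank-one, holonomy-independent block $\vert{\cal I}\rangle\langle{\cal I}\vert$, so the overlap carries an overall factor $\langle{\cal I}\vert{\cal I}'\rangle=0$ no matter what the common insertions, arcs and downstream holonomies contribute. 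Your treatment also makes explicit two points the paper leaves implicit: that the $I\in\{3,4\}$ sectors carry no ${\cal I}$-label so the content of the lemma lives in $I\in\{1,2\}$ and the doubly-deformed states built on them, and that the second constraint, being localized at $v_{e_{m},\delta}$, does not touch $s_{e_{m}}^{\delta}$ or mix the $J=0$ block. One small imprecision: after integrating out the common downstream holonomies the surviving factor is not literally a common nonnegative constant --- it still depends on ${\cal I},{\cal I}'$ through the bra-blocks contracted into the insertions --- but this is harmless, since the top-block pairing already supplies the vanishing factor $\langle{\cal I}\vert{\cal I}'\rangle$.
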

\begin{proof}
 As we know from eq.(\ref{orthoproperty}),
\begin{equation}
\langle{\bf s}_{\delta}(I;v,e_{m},e_{n};J_{s_{e_{m}}^{\delta}}=0,{\cal I})\vert 
{\bf s}_{\delta}(I^{\prime};v,e_{m^{\prime}},e_{n^{\prime}};J_{s_{e_{m^{\prime}}}^{\delta}}=0,{\cal I}^{\prime})\rangle\ \propto\ \delta_{I,I^{\prime}}\delta_{(m,n);(m^{\prime},n^{\prime})}
\end{equation}
it suffices to show that 
\begin{equation}
\langle{\bf s}_{\delta}(I;v,e_{m},e_{n};J_{s_{e_{m}}^{\delta}}=0,{\cal I})\vert 
{\bf s}_{\delta}(I;v,e_{m},e_{n};J_{s_{e_{m}}^{\delta}}=0,{\cal I}^{\prime})\rangle\ =\ 0
\end{equation}
for some $I,\ (m,n)$.\\
Consider $I=1$. Due to the insertion operator $\pi_{j_{e_{m}}}(\tau_{i})\otimes\pi_{j_{e_{n}}}(\tau_{i})$ ``placed at" $v_{e_{m},\delta}$, the state ${\bf s}_{\delta}(I;v,e_{m},e_{n};J_{s_{e_{m}}^{\delta}}=0,{\cal I}^{\prime})\rangle$ is a linear combination of spin-networks. The most succinct way to understand the resulting states is to consider the presentation of the intertwiner structure as given e.g. in \cite{thiemannalesci}, where the insertion operator $\pi_{j_{e_{m}}}(\tau_{i})\otimes\pi_{j_{e_{n}}}(\tau_{i})$ is presented via a ``fiducial spin-1 edge" between $e_{m},\ e_{n}$ with the two insertions $\pi_{j_{e_{m}}}(\tau_{i})$, $\pi_{j_{e_{n}}}(\tau_{i})$ being the two intertwines at the resulting (fiducial) trivalent vertices. However such an insertion clearly does not change the inter-twining data ${\cal I}^{\prime}$ that is already present at $v_{e_{m},\delta}$, whence the states are orthogonal as they would be in the absence of the insertion operators due to ${\cal I}\ \neq\ {\cal I}^{\prime}$. Hence the proof.
\end{proof}
\subsection{The Habitat space}\label{habitat}
We will now use the set ${\cal S}({\bf s},{\cal I})$ to define the habitat states such that the Hamiltonian constraint as well as the commutator of two (finite triangulation) constraints has a well defined continuum limit on such states. The key idea behind the habitat is given in \cite{lm}. However as our habitat is a non-trivial generalization of the Habitat spaces considered in the literature so far, we revisit the essential idea behind the defining such spaces.\\
The essential idea behind construction of Habitat is the following. We first recall that Habitats are subspaces of ${\cal D}^{\star}$. Suppose we wanted to evaluate continuum limit of a finite (one parameter family of) triangulation operators $\hat{O}_{T(\delta)}$ on ${\cal D}^{\star}$.\footnote{In the interest of pedagogy, we will not expand upon the precise definition of topology on the space of operators that we employ here etc.}
Starting with a spin-network ${\bf s}$ we consider all the ``deformed" spin-nets $\vert{\bf s}(\hat{O},\delta)\rangle$ that are obtained by the action of $\hat{O}_{T(\delta)}$. The states $\vert{\bf s}(\hat{O},\delta)\rangle$ have distinct graphs as well as distinct edge (and vertex) labels with respect to the original state. Each state in the Habitat is first and foremost a distribution obtained by summing over all the ``deformed spin-nets" $\vert{\bf s}(\hat{O},\delta)\rangle$. the coefficients of this linear sum should be chosen so as to ensure that the continuum limit of $\hat{O}_{T(\delta)}$ is well defined in the sense that if $\Psi$ is a state in the Habitat then
$\lim_{\delta\rightarrow 0}\Psi(\hat{O}_{T(\delta)}\vert{\bf s}^{\prime}\rangle)$ is well defined. As far as the toy models considered in the literature so far, it has been sufficient to choose the coefficients as obtained by starting with some function $f$ and evaluating it on the vertex set of $\vert{\bf s}(\hat{O},\delta)\rangle$. That is, formally habitat states look like
\begin{equation}
\Psi^{f} =\ \sum f(V({\bf s}(\hat{O},\delta)))\langle{\bf s}(\hat{O},\delta)\vert 
\end{equation}
where $\sum$ is over all ${\bf s}(\hat{O},\delta)$  and $V({\bf s}(\hat{O},\delta))$ is a vertex set associated to the deformed state and subsequently $f$ is a function on $\Sigma^{V({\bf s})}$. We will generalize this definition as follows.\\
To each deformed state $\vert{\bf s}(\hat{H},\delta)\rangle$, we will not only associate a unique vertex set $V({\bf s}(\hat{H},\delta)$, but also a tensor $\lambda({\bf s}(\hat{H},\delta))$ in the tensor algebra $T(su(2))$ over $su(2)$. Each Habitat state will be defined formally as,
\begin{equation}\label{formaldefofhabitat}
\Psi^{f,{\cal P}} =\ \sum {\cal P}\left(\lambda({\bf s}(\hat{H},\delta))\right)
f(V({\bf s}(\hat{H},\delta)))\langle{\bf s}(\hat{H},\delta)\vert 
\end{equation}
where $f$ is a function on the set of vertices and ${\cal P}$ is a function on $T(su(2))$ which associates to any element of $T(su(2))$ a fixed number by writing the element as some $M\times N$ matrix and evaluating a fixed matrix element. (Details will become clear as we work out the definitions explicitly).\\

We will now work out the above ideas in detail and obtain the Habitat.\\
Given $v,w,\ \in\ su(2)$, and two irreducible representations $j, j^{\prime}$ we will need to look at tensor (Kronecker) product of matrices of the type $\pi_{j}(v)\otimes\pi_{j^{\prime}}(w)$. These Matrices are $(2j+1)\cdot(2j^{\prime}+1)\times (2j+1)\cdot(2j^{\prime}+1)$ dimensional. We now associate elements of such matrices to any ${\bf c}\ \in\ {\cal S}({\bf s},{\cal I})$ as follows. (An assignment of such matrix elements for cylindrical networks belonging to ${\cal T}({\bf s},{\cal I})$ will be completed in section (\ref{RHS-section}))\\
As we saw in the previous section, an ordered pair of edges $(e_{m}, e_{n})\ \in E(\gamma({\bf s}))$ is uniquely associated to each ${\bf c}\ \in\ {\cal S}({\bf s},{\cal I})$, such that in fact if ${\bf c}\ \sim\ {\bf c}^{\prime}$ then the ordered pair associated to them is the same.\\
 To each $[{\bf c}]\ =\ \{{\bf c}^{\prime}\vert {\bf c}^{\prime}\ \sim\ {\bf c}\}$ we assign a matrix $\lambda([{\bf c}])\ :=\ \pi_{j_{e_{m}}}(v([{\bf c}]))\otimes\pi_{j_{e_{n}}}(w([{\bf c}]))$ with $v([{\bf c}]),\ w([{\bf c}])\in su(2)$ and their explicit form is given in section (\ref{labels-for-LHS}). 
\begin{equation}
\lambda({\bf c}^{\prime})\ :=\ \lambda([{\bf c}])
\end{equation}
$\forall\ {\bf c}^{\prime}\ \in\ [{\bf c}]$.\\ 
We will refer to $\lambda({\bf c})$ as vertex tensors. Their definition is motivated  by first analyzing the change in the intertwiner structure at $v$ in a  ``short distance limit" of the finite triangulation commutator. We refer to this limit as a Naive continuum limit, as it is the limit that would be true precisely if the inner product on the state space was a continuous inner product and not a singular inner product like the one on ${\cal H}_{kin}$. The ``naive continuum limit" computation as well as the definition of vertex tensors motivated by it are deferred to the next two sections.\\ 
 Continuing with the construction of Habitat, we note that, as each ${\bf c}$ has a unique pair $(e_{m},e_{n})$ associated to it, we can associate to each $[{\bf c}]$ a pair $(M_{m},N_{n})\ \in\ \{(1,\dots,(2j_{e_{m}}+1)\cdot(2j_{e_{n}}+1)\}$ such that $M_{m}\ \neq\ N_{n}\ \forall\ (e_{m},e_{n})$.\footnote{The Quantization of RHS presented here only gives a matching between LHS and RHS for such ``off-diagonal" entries. We defer the analysis of the diagonal case for later work.}

 The function ${\cal P}$ 
 abstractly introduced in  eq.(\ref{formaldefofhabitat}) is defined as 
\begin{equation}
{\cal P}(\lambda[{\bf c}])\ =\ \lambda([{\bf c}])_{M_{m}N_{n}}
\end{equation} 
We now define a finite set of matrix elements as
\begin{equation}
\begin{array}{lll}
\Lambda_{\vec{MN}}\ =\ \{\lambda([{\bf c}])_{(M_{m}N_{n})}\}
\end{array}
\end{equation}
 where we have chosen once and for all, a pair  $(M_{m},N_{n})$ for each $[{\bf c}], {\bf c}\ \in\ {\cal S}({\bf s},{\cal I})$.  For each such set of matrix elements, we obtain a habitat state defined as follows.
\begin{equation}\label{def-hab}
\Psi^{{\cal F}}_{[{\bf s}]}\ =\ \sum_{{\bf c}\in {\cal S}({\bf s})}\kappa({\bf c}){\cal F}({\bf c})\langle{\bf c}\vert\ +\ \frac{1}{12}\sum_{{\bf c}^{\prime}\in {\cal T}({\bf s})}\kappa({\bf c}^{\prime}){\cal F}({\bf c}^{\prime})\langle{\bf c^{\prime}}\vert
\end{equation}
where,
\noindent{\bf (i)} 
\begin{equation}
\begin{array}{lll}\label{defofkappac}
\kappa({\bf c})^{-1}\ =\ \vert\vert{\bf c}\vert\vert^{2}\\
\textrm{if}\ {\bf c}\ \in\ \cup_{I=1}^{2}\cup_{(i,j)}\left([{\bf s}]_{(I)}^{(i,j);{\cal I}}\ \cup\ \cup_{J=1}^{2}\left([{\bf s}]_{I,J}^{(1) (i,j);{\cal I}}\cup\ [{\bf s}]_{I,J}^{(2) (i,j);{\cal I}}\right)\right)\\
\vspace*{0.1in}
\textrm{and}\\
\vspace*{0.1in}
\kappa({\bf c})^{-1}\ =\ -\vert\vert{\bf c}\vert\vert^{2}\\
\textrm{if}\ {\bf c}\ \in\ \cup_{I=3}^{4}\cup_{(i,j)}\left([{\bf s}]_{(I)}^{(i,j);{\cal I}}\ \cup\ \cup_{J=3}^{4}\left([{\bf s}]_{I,J}^{(1) (i,j);{\cal I}}\cup\ [{\bf s}]_{I,J}^{(2) (i,j);{\cal I}}\right)\right)
\end{array}
\end{equation}
where $[{\bf s}]_{\dots}^{\dots}$ are defined in eq.(\ref{varioussetsofs}). The reason for choosing a relative minus sign for the coefficients $\kappa({\bf c})$  of various cylindrical-networks in a habitat state is purely because, thanks to this minus sign, as we will see, we obtain a well-defined continuum limit of $\hat{H}_{T(\delta)}[N]$ as well as that of finite triangulation commutator. To the best of our understanding, there is no other independent physical reason for this choice and as we insisted before, this is an unsatisfactory aspect of our work that needs a better resolution. In other words, we believe that the fact that $\hat{H}_{T(\delta)}^{\prime}[N]$ as defined in eq.(\ref{dec30-1}) does not have a well defined continuum limit is an issue tied to our choice of the habitat which we are unable to improve upon at this stage.\\
\noindent{\bf (ii)} ${\cal F}({\bf c})$ is defined as follows.\\
\begin{equation}\label{def-hab1}
\begin{array}{lll}
{\cal F}({\bf c})\ =\ f(v({\bf c}))\lambda({\bf c})_{M_{m}N_{n}}\ \vert\ \lambda({\bf c})_{M_{m}N_{n}}=\lambda([{\bf c}])_{M_{m}N_{n}}\ \in\ \Lambda_{\vec{MN}}
\end{array}
\end{equation}
where $v({\bf c})$ is the \emph{unique} non-degenerate vertex associated to ${\bf c}$. The function $f$ is a $C^{\infty}$ function on $\Sigma$.
Remaining ingredients of eq.(\ref{def-hab}) i.e. the set of cylindrical networks ${\cal T}({\bf s},{\cal I})$ is defined in section (\ref{setforRHS}). For now we merely forecast that the set ${\cal T}({\bf s},{\cal I})$ is orthogonal to the set ${\cal S}({\bf s},{\cal I})$.\\
We now show that in order to compute continuum limit of commutator on the Habitat, it suffices to look at $\Psi^{{\cal F}}_{[{\bf s}]} [\ \hat{H}_{T(\delta^{\prime})}[M],\ \hat{H}_{T(\delta)}[N]\ ]\ \vert{\bf s}\rangle$.
\begin{lemma}\label{habitat-only-on-s}
\begin{equation}
\begin{array}{lll}
\Psi^{{\cal F}}_{[{\bf s}]} [\ \hat{H}_{T(\delta^{\prime})}[M],\ \hat{H}_{T(\delta)}[N]\ ]\ \vert{\bf s}^{\prime}\rangle\ \neq 0
\end{array}
\end{equation}
iff ${\bf s}\ =\ {\bf s}^{\prime}$.
\end{lemma}
\begin{proof}
 Without loss of generality, we can focus on the case where both ${\bf s},\ {\bf s}^{\prime}$ have only one non-degenerate vertex $v$.
Consider a co-ordinate ball of radius $\epsilon\ >\ \delta_{0}$ around the vertex $v$. The spin-network states $\vert{\bf s}\rangle,\ \vert{\bf s}^{\prime}\rangle$ can be written as (open) spin-network states $\vert{\bf s}_{v}\rangle$, $\vert{\bf s}^{\prime}\rangle_{v}$ appropriately contracted with states $\vert{\bf s}\rangle_{V(\gamma)-v}$ not adjacent to $v$ (which lie outside the ball $B(v,\epsilon)$) (Such decomposition of spin-network states can be found in e.g. \cite{brunemann-rideout}).\\
 clearly if $\gamma({\bf s}_{V({\bf s})-v})\ \neq\ \gamma({\bf s}_{V({\bf s}^{\prime})-v})$ then 
\begin{displaymath}
\Psi^{{\cal F}}_{[{\bf s}]}[\ \hat{H}_{T(\delta^{\prime})}[M],\ \hat{H}_{T(\delta)}[N]\ ]\ \vert{\bf s}^{\prime}\rangle\ = 0,
\end{displaymath}
as the finite triangulation commutator at most affects $\vert {\bf s}_{v}\rangle,\ \vert{\bf s}_{v}^{\prime}\rangle$.\\
Now suppose the graphs $\gamma({\bf s}_{v})\ \neq\ \gamma({\bf s}^{\prime}_{v})$. 
Which implies there is at least one edge $e\ \in E({\bf s}_{v})$ which does not completely overlap with any edge in $E({\bf s}^{\prime}_{v})$. Let the only edges for which are distinct from each other in $E({\bf s}_{v})$ and $E({\bf s}^{\prime}_{v})$ are $e, e^{\prime}$ respectively.\\
Then we claim that in the limit $\delta, \delta^{\prime}\rightarrow\ 0$, there does not exist a  ${\bf c}\in\ [{\bf s}]$ such that ${\bf c}\ =\ {\bf s}^{\prime}_{\delta}(I, e_{m}, e_{n})_{\delta^{\prime}}(J, e_{m^{\prime}, \delta}, e_{n^{\prime}, \delta})$ for any $I, J, m, n, m^{\prime}, n^{\prime}$. This is rather easy to see. As in this limit, there is always a segment in $e^{\prime}$ which is ``left untouched" by the action of Hamiltonian constraint. Existence of such a segment in ${\bf \tilde{s}}\ =\ {\bf s}^{\prime}_{\delta}(I, e_{m}, e_{n})_{\delta^{\prime}}(J, e_{m^{\prime}, \delta}, e_{n^{\prime}, \delta})$ ensures that such a state will be orthogonal to all ${\bf c}\ \in\ [{\bf s}]$.
\end{proof}

\section{``Short distance analysis" of intertwiner dynamics}\label{naivcont}
In this section we analyze the commutator of two finite triangulation Hamiltonian constraints in the states produced by the action of product of two Hamiltonians in the explicit ``connection" representation. This will tell us,\\
\noindent{\bf (i)} How the structure of commutator at finite triangulation is closely tied to the old computation done by brugmann in Loop representation and,\\
\noindent{\bf (ii)} Meditating over the structure of computation provides  key hint which motivates the definition of vertex tensors.\\
We will compute the finite triangulation commutator for the {\bf case (b)} as defined in the main text. That is, if the action of the first Hamiltonian involves the (ordered) pair $(e_{m}, e_{n})$ then the action of second Hamiltonian involves the ``flipped pair" $(e_{n}^{\delta}, e_{m}^{\delta})$.\\

\subsection{Finite triangulation commutator again}
As explained above, our goal is to analyze the state\\ $\sum_{I,J=1}^{2}(-1)^{I+J\textrm{mod 2}}{\bf s}_{\delta}(I;v,e_{m},e_{n}))_{\delta^{\prime}}(J;v_{e_{m},\delta}, e_{n\ \delta}, e_{m\ \delta})$ in connection representation.
\begin{equation}
\begin{array}{lll}
\sum_{I,J=1}^{2}(-1)^{I+J\textrm{mod 2}}{\bf s}_{\delta}(I;v,e_{m},e_{j}))_{\delta^{\prime}}(J;v_{e_{i},\delta}, e_{j\ \delta}, e_{i\ \delta})\ =\\
\vert{\bf s}_{\delta}(1;v,e_{i},e_{j}))_{\delta^{\prime}}(1;v_{e_{i},\delta}, e_{j\ \delta}, e_{i\ \delta})\rangle - \vert{\bf s}_{\delta}(2;v,e_{i},e_{j}))_{\delta^{\prime}}(1;v_{e_{i},\delta}, e_{j\ \delta}, e_{i\ \delta})\rangle\\
\vspace*{0.1in}
\hspace*{1.4in}-\vert{\bf s}_{\delta}(1;v,e_{i},e_{j}))_{\delta^{\prime}}(2;v_{e_{i},\delta}, e_{j\ \delta}, e_{i\ \delta})\rangle\ + \vert{\bf s}_{\delta}(2;v,e_{i},e_{j}))_{\delta^{\prime}}(2;v_{e_{i},\delta}, e_{j\ \delta}, e_{i\ \delta})\rangle
\end{array}
\end{equation}
We first write all four states in the right hand side of the above equation in connection representation.\\
Note that in the functional representation,
\begin{equation}\label{actionofHinfunctional}
\begin{array}{lll}
\vert{\bf s}_{\delta}(1;v,e_{m},e_{n}, J_{s_{e_{m}}}^{\delta}=0;{\cal I})\rangle\ =\
\left(\dots\otimes\dots\otimes\pi_{j_{e_{m}}}(\tau_{i}\cdot h_{e_{m}-s_{e_{m}}})\otimes\dots\otimes\pi_{j_{e_{n}}}(\tau_{i}\cdot h_{\phi^{\delta}_{\hat{e}_{m}}(0)}\cdot e_{n})\otimes\dots\right)\\
\vspace*{0.1in}
\vert{\bf s}_{\delta}(2;v,e_{m},e_{n}, J_{s_{e_{m}}}^{\delta}=0;{\cal I})\rangle\ =\\
\left(\dots\otimes\dots\otimes\pi_{j_{e_{m}}}(\tau_{i}\cdot h_{e_{m}-s_{e_{m}}})\otimes\dots\otimes\pi_{j_{e_{n}}}(h_{\phi^{\delta}_{\hat{e}_{m}(0)}\cdot s_{e_{n}}}\cdot\tau_{i}\cdot h_{e_{n}-s_{e_{n}}})\otimes\dots\right)
\end{array}
\end{equation}
In the above equations we have only indicated the deformed edges corresponding to $e_{m},\ e_{n}$ as they are the only ones on which insertion operators are acting. The complete expression for the state in the functional form is not relevant for us here.\\
It is now straightforward to deduce the functional representation of four states\\ $\vert{\bf s}_{\delta}(I;v,e_{m},e_{n};{\cal I})_{\delta^{\prime}}(J;v_{e_{m},\delta},e_{n}^{\delta}, e_{m}^{\delta})\rangle$ by applying Hamiltonian constraint on states in eq.(\ref{actionofHinfunctional}).\\
We finally get,
\begin{equation}
\begin{array}{lll}
\vert{\bf s}_{\delta}(1;v,e_{m},e_{n},J_{s_{e_{m}}^{\delta}=0},{\cal I})_{\delta^{\prime}}(1;v_{e_{m},\delta}, e_{n\ \delta}, e_{m\ \delta})\rangle\ =\\ 
{\cal I}\cdot\left(\dots\otimes\pi_{j_{e_{m}}}(\tau_{j}\cdot\tau_{i}\cdot h_{\phi^{\delta^{\prime}}_{\hat{e_{n}^{\delta}}(0)}(e_{m}-s_{e_{m}}^{\delta})})\otimes\dots\otimes\pi_{j_{e_{n}}}(\tau_{j}\cdot\tau_{i}\cdot h_{\phi^{\delta}_{\hat{e}_{m}(0)}(e_{n})-s_{e_{n}^{\delta}}^{\delta^{\prime}}})\otimes\dots\right)
\end{array}
\end{equation}
Similarly, the functional form for $\vert{\bf s}_{\delta}(I;v,e_{m},e_{n},J_{s_{e_{m}}^{\delta}=0},{\cal I})_{\delta^{\prime}}(J;v_{e_{m},\delta}, e_{n\ \delta}, e_{m\ \delta})\rangle,\ \{I,J\ \neq\ 1\}$ are given by,
\begin{equation}
\begin{array}{lll}
\vert{\bf s}_{\delta}(1;v,e_{m},e_{n},J_{s_{e_{m}}^{\delta}=0},{\cal I})_{\delta^{\prime}}(2;v_{e_{m},\delta}, e_{n\ \delta}, e_{m\ \delta})\rangle\ =\\
\vspace*{0.1in}
{\cal I}\cdot\left(\dots\otimes\pi_{j_{e_{m}}}(\tau_{i}\cdot h_{\phi^{\delta^{\prime}}_{\hat{e_{n}}^{\delta}(0)}(s_{e_{m}}^{\delta^{\prime}})}\cdot h_{s_{e_{m}}^{\delta^{\prime}}}^{-1}\cdot\tau_{j}\cdot h_{s_{e_{m}}^{\delta^{\prime}}})\otimes\dots\otimes\pi_{j_{e_{n}}}(\tau_{j}\cdot\tau_{i}\cdot h_{\phi^{\delta}_{\hat{e}_{m}(0)}(e_{n})-s_{e_{n}^{\delta}}^{\delta^{\prime}}})\otimes\dots\right)
\end{array}
\end{equation}
\begin{equation}
\begin{array}{lll}
\vert{\bf s}_{\delta}(2;v,e_{m},e_{n},J_{s_{e_{m}}^{\delta}=0},{\cal I})_{\delta^{\prime}}(1;v_{e_{m},\delta}, e_{n\ \delta}, e_{m\ \delta})\rangle=\\
\vspace*{0.1in}
=\ {\cal I}\cdot\left(\dots\otimes\pi_{j_{e_{m}}}(\tau_{j}\cdot\tau_{i}\cdot h_{\phi^{\delta^{\prime}}_{\hat{e_{n}^{\delta}}(0)}(e_{m}-s_{e_{m}}^{\delta})}\otimes\dots\otimes\pi_{j_{e_{n}}}(\tau_{j}\cdot h_{\phi^{\delta}_{\hat{e}_{m}(0)}(e_{n})-s_{e_{n}^{\delta}}^{\delta^{\prime}}}\cdot h_{s_{e_{n}}^{\delta}}^{-1}\cdot\tau_{i}\cdot h_{e_{n}})\otimes\dots\right)
\end{array}
\end{equation}
\begin{equation}
\begin{array}{lll}
\vert{\bf s}_{\delta}(2;v,e_{m},e_{n},J_{s_{e_{m}}^{\delta}=0},{\cal I})_{\delta^{\prime}}(2;v_{e_{m},\delta}, e_{n\ \delta}, e_{m\ \delta})\rangle=\\
\vspace*{0.1in}
{\cal I}\cdot\left(\dots\otimes\pi_{j_{e_{m}}}(\tau_{i}\cdot h_{\phi^{\delta^{\prime}}_{\hat{e_{n}}^{\delta}(0)}(s_{e_{m}}^{\delta^{\prime}})}\cdot h_{s_{e_{m}}^{\delta^{\prime}}}^{-1}\cdot\tau_{j}\cdot h_{s_{e_{m}}^{\delta^{\prime}}})\otimes\dots\otimes\pi_{j_{e_{n}}}(\tau_{j}\cdot h_{\phi^{\delta}_{\hat{e}_{m}(0)}(e_{n})-s_{e_{n}^{\delta}}^{\delta^{\prime}}}\cdot h_{s_{e_{n}}^{\delta}}^{-1}\cdot\tau_{i}\cdot h_{e_{n}})\otimes\dots\right)
\end{array}
\end{equation}
The functional form for remaining states which arise in this commutator computation,\\ $\vert{\bf s}_{\delta}(I;v,e_{m},e_{n},J_{s_{e_{m}}^{\delta}=0},{\cal I})_{\delta^{\prime}}(J;v_{e_{m},\delta}, e_{n\ \delta}, e_{m\ \delta})\rangle,\ J\in\ \{3,4\}$ can be written down similarly.\\
We now consider the nature of insertion operators when a ``naive" continuum limit is performed. In the limit, when $\delta^{\prime},\delta\ \rightarrow\ 0$.\\
the leading non-vanishing term is obtained by replacing e.g. the holonomy $ h_{\phi^{\delta^{\prime}}_{\hat{e_{n}^{\delta}}(0)}(e_{m}-s_{e_{m}}^{\delta})}$ by $\sim\  \textrm{O}(\delta\delta^{\prime})A^{i}(v)\tau_{i}$. Whence the naive continuum limit will lead to the following structure of insertion operator for each of the four states
\begin{equation}
\begin{array}{lll}
\vert{\bf s}_{\delta}(1;v,e_{m},e_{n},J_{s_{e_{m}}^{\delta}=0},{\cal I})_{\delta^{\prime}}(1;v_{e_{m},\delta}, e_{n\ \delta}, e_{m\ \delta})\rangle\equiv\ \pi_{j_{e_{m}}}(\tau_{i_{1}}\cdot\tau_{j}\cdot\tau_{k})\otimes\ \pi_{j_{e_{n}}}(\tau_{i_{2}}\cdot\tau_{j}\cdot\tau_{k})\\
\vspace*{0.1in}
\vert{\bf s}_{\delta}(1;v,e_{m},e_{n},J_{s_{e_{m}}^{\delta}=0},{\cal I})_{\delta^{\prime}}(2;v_{e_{m},\delta}, e_{n\ \delta}, e_{m\ \delta})\rangle\equiv\ \pi_{j_{e_{m}}}(\tau_{k}\cdot\tau_{i_{1}}\cdot\tau_{j})\otimes\ \pi_{j_{e_{n}}}(\tau_{i_{2}}\cdot\tau_{j}\cdot\tau_{k})\\
\vspace*{0.1in}
\vert{\bf s}_{\delta}(2;v,e_{m},e_{n},J_{s_{e_{m}}^{\delta}=0},{\cal I})_{\delta^{\prime}}(1;v_{e_{m},\delta}, e_{n\ \delta}, e_{m\ \delta})\rangle\equiv\ \pi_{j_{e_{m}}}(\tau_{i_{1}}\cdot\tau_{j}\cdot\tau_{k})\otimes\ \pi_{j_{e_{n}}}(\tau_{j}\cdot\tau_{i_{2}}\cdot\tau_{k})\\
\vspace*{0.1in}
\vert{\bf s}_{\delta}(2;v,e_{m},e_{n},J_{s_{e_{m}}^{\delta}=0},{\cal I})_{\delta^{\prime}}(2;v_{e_{m},\delta}, e_{n\ \delta}, e_{m\ \delta})\rangle\equiv\ \pi_{j_{e_{m}}}(\tau_{k}\cdot\tau_{i_{1}}\cdot\tau_{j})\otimes\ \pi_{j_{e_{n}}}(\tau_{j}\cdot\tau_{i_{2}}\cdot\tau_{k})\\
\end{array}
\end{equation}
where, in the naive continuum limit computation insertion matrices $\tau_{i_{1}},\ \tau_{i_{2}}$ are contracted with $A^{i_{1}},\ A^{i_{2}}$ respectively.\\
In fact, such a continuum limit of a single Hamiltonian constraint leads to the following structure of insertion operators in the leading term.
\begin{equation}\label{dec25-3}
\begin{array}{lll}
\vert{\bf s}_{\delta}(1;v,e_{m},e_{n})\rangle\ \equiv\ \left(\pi_{j_{e_{m}}}(\tau_{j}\cdot\tau_{i_{1}})\otimes\pi_{j_{e_{n}}}(\tau_{j}\cdot\tau_{i_{2}})\right)\\
\vspace*{0.1in}
\vert{\bf s}_{\delta}(2;v,e_{m},e_{n})\rangle\ \equiv\ \left(\pi_{j_{e_{m}}}(\tau_{j}\cdot\tau_{i_{1}})\otimes\pi_{j_{e_{n}}}(\tau_{i_{2}}\cdot\tau_{j})\right)
\end{array}
\end{equation}
where in the ``naive continuum limit", $\tau_{i_{1}},\tau_{i_{2}}$ will be dotted with $A_{i_{1}},\ A_{i_{2}}$ respectively.\\
Based on the above observation, we define the ``vertex smooth functions" for the habitat states which in addition to the dependence on placement of vertex as in the case of Lewandowski-Marolf habitat, precisely captures the information contained in the ordering of insertion operators. This is achieved by first introducing vertex tensors $\lambda({\bf c})$  for all the cylindrical-networks which are in the image of finite triangulation Hamiltonian constraint.

\subsection{Vertex tensors associated to ${\cal S}({\bf s},{\cal I})$}\label{labels-for-LHS}

We now define $\lambda\ =\ \lambda([{\bf c}])$ for all $[{\bf c}]\ \in\ {\cal S}({\bf s},{\cal I})$ as follows.\\
\noindent {\bf (i)} : ${\bf\tilde{s}}\in [{\bf s}]_{(1)}\cup [{\bf s}]_{(3)}$
\begin{equation}\label{22/10-1}
\begin{array}{lll}
\lambda({\bf\tilde{s}})_{MN}\ =\ \sum_{i,j}\left(\pi_{j_{e_{m}}}(\tau_{i}\cdot\tau_{j})\otimes\pi_{j_{e_{n}}}(\tau_{i}\cdot\tau_{j})\right)\\
\hspace*{2.9in}\textrm{if}\ {\bf\tilde{s}}=s_{\delta}(I=\{1,3\}; v, e_{m}, e_{n})
\end{array}
\end{equation}
The ordering of insertion operators is motivated by looking at the structures arising in naive continuum limit (in this case, eq.(\ref{dec25-3})). However as we do not have any smooth gauge fields contracting free indices $i_{1}, i_{2}$, our vertex tensors are defined by choosing $i_{1}=i_{2}$ and summing over $i_{2}$.\footnote{It is possible to start with a vertex tensor which is in fact a map from the tensor algebra over $su(2)$ to a $su(2)^{\star}\otimes su(2)^{\star}$ as 
$\lambda({\bf\tilde{s}})^{i_{1}i_{2}}\ :=\ \sum_{j}\left(\pi_{j_{e_{m}}}(\tau_{i_{1}}\cdot\tau_{j})\otimes\pi_{j_{e_{n}}}(\tau_{i_{2}}\cdot\tau_{j})\right)$. However as this is our first stab at construction of a Habitat which  permits computation of constraint algebra, we leave more refined constructions for future work.}
Similarly, once again taking a cue from the second line in eq.(\ref{dec25-3})
we have for,\\
\noindent {\bf (ii)} : ${\bf\tilde{s}}\in [{\bf s}]_{(2)}\cup [{\bf s}]_{(4)}$
\begin{equation}\label{22/10-2}
\begin{array}{lll}
\lambda({\bf\tilde{s}})\ =\ \sum_{i,j}\left(\pi_{j_{e_{m}}}(\tau_{i}\cdot\tau_{j})\otimes\pi_{j_{e_{n}}}(\tau_{j}\cdot\tau_{i})\right)\\
\hspace*{2.9in}\textrm{if}\ {\bf\tilde{s}}=s_{\delta}(I=\{2,4\}; v, e_{m}, e_{n})
\end{array}
\end{equation}
where the placement of $\tau_{j}$ in the tensor product is different as compared to 
eq.(\ref{22/10-1}).\\
\noindent{\bf (iv)} : 
${\bf\tilde{s}}\in [{\bf s}]^{(1)}_{1,1}$
\begin{equation}\label{22/10-3}
\begin{array}{lll}
\lambda({\bf\tilde{s}})_{MN}\ =\ \sum_{i,j,k}\left(\pi_{j_{e_{m}}}(\tau_{i}\cdot\tau_{j}\cdot\tau_{k})\otimes\pi_{j_{e_{n}}}(\tau_{i}\cdot\tau_{j}\cdot\tau_{k})\right)_{MN}\\
\hspace*{2.9in}\textrm{if}\ {\bf\tilde{s}}=\left({\bf s}_{\delta}(1; v, e_{m}, e_{n})\right)_{\delta^{\prime}}(1;v_{e_{m} \delta},e_{n}^{\delta}, e_{m}^{\delta})
\end{array}
\end{equation}

\underline{case-4} : 
${\bf\tilde{s}}\in [{\bf s}]^{(1)}_{1,2}$
\begin{equation}\label{22/10-4}
\begin{array}{lll}
\lambda({\bf\tilde{s}})\ =\ \sum_{i,j,k}\left(\pi_{j_{e_{m}}}(\tau_{k}\cdot\tau_{i}\cdot\tau_{j})\otimes\pi_{j_{e_{n}}}(\tau_{i}\cdot\tau_{j}\cdot\tau_{k})\right)\\
\hspace*{2.9in}\textrm{if}\ {\bf\tilde{s}}=\left({\bf s}_{\delta}(1; v, e_{m}, e_{n})\right)_{\delta^{\prime}}(2;v_{e_{m} \delta},e_{n}^{\delta}, e_{m}^{\delta})
\end{array}
\end{equation}

\underline{case-5} : 
${\bf\tilde{s}}\in [{\bf s}]^{(1)}_{2,1}$
\begin{equation}\label{22/10-5}
\begin{array}{lll}
\lambda({\bf\tilde{s}})\ =\ \sum_{i,j,k}\left(\pi_{j_{e_{m}}}(\tau_{i}\cdot\tau_{j}\cdot\tau_{k})\otimes\pi_{j_{e_{n}}}(\tau_{j}\cdot\tau_{i}\cdot\tau_{k})\right)\\
\hspace*{2.9in}\textrm{if}\ {\bf\tilde{s}}=\left({\bf s}_{\delta}(2; v, e_{m}, e_{n})\right)_{\delta^{\prime}}(1;v_{e_{m} \delta},e_{n}^{\delta}, e_{m}^{\delta})
\end{array}
\end{equation}

\underline{case-6} : 
${\bf\tilde{s}}\in [{\bf s}]^{(1)}_{2,2}$
\begin{equation}\label{22/10-6}
\begin{array}{lll}
\lambda({\bf\tilde{s}})\ =\ \sum_{i,j,k}\left(\pi_{j_{e_{m}}}(\tau_{k}\cdot\tau_{i}\cdot\tau_{j})\otimes\pi_{j_{e_{n}}}(\tau_{j}\cdot\tau_{i}\cdot\tau_{k})\right)\\
\hspace*{2.9in}\textrm{if}\ {\bf\tilde{s}}=\left({\bf s}_{\delta}(2; v, e_{m}, e_{n})\right)_{\delta^{\prime}}(2;v_{e_{m} \delta},e_{n}^{\delta}, e_{m}^{\delta})
\end{array}
\end{equation}
\underline{case-7} : 
${\bf\tilde{s}}\in [{\bf s}]^{(2)}_{1,1}\cup [{\bf s}]^{(2)}_{1,3}$.\\
In this case, $\lambda({\bf\tilde{s}})$ is same as that given in, eq.(\ref{22/10-3}).
\begin{equation}\label{22/10-3prime}
\begin{array}{lll}
\lambda({\bf\tilde{s}})\ =\ \sum_{i,j,k}\left(\pi_{j_{e_{m}}}(\tau_{i}\cdot\tau_{j}\cdot\tau_{k})\otimes\pi_{j_{e_{n}}}(\tau_{i}\cdot\tau_{j}\cdot\tau_{k})\right)\\
\hspace*{2.9in}\textrm{if}\ {\bf\tilde{s}}=\left({\bf s}_{\delta}(1; v, e_{m}, e_{n})\right)_{\delta^{\prime}}(1;v_{e_{m} \delta},e_{m}^{\delta}, e_{n}^{\delta})
\end{array}
\end{equation}

\underline{case-8} : 
${\bf\tilde{s}}\in [{\bf s}]^{(2)}_{1,2}\cup [{\bf s}]^{(2)}_{1,4}$.\\
\begin{equation}\label{22/10-7}
\begin{array}{lll}
\lambda({\bf\tilde{s}})\ =\ \sum_{i,j,k}\left(\pi_{j_{e_{m}}}(\tau_{i}\cdot\tau_{j}\cdot\tau_{k})\otimes\pi_{j_{e_{n}}}(\tau_{k}\cdot\tau_{i}\cdot\tau_{j})\right)\\
\hspace*{2.9in}\textrm{if}\ {\bf\tilde{s}}=\left({\bf s}_{\delta}(1; v, e_{m}, e_{n})\right)_{\delta^{\prime}}(2;v_{e_{m} \delta},e_{m}^{\delta}, e_{n}^{\delta})
\end{array}
\end{equation}

\underline{case-9} : 
${\bf\tilde{s}}\in [{\bf s}]^{(2)}_{2,1}\cup [{\bf s}]^{(2)}_{2,3}$.\\
\begin{equation}\label{22/10-8}
\begin{array}{lll}
\lambda({\bf\tilde{s}})\ =\ \sum_{i,j,k}\left(\pi_{j_{e_{m}}}(\tau_{i}\cdot\tau_{j}\cdot\tau_{k})\otimes\pi_{j_{e_{n}}}(\tau_{j}\cdot\tau_{i}\cdot\tau_{k})\right)\\
\hspace*{2.9in}\textrm{if}\ {\bf\tilde{s}}=\left({\bf s}_{\delta}(2; v, e_{m}, e_{n})\right)_{\delta^{\prime}}(1;v_{e_{m} \delta},e_{n}^{\delta}, e_{m}^{\delta})
\end{array}
\end{equation}

\underline{case-10} : 
${\bf\tilde{s}}\in [{\bf s}]^{(2)}_{2,2}\cup [{\bf s}]^{(2)}_{2,4}$.\\
\begin{equation}\label{22/10-9}
\begin{array}{lll}
\lambda({\bf\tilde{s}})\ =\ \sum_{i,j,k}\left(\pi_{j_{e_{m}}}(\tau_{i}\cdot\tau_{j}\cdot\tau_{k})\otimes\pi_{j_{e_{n}}}(\tau_{k}\cdot\tau_{j}\cdot\tau_{i})\right)\\
\hspace*{2.9in}\textrm{if}\ {\bf\tilde{s}}=\left({\bf s}_{\delta}(2; v, e_{m}, e_{n})\right)_{\delta^{\prime}}(2;v_{e_{m} \delta},e_{m}^{\delta}, e_{n}^{\delta})
\end{array}
\end{equation}
With the above definition of habitat states in place, we can evaluate the continuum limit of $\delta^{2}\hat{H}_{T(\delta)}[N]$. 
\begin{equation}
\begin{array}{lll}
\lim_{\delta^{2}\rightarrow 0}\Psi^{{\cal F}}_{[{\bf s}]}(\delta\hat{H}_{T(\delta)}[N])\vert{\bf s}\rangle =\\
\vspace*{0.1in}
\frac{{\cal A}}{2\delta}\sum_{m=1}^{N_{v}}\sum_{n=1\vert n\neq m}^{N_{v}}\Psi^{{\cal F}}_{[{\bf s}]}\sum_{I=1}^{4}(-1)^{I+1}\vert{\bf s}_{\delta}(I;v,e_{m}, e_{n})\rangle=\\
\frac{{\cal A}}{2\delta}\\
\sum_{m=1}^{N_{v}}\sum_{n=1\vert n\neq m}^{N_{v}}\left[\Psi^{{\cal F}}_{[{\bf s}]}\left(\vert{\bf s}_{\delta}(1;v,e_{m}, e_{n})\rangle-\vert{\bf s}_{\delta}(2;v,e_{m}, e_{n})\rangle\right)+\Psi^{{\cal F}}_{[{\bf s}]}\left(\vert{\bf s}_{\delta}(3;v,e_{m}, e_{n})\rangle-\vert{\bf s}_{\delta}(4;v,e_{m}, e_{n})\rangle\right)\right]\\
\vspace*{0.1in}
=\frac{{\cal A}}{2\delta}\sum_{m=1}^{N_{v}}\sum_{n=1\vert n\neq m}^{N_{v}}\left(f(v+\delta\hat{e}_{m}(0))-f(v)\right)\left(\pi_{j_{e_{m}}}(\tau_{i}\cdot\tau_{j})\otimes\pi_{j_{e_{n}}}([\tau_{i},\tau_{j}])\right)_{M_{m}N_{n}}\\
\vspace*{0.1in}
=\frac{{\cal A}}{2}\sum_{m=1}^{N_{v}}\sum_{n=1\vert n\neq m}^{N_{v}}\hat{e}^{a}_{m}(0)\partial_{a}f(v)\left(\pi_{j_{e_{m}}}(\tau_{k})\otimes\pi_{j_{e_{n}}}(\tau_{k})\right)_{M_{m}N_{n}}
\end{array}
\end{equation}

\section{Continuum limit of LHS on the proposed Habitat.}\label{conlimitofLHS}
We now have all the ingredients to finally compute $\Psi^{{\cal F}}_{[{\bf s}]}\left([\hat{H}_{T(\delta^{\prime})}[N]\hat{H}_{T(\delta)}[M]\ -\ N\leftrightarrow\ M\right)\vert{\bf s}\rangle$.
As
\begin{equation}
\begin{array}{lll}
\lim_{\delta\rightarrow,\ \delta^{\prime}\rightarrow 0}\Psi^{{\cal F}}_{[{\bf s}]}\left([\hat{H}_{T(\delta^{\prime})}[N]\hat{H}_{T(\delta)}[M]\ -\ N\leftrightarrow\ M\right)\vert{\bf s}\rangle\ =\\
\lim_{\delta\rightarrow,\ \delta^{\prime}\rightarrow 0}\Psi^{{\cal F}}_{[{\bf s}]}\left(\left(\sum_{I=1}^{2}\sum_{J=1}^{2} + \sum_{I=1}^{2}\sum_{J=3}^{4}\right)\sum_{m,n}\left[M(v)N(v_{e_{m},\delta})\ -\ M\leftrightarrow N\right]\right.\\
\hspace*{1.6in}\left.\sum_{m^{\prime},n^{\prime}}(-1)^{I+J(\textrm{mod}\ 2)}\vert\left({\bf s}_{\delta}(I;v,e_{m},e_{n})\right)_{\delta^{\prime}}(J;v_{e_{m},\delta},e_{m^{\prime}}^{\delta},e_{n^{\prime}}^{\delta})\rangle\right)
\end{array}
\end{equation}
We have restricted the sum over $I$ to $\{1,2\}$ as for $I\in\{3,4\}$ the non-degenerate vertex is still located at $v$, whence such terms won't contribute to the commutator.\\
For a fixed $(I,J,m,n)$, depending on the ordered pair $(m^{\prime}, n^{\prime})$ 
the above matrix elements corresponds to case-{\bf (a)},case-{\bf (b)} or case-{\bf  (c)} as detailed below eq.(\ref{oct26-1}).\\
Before proceeding notice that 
\begin{equation}
\begin{array}{lll}
\Psi^{{\cal F}}_{[{\bf s}]}\left(\vert\left({\bf s}_{\delta}(I;v,e_{m},e_{n})\right)_{\delta^{\prime}}(J;v_{e_{m},\delta},e_{m^{\prime}}^{\delta},e_{n^{\prime}}^{\delta})\rangle\right) =\\
\vspace*{0.1in}
\hspace*{0.7in}\Psi^{{\cal F}}_{[{\bf s}]}\left(\vert\left({\bf s}_{\delta}(I;v,e_{m},e_{n};J_{s_{e_{m}}^{\delta}}=0,{\cal I})\right)_{\delta^{\prime}}(J;v_{e_{m},\delta},e_{m^{\prime}}^{\delta},e_{n^{\prime}}^{\delta})\rangle\right)=\\
\vspace*{0.1in}
=\ \sum_{{\bf c}\in[{\bf s}]}{\cal F}({\bf c})\langle{\bf c}\vert\left({\bf s}_{\delta}(I;v,e_{m},e_{n};J_{s_{e_{m}}^{\delta}}=0,{\cal I})\right)_{\delta^{\prime}}(J;v_{e_{m},\delta},e_{m^{\prime}}^{\delta},e_{n^{\prime}}^{\delta})\rangle\\
=\textrm{either}\\
f\left(v(\left({\bf s}_{\delta}(I;v,e_{m},e_{n};J_{s_{e_{m}}^{\delta}}=0,{\cal I})\right)_{\delta^{\prime}}(J;v_{e_{m},\delta},e_{m^{\prime}}^{\delta},e_{n^{\prime}}^{\delta}))\right)\\
\vspace*{0.1in}
\hspace*{1.5in}\lambda(\left({\bf s}_{\delta}(I;v,e_{m},e_{n};J_{s_{e_{m}}^{\delta}}=0,{\cal I})\right)_{\delta^{\prime}}\left(J;v_{e_{m},\delta},e_{m^{\prime}}^{\delta},e_{n^{\prime}}^{\delta})\right))\\
\textrm{or}\\
\Psi^{{\cal F}}_{[{\bf s}]}\left(\vert\left({\bf s}_{\delta}(I;v,e_{m},e_{n})\right)_{\delta^{\prime}}(J;v_{e_{m},\delta},e_{m^{\prime}}^{\delta},e_{n^{\prime}}^{\delta})\rangle\right) = 0
\end{array}
\end{equation}

\underline{{\bf case (a)
}} :\\
In this case LHS is given by,
\begin{equation}
\begin{array}{lll}
\lim_{\delta\rightarrow,\ \delta^{\prime}\rightarrow 0}\Psi^{{\cal F}}_{[{\bf s}]}
\left(\sum_{I=1}^{2}\sum_{J=1}^{2} + \sum_{I=1}^{2}\sum_{J=3}^{4}\right)\sum_{m,n}\\\hspace*{1.5in}(-1)^{I+J(\textrm{mod}\ 2)}\left[M(v)N(v_{e_{m},\delta})\ -\ M\leftrightarrow N\right]\vert\left({\bf s}_{\delta}(I;v,e_{m},e_{n})\right)_{\delta^{\prime}}(J;v_{e_{m},\delta},e_{m}^{\delta},e_{n}^{\delta})\rangle\\
=\ \lim_{\delta\rightarrow,\ \delta^{\prime}\rightarrow 0}\left(\sum_{I=1}^{2}\sum_{J=1}^{2} + \sum_{I=1}^{2}\sum_{J=3}^{4}\right)\sum_{m,n}\\\hspace*{1.5in}(-1)^{I+J(\textrm{mod}\ 2)}\left[M(v)N(v_{e_{m},\delta})\ -\ M\leftrightarrow N\right]{\cal F}(\left({\bf s}_{\delta}(I;v,e_{m},e_{n})\right)_{\delta^{\prime}}(J;v_{e_{m},\delta},e_{m}^{\delta},e_{n}^{\delta}))\\
\end{array}
\end{equation}
where we have used the definition of habitat given in eq.(\ref{def-hab}).\\
We 
we can now use eqs. ( (\ref{def-hab1}), (\ref{22/10-3prime}), (\ref{22/10-7}), (\ref{22/10-8}), (\ref{22/10-9}) ) and evaluate LHS for {\bf case-(a)} as,
\begin{equation}
\begin{array}{lll}
\textrm{LHS}=\\
\lim_{\delta\rightarrow 0}\lim_{\delta^{\prime}\rightarrow 0}\sum_{m,n}\left[M(v)N(v_{e_{m},\delta})\ -\ M\leftrightarrow N\right]\left(f((v_{e_{m},\delta})_{e_{m}^{\delta},\delta^{\prime}})-f(v_{e_{m},\delta})\right)\\
\vspace*{0.1in}
\hspace*{0.5in}\sum_{i,j,k}\left(\left(\pi_{j_{e_{m}}}(\tau_{i}\cdot\tau_{j}\cdot\tau_{k})\otimes\pi_{j_{e_{n}}}(\tau_{i}\cdot\tau_{j}\cdot\tau_{k})\right)_{M_{m}N_{n}}\ -\ \left(\pi_{j_{e_{m}}}(\tau_{k}\cdot\tau_{i}\cdot\tau_{j})\otimes\pi_{j_{e_{n}}}(\tau_{i}\cdot\tau_{j}\cdot\tau_{k})\right)_{M_{m}N_{n}}\ -\right.\\
\hspace*{1.0in}\left. \left(\pi_{j_{e_{m}}}(\tau_{i}\cdot\tau_{j}\cdot\tau_{k})\otimes\pi_{j_{e_{n}}}(\tau_{j}\cdot\tau_{i}\cdot\tau_{k})\right)_{M_{m}N_{n}}\ +\ \left(\pi_{j_{e_{m}}}(\tau_{k}\cdot\tau_{i}\cdot\tau_{j})\otimes\pi_{j_{e_{n}}}(\tau_{j}\cdot\tau_{i}\cdot\tau_{k})\right)_{M_{m}N_{n}}\right)\\
=\lim_{\delta\rightarrow 0}\lim_{\delta^{\prime}\rightarrow 0}\\
\hspace*{0.5in}\sum_{m,n}\left[M(v)N(v_{e_{m},\delta})\ -\ M\leftrightarrow N\right]\left(f((v_{e_{m},\delta})_{e_{m}^{\delta},\delta^{\prime}})-f(v_{e_{m},\delta})\right)\left(\pi_{j_{e_{m}}}(\tau_{i})\otimes\pi_{j_{e_{n}}}(\tau_{i})\right)_{M_{m}N_{n}}\\
=\ \sum_{m,n}\hat{e}^{a}_{m}(0)\left(M\partial_{a}N-N\partial_{a}M\right)\ \hat{e}^{b}_{m}(0)\partial_{b}f(v)\ \left(\pi_{j_{e_{m}}}(\tau_{i})\otimes\pi_{j_{e_{n}}}(\tau_{i})\right)_{M_{m}N_{n}}
\end{array}
\end{equation}
\emph{The relative minus sign between $f((v_{e_{m},\delta})_{e_{m}^{\delta},\delta^{\prime}}), f(v_{e_{m},\delta})$ is due to choice of $\kappa({\bf c})$ for various states as in eq.(\ref{defofkappac}).}\\
We have glossed over an important sublety in the above computation. It is related to the fact that due to non-trivial density weight of the Lapse, its evaluation at a point requires choice of co-ordinate. This issue was rather beautifully tackled in \cite{tv}, and we summarize the main aspect of their construction which is relevant for our purposes.\footnote{We are grateful to Madhavan Varadarajan for making us aware of these subtleties which turn out to be key players when one checks for diffeomorphism covariance of the constraint. \cite{mad2}}
Our discussion here is brief, lacking the precision of \cite{tv},  and for more details we refer the reader to section 4 of the above mentioned paper.\\
Let us assume that in the neighbor-hood the vertex $v$, we have fixed a co-ordinate system $\{x\}_{v}$ once and for all. For small enough $\delta$, the vertices $v_{e,\delta}$ lie inside such a neighborhood. The co-ordinate charts $\{x^{\prime}\}_{v_{e,\delta}}$ around such vertices can be obtained by rigidly translating the system $\{x\}_{v}$. Such a construction of co-ordinate patches around all the vertices of ${\bf c}$ ensures that the Jacobian $\frac{\partial x^{\prime a}_{\delta}}{\partial x^{b}}$ has a smooth  and non-singular limit as $\delta\rightarrow 0$ and moreover, due the the fact that different co-ordinate charts are related to each other by rigid translations, we have that
\begin{equation}\label{dec27-11}
\frac{\partial x^{\prime a}_{\delta}}{\partial x^{b}}\vert_{\delta=0}\ =\ \delta^{a}_{b}
\end{equation}
Thanks to eq.(\ref{dec27-11}), even though at finite triangulation, one has 
\begin{equation}
\begin{array}{lll}
N(v_{e_{m},\delta})\ :=\ N(\{x^{\prime}\}_{v_{e_{m},\delta}})\ = \vert\frac{\partial x_{v_{e_{m},\delta}}}{\partial x_{v}}\vert^{-1}N(\{x\}(v_{e_{m},\delta}))
\end{array}
\end{equation}
In the limit $\delta\rightarrow 0$, the Jacobian equals unity.\\
We also note that in arriving at the continuum limit, we have used eq.(\ref{deformedtangent}). At this point it is perhaps pertinent to note that in \cite{tv}, the authors also chose the deformations in such a way that if the deformation was along edge $e_{m}$, they had, to leading order in $\delta$, $\hat{e}_{n}^{\delta}(0)\ =\ -\hat{e}_{m}^{\delta}(0)$. It is tedious but a straight-forward exercise to show that the anomaly-freedom demonstrated in this paper will go through even for such ``conical" deformations as long as we are consistent with our prescription of deformation for both the Hamiltonian constraint and the Electric Vector constraint.

\underline{{\bf case (b)}} :\\
In this case LHS is given by,
\begin{equation}
\begin{array}{lll}
\lim_{\delta\rightarrow,\ \delta^{\prime}\rightarrow 0}\Psi^{{\cal F}}_{[{\bf s}]}
\left(\sum_{I=1}^{2}\sum_{J=1}^{2} +\ \sum_{I=1}^{2}\sum_{J=3}^{4}\right)\sum_{m,n}\\\hspace*{1.5in}(-1)^{I+J(\textrm{mod}\ 2)}\left[M(v)N(v_{e_{m},\delta})\ -\ M\leftrightarrow N\right]\vert\left({\bf s}_{\delta}(I;v,e_{m},e_{n})\right)_{\delta^{\prime}}(J;v_{e_{m},\delta},e_{n}^{\delta},e_{m}^{\delta})\rangle\\
=\ \lim_{\delta\rightarrow,\ \delta^{\prime}\rightarrow 0}\left(\sum_{I=1}^{2}\sum_{J=1}^{2}\ +\ \sum_{I=1}^{2}\sum_{J=3}^{4}\right)\sum_{m,n}\\\hspace*{1.5in}(-1)^{I+J(\textrm{mod}\ 2)}\left[M(v)N(v_{e_{m},\delta})\ -\ M\leftrightarrow N\right]{\cal F}(\left({\bf s}_{\delta}(I;v,e_{m},e_{n})\right)_{\delta^{\prime}}(J;v_{e_{m},\delta},e_{n}^{\delta},e_{m}^{\delta}))\\
\end{array}
\end{equation}
We can now use eqs. ( (\ref{def-hab1}), (\ref{22/10-3prime}), (\ref{22/10-7}), (\ref{22/10-8}), (\ref{22/10-9}) ) and evaluate LHS for {\bf case-(b)} as,
\begin{equation}
\begin{array}{lll}
\textrm{LHS}=\\
\lim_{\delta\rightarrow 0}\lim_{\delta^{\prime}\rightarrow 0}\sum_{m,n}\left[M(v)N(v_{e_{m},\delta})\ -\ M\leftrightarrow N\right]f((v_{e_{m},\delta})_{e_{n}^{\delta},\delta^{\prime}})\\
\vspace*{0.1in}
\hspace*{0.5in}\sum_{i,j,k}\left(\left(\pi_{j_{e_{m}}}(\tau_{i}\cdot\tau_{j}\cdot\tau_{k})\otimes\pi_{j_{e_{n}}}(\tau_{i}\cdot\tau_{j}\cdot\tau_{k})\right)_{M_{m}N_{n}}\ -\ \left(\pi_{j_{e_{m}}}(\tau_{k}\cdot\tau_{i}\cdot\tau_{j})\otimes\pi_{j_{e_{n}}}(\tau_{i}\cdot\tau_{j}\cdot\tau_{k})\right)_{M_{m}N_{n}}\ -\right.\\
\hspace*{1.0in}\left. \left(\pi_{j_{e_{m}}}(\tau_{i}\cdot\tau_{j}\cdot\tau_{k})\otimes\pi_{j_{e_{n}}}(\tau_{j}\cdot\tau_{i}\cdot\tau_{k})\right)_{M_{m}N_{n}}\ +\ \left(\pi_{j_{e_{m}}}(\tau_{k}\cdot\tau_{i}\cdot\tau_{j})\otimes\pi_{j_{e_{n}}}(\tau_{j}\cdot\tau_{i}\cdot\tau_{k})\right)_{M_{m}N_{n}}\right)\\
=\lim_{\delta\rightarrow 0}\lim_{\delta^{\prime}\rightarrow 0}\sum_{m,n}\left[M(v)N(v_{e_{m},\delta})\ -\ M\leftrightarrow N\right]\left(f((v_{e_{m},\delta})_{e_{n}^{\delta},\delta^{\prime}})-f(v_{e_{m},\delta})\right)\left(\pi_{j_{e_{m}}}(\tau_{i})\otimes\pi_{j_{e_{n}}}(\tau_{i})\right)_{M_{m}N_{n}}
\end{array}
\end{equation}
In this case the ordered continuum limit ($\delta^{\prime}\rightarrow\ 0$ , $\delta\rightarrow\ 0$) is given by
\begin{equation}
\begin{array}{lll}
\textrm{LHS}\ =\ \sum_{m,n\vert m\neq n}\hat{e}^{a}_{m}(0)\left[M(v)\partial_{a}N(v) - N(v)\partial_{a}M(v)\right]\hat{e}^{b}_{n}(0)\partial_{b}f(v)\left(\pi_{j_{e_{m}}}(\tau_{i})\otimes\pi_{j_{e_{n}}}(\tau_{i})\right)_{M_{m}N_{n}}
\end{array}
\end{equation}
\underline{{\bf case( c )}}:\\
As proven in lemma (\ref{kinklemma}),  $\langle\left({\bf s}_{\delta}(I;v,e_{m}, e_{n})_{\delta^{\prime}}
(J;v_{e_{m},\delta},e_{k}^{\delta}e_{l}^{\delta})\right)\vert{\bf c}^{\prime}\rangle = 0$ $\forall\ {\bf c}^{\prime}$ if $(k,l)\ \neq\ (m,n)$. Hence in this case, $\textrm{LHS}\ =\ 0.$\\

\subsection{Continuum limit of LHS : Final result}
We can use the three cases analyzed above to finally analyze the continuum limit of RHS in the weak-* topology, defined using ${\cal D}_{kin}\times\ {\cal V}_{hab}$ as
\begin{equation}\label{mainresultforlhs}
\begin{array}{lll}
\lim_{\delta\rightarrow,\ \delta^{\prime}\rightarrow 0}\Psi^{{\cal F}}_{[{\bf s}]}\left(\hat{H}_{T(\delta^{\prime})}[N]\hat{H}_{T(\delta)}[M]\ -\ N\leftrightarrow\ M\right)\vert{\bf s}^{\prime}\rangle\ =\\
\vspace*{0.1in}
\delta_{{\bf s},{\bf s}^{\prime}}{\cal A}^{2}\left[\ \sum_{m,n\vert m\neq n}\left(\ \hat{e}^{a}_{m}(0)\left[\ M(v)\partial_{a}N(v) - N(v)\partial_{a}M(v)\ \right]\hat{e}^{b}_{n}(0)\partial_{b}f(v)\ +\right.\right.\\
\left.\hspace*{2.5in}\hat{e}^{a}_{m}(0)\left[\ M(v)\partial_{a}N(v) - N(v)\partial_{a}M(v)\ \right]
\hat{e}^{b}_{m}(0)\partial_{b}f(v)\ \right)\\
\hspace*{3.7in}\left.\left(\pi_{j_{e_{m}}}(\tau_{i})\otimes\pi_{j_{e_{n}}}(\tau_{i})\right)_{M_{m}N_{n}}\right]
\end{array}
\end{equation}
where we have restored the factor of ${\cal A}^{2}=(\frac{3}{8\pi})^{2}$ in the final step.

\section{Analysis of the R.H.S}\label{analysisofRHS}
We now turn our attention to the quantization of the right hand side in light of the remarkable identity for the R.H.S that was obtained in \cite{tv}. Thus our starting point 
in the quantum theory is 
\begin{equation}\label{remarkablerhs}
\begin{array}{lll}
\textrm{R.H.S}\ =\ -3\sum_{i=1}^{3}\{D[M_{i}], D[N_{i}]\}\\
\end{array}
\end{equation}
where $N^{a}_{i}\ =\ N \tilde{E}^{a}_{i}$ and
\begin{equation}
\begin{array}{lll}
D[M_{i}]\ =\ \int_{\Sigma}M(x)\tilde{E}^{a}_{i}(x)F_{ab}^{j}(x)\tilde{E}^{b}_{j}(x)\\
\vspace*{0.1in}
=\ H_{\textrm{diff}}[M_{i}]\ +\ {\cal G}[\Lambda_{i}=NE^{a}_{i}A_{a}]
\end{array}
\end{equation}
We now use the quantization of the vector constraint $D[\vec{W}]$ for a c-no. shift $\vec{W}$ as given in \cite{aldiff} and quantize $D[N_{i}]$. 
The basic idea behind the quantization of the vector constraint is as follows.
Consider our canonical gauge variant spin-network state $\vert{\bf s}\rangle$.
Let $\vec{W}$ be a c-no. shift such that $\textrm{Supp}(\vec{W})\cap V({\bf s})\ =\ \{v\}$. As shown in \cite{aldiff}, there exists a quantization of $D[\vec{W}]$ on ${\cal H}_{kin}$ such that the action of the quantum operator on $\vert{\bf s}\rangle$ is given by,
\begin{equation}
\begin{array}{lll}
\hat{D}_{T(\delta)}[\vec{W}]\vert{\bf s}\rangle\ =\ \frac{1}{\delta}\left[\vert\overline{\phi}^{\delta}_{\vec{W}}\circ{\bf s}\rangle\ -\ \vert{\bf s}\rangle\right]\\
\textrm{where $\vert\overline{\phi}^{\delta}_{\vec{W}}\circ{\bf s}\rangle$ is a spin-network state.}
\\
\vspace*{0.1in}
T_{\overline{\phi}^{\delta}_{\vec{W}}\circ{\bf s}}\ :=\\
\left(\pi_{j_{e_{1}}}(h_{s_{\vec{W}}^{\delta}}\circ h_{\phi^{\delta}_{\vec{W}}\cdot e_{1}})\otimes\dots\otimes\pi_{j_{e_{N_{v}}}}(h_{s_{\vec{W}}^{\delta}}\circ h_{\phi^{\delta}_{\vec{W}}\cdot e_{N_{v}}})\right)_{m_{e_{1}}\dots m_{e_{N_{v}}},n_{e_{1}}\dots n_{e_{N_{v}}}}\cdot M^{m_{e_{1}}\dots m_{e_{N_{v}}},n_{e_{1}}\dots n_{e_{N_{v}}}}
\end{array}
\end{equation}
In the first line, we have denoted the action of $\hat{D}_{T(\delta)}[\vec{W}]$ by $\overline{\phi}_{\vec{W}}^{\delta}$. The bar on $\phi$ implies that the action is not given merely by a diffeomorphism but is a convolution of spatial diffeomorphism and $SU(2)$ gauge transformations. This can be most easily explained by looking at the second line where we see that in the functional representation, under the action of $\hat{D}_{T(\delta)}[\vec{W}]$ the transformed state is given by ``displacing" each edge beginning at $v$ along a diffeomorphism generated by $\vec{W}$, and attaching a segment $s^{\delta}_{\vec{W}}$ of co-ordinate length of $O(\delta)$ that is along the integral curves of $\vec{W}$ between $v$ and the beginning point of the displaced edge. We refer the reader to the figure \ref{fig5} for a succinct illustration of these ideas and to \cite{aldiff} for further details.\\
\begin{figure}
\begin{center}
\includegraphics[height=2in, width=5in]{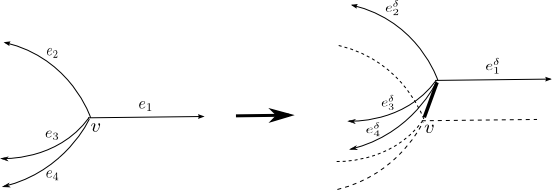}
\caption{Action of $\hat{D}_{T(\delta)}[\vec{W}]$ on a spin-net. The state moves along integral curves of  $\vec{W}$, and in this figure we have defined $e_{\delta}:=\phi^{\delta}_{\vec{W}}(e)$. The displaced vertex $\phi^{\delta}_{\vec{W}}(v)$ is attached to the original vertex by a segment along $\vec{W}$ and which is in $j_{e_{1}}\otimes\dots j_{e_{4}}$.}
\label{fig5}
\end{center}
\end{figure}

The special case when $\vec{W}$ is tangential to one of the edges, say $e_{1}$ at $v$ is worth mentioning. In this case, as the reader can easily verify herself, the action of finite triangulation vector constraint is given by,
\begin{equation}\label{shiftalongedge}
\begin{array}{lll}
T_{\overline{\phi}^{\delta}_{\vec{W}}\circ{\bf s}}\ :=\\
\left(\pi_{j_{e_{1}}}(h_{e_{1}})\otimes\dots\otimes\pi_{j_{e_{N_{v}}}}(h_{s_{\vec{W}}^{\delta}}\circ h_{\phi^{\delta}_{\vec{W}}\cdot e_{N_{v}}})\right)_{m_{e_{1}}\dots m_{e_{N_{v}}},n_{e_{1}}\dots n_{e_{N_{v}}}}\cdot M^{m_{e_{1}}\dots m_{e_{N_{v}}},n_{e_{1}}\dots n_{e_{N_{v}}}}
\end{array}
\end{equation}
We can now mimic this procedure to quantize $D[N_{i}\ =\ NE_{k}]$. The difference however is that $N\tilde{E}_{k}$ is no longer a c-no. shift, and its quantization as explained in section (\ref{qshift}) results in a ``singular" vector field with support in the neighborhood of $v$.\footnote{Recall that we are assuming that all our Lapse functions have support only in the nbd. of $v$ and vanish on all remaining vertices of ${\bf s}$.} We recall that the definition of quantum shift from section (\ref{qshift}), 
\begin{equation}
\begin{array}{lll}
\widehat{N^{a}_{k}}(v)\vert_{\delta}\vert{\bf s}\rangle :=\ N(x(v))\frac{1}{\frac{4\pi\delta^{2}}{3}}\sum_{e\in E(\gamma)\vert b(e)=v}\hat{e}^{a}(0)\hat{\tau}_{k}\vert_{v}\vert{\bf s}\rangle
\end{array}
\end{equation}
Three key structural differences between the quantum shift and a classical c-no. shift are worth mentioning.\\
\noindent{\bf (1)} The quantum shift is singular and is non-vanishing in the closed ball of co-ordinate radius $\delta$ around $v$, outside of which it vanishes.\\
\noindent{\bf (2)} Second difference is the presence of insertion operator $\hat{\tau}_{k}$ in the case of quantum shift which alters the intertwiner structure of ${\bf s}$ at $v$, and finally\\
\noindent{\bf (3)} The third difference is the fact that $\vec{W}$ is a vector field whereas $N\tilde{E}_{i}$ is a sensitized vector field with density weight -1.\\
 A careful mimicking of quantization of $D[\vec{W}]$ as applied to $D[N\tilde{E}_{k}]$ then results in the following finite-triangulation operator on ${\cal H}_{kin}$.
\begin{equation}
\begin{array}{lll}
\hat{D}_{T(\delta)}[N_{k}]T_{{\bf s}}\ =\\
\vspace*{0.1in}
\frac{1}{\frac{4\pi\delta^{2}}{3}}\\
\sum_{i=1}^{N_{v}}\left[\left(\pi_{j_{e_{1}}}(h_{s_{e_{i}}^{\delta}}\cdot h_{\phi^{\delta}_{\hat{e}_{i}(0)}\cdot s_{e_{1}}^{\delta}}\cdot h_{e_{1}-s_{e_{1}}^{\delta}})
\otimes\dots\otimes\pi_{j_{e_{i}}}(\tau_{k}\cdot h_{e_{i}})\otimes\right.\right.\\
\hspace*{1.9in}\dots\otimes
\left.\left.\pi_{j_{e_{N_{v}}}}(h_{s_{e_{i}}^{\delta}}\cdot h_{\phi^{\delta}_{\hat{e}_{i}(0)}\cdot s_{e_{N_{v}}}^{\delta}}\cdot h_{e_{N_{v}}-s_{e_{N_{v}}}^{\delta}})\right)\cdot M\ -\ T_{{\bf s}}\right]
\end{array}
\end{equation}
We encourage the reader to understand the equation in light of eq.(\ref{shiftalongedge}) with $\vec{W}$ replaced by $\hat{N}_{k}$. The extra factor of $\frac{1}{\delta}$ as compared to the factor in eq.(\ref{shiftalongedge}) is precisely due to the fact that the $N\tilde{E}_{i}$ is a densitized vector field.\\
We denote the deformed state as,
\begin{equation}\label{eq:12.7}
\begin{array}{lll}
\left(\pi_{j_{e_{1}}}(h_{s_{e_{i}}^{\delta}}\cdot h_{\phi^{\delta}_{\hat{e}_{i}(0)}\cdot s_{e_{1}}^{\delta}}\cdot h_{e_{1}-s_{e_{1}}^{\delta}})
\otimes\dots\otimes\pi_{j_{e_{i}}}(\tau_{k}\cdot h_{e_{i}})\otimes\right.\\
\hspace*{1.9in}\dots\otimes
\left.\pi_{j_{e_{N_{v}}}}(h_{s_{e_{i}}^{\delta}}\cdot h_{\phi^{\delta}_{\hat{e}_{i}(0)}\cdot s_{e_{N_{v}}}^{\delta}}\cdot h_{e_{N_{v}}-s_{e_{N_{v}}}^{\delta}})\right)\cdot M\ =:\ \vert\overline{\phi}^{\delta}_{\hat{e}_{i}(0)}\cdot{\bf s}(v,e_{i},k)\rangle\ 
\end{array}
\end{equation}
\noindent{\bf (i)} ${\bf s}(v, e_{i},k)$ denotes a cylindrical-network obtained from ${\bf s}$ by insertion of $\tau_{k}$ along edge $e_{i}$ with $f(e_{i})\ =\ v$.\\
\noindent{\bf (ii)} As before $\overline{\phi}^{\delta}_{\hat{e}_{i}(0)}$ denotes a (one parameter family of) singular transformation along $\hat{e}_{i}(0)$ which is generated by the Vector constraint, as opposed to the Diffeomorphism constraint.\\
\footnote{Note that Eq.(\ref{eq:12.7}), can in fact be re-written by using the basic intertwining property of $\pi_{j}(\tau_{k})$,
\begin{equation}
\pi_{j_{e}}(\tau_{k}\cdot h_{s_{e}^{\delta}})_{m_{e}m_{e}^{\prime}}\ =\ 
\pi_{j_{e}}(\tau_{k})_{m_{e}\tilde{m}_{e}}\pi_{j_{e}}(h_{s_{e}^{\delta}})_{\tilde{m}_{e}m_{e}^{\prime}}\ =\ \left(\pi_{j_{e}}(h_{s_{e}})_{m_{e}\tilde{m}_{e}}\otimes\pi_{j=1}(h_{s_{e}})_{kl}\right)\cdot\pi_{j_{e}}(\tau_{l})_{\tilde{m}_{e}m_{e}^{\prime}}
\end{equation}
Using the above equation, we can re-write eq.(\ref{eq:12.7}) in such a way that the insertion operator is ``placed at" $v_{e,\delta}$ in the functional representation.
\begin{equation}
\begin{array}{lll}
\vert\overline{\phi}^{\delta}_{\hat{e}_{i}(0)}\cdot{\bf s}(v,e_{i},k)\rangle\ =\\ 
\left(\pi_{j_{e_{1}}}(h_{s_{e_{i}}^{\delta}}\cdot h_{\phi^{\delta}_{\hat{e}_{i}(0)}\cdot s_{e_{1}}^{\delta}}\cdot h_{e_{1}-s_{e_{1}}^{\delta}})
\otimes\dots\otimes\left[(\pi_{j_{e_{i}}}(h_{s_{e_{i}}^{\delta}})\otimes\pi_{j=1}(h_{s_{e_{i}}^{\delta}})_{kl})\cdot\pi_{j_{e_{i}}}(\tau_{l})\right]\otimes\right.\\
\hspace*{1.9in}\dots\otimes
\left.\pi_{j_{e_{N_{v}}}}(h_{s_{e_{i}}^{\delta}}\cdot h_{\phi^{\delta}_{\hat{e}_{i}(0)}\cdot s_{e_{N_{v}}}^{\delta}}\cdot h_{e_{N_{v}}-s_{e_{N_{v}}}^{\delta}})\right)\cdot M\\
\end{array}
\end{equation}
However, we will not have an occasion to use this form of the state in this paper.}
The resulting state is depicted in the figure above.\\
\begin{figure}
\begin{center}
\includegraphics[height=2in, width=5in]{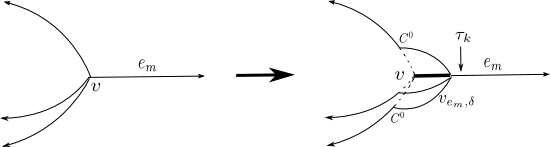}
\caption{Action of $\hat{D}_{T(\delta)}[N_{k}]$ on a spin-net.Only deformation along $e_{m}$ is displayed. $\tau_{k}$ is inserted along $e_{m}$ at the displaced vertex and all the bivalent vertices are of differentiability class $C^{0}$.}
\label{fig6}
\end{center}
\end{figure}

Precisely as for $\vert{\bf s}_{\delta}(2;v,e_{m},e_{n})$, we note that the cylindrical-network state $\vert\overline{\phi}^{\delta}_{\hat{e}_{i}(0)}\cdot{\bf s}(v,e_{i},k)\rangle$ has an edge $s_{e_{i}}^{\delta}$ which is in the reducible representation of $SU(2)$. Exactly as for $\vert{\bf s}_{\delta}(2;v,e_{m},e_{n})$ we can write $\vert\overline{\phi}^{\delta}_{\hat{e}_{i}(0)}\cdot{\bf s}(v,e_{i},k)\rangle$ as a linear combination of cylindrical-networks with $J_{s_{e_{i}^{\delta}}}\ \in\ \{0,\dots,\vec{j_{e_{1}}}+\dots\vec{j_{e_{N_{v}}}}\}$. 
\begin{equation}\label{cylnet-J=0}
\begin{array}{lll}
\vert\overline{\phi}^{\delta}_{\hat{e}_{i}(0)}\cdot{\bf s}(v,e_{i},k)\rangle\ =\ 
\vert\overline{\phi}^{\delta}_{\hat{e}_{i}(0)}\cdot{\bf s}(v,e_{i},k;J_{s_{e_{i}}^{\delta}}=0)\rangle\ +\ \sum_{J_{s_{e_{i}}^{\delta}}\neq\ 0}\vert\overline{\phi}^{\delta}_{\hat{e}_{i}(0)}\cdot{\bf s}(v,e_{i},k;J_{s_{e_{i}}^{\delta}})\rangle
\end{array}
\end{equation}
The $J_{s_{e_{i}}^{\delta}}\ =\ 0$ state can be written as a linear combination over various Intertwiners ${\cal I} : j_{e_{1}}\otimes\dots j_{e_{N_{v}}}\rightarrow\ \mathbb{C}$.\\
\begin{equation}\label{dec23-1}
\begin{array}{lll}
\vert\overline{\phi}^{\delta}_{\hat{e}_{i}(0)}\cdot{\bf s}(v,e_{i},k;J_{s_{e_{i}}^{\delta}}=0)\rangle\ =\ \sum_{{\cal I}}\vert\overline{\phi}^{\delta}_{\hat{e}_{i}(0)}\cdot{\bf s}(v,e_{i},k;J_{s_{e_{i}}^{\delta}}=0,{\cal I})\rangle
\end{array}
\end{equation}
As for the Hamiltonian constraint, our interest is in the rescaled Vector constraint operator $\delta\hat{D}_{T(\delta)}[N_{k}]$. 
\begin{equation}
\delta\hat{D}_{T(\delta)}[N_{k}]\vert{\bf s}\rangle\ =\ N(x(v))\frac{1}{\frac{4\pi}{3}}\frac{1}{\delta}\sum_{i=1}^{N_{v}}\left(\vert\overline{\phi}^{\delta}_{\hat{e}_{i}(0)}\cdot{\bf s}(v,e_{i},k)\rangle\ -\ \vert{\bf s}(v,e_{i},k)\rangle\right)
\end{equation}
In  the spirit of the nomenclature of \cite{tv}, we will refer to $\delta\hat{D}_{T(\delta)}[N_{k}]$ as electric Vector constraint.\\
We can now evaluate the action of two successive Electric Vector constraints on ${\bf s}$ as follows.

\begin{equation}\label{2singulardiffs-1}
\begin{array}{lll}
\hat{D}_{T(\delta^{\prime})}[M_{k}]\hat{D}_{T(\delta)}[N_{k}]\ \vert{\bf s}\rangle\ =\\ 
\vspace*{0.1in}
N(x(v))\frac{1}{\frac{4\pi}{3}}\frac{1}{\delta}\sum_{m=1}^{N_{v}}\hat{D}_{T(\delta^{\prime})}[M_{k}]\left(\vert\overline{\phi}^{\delta}_{\hat{e}_{m}(0)}\cdot{\bf s}(v,e_{m},k)\rangle\ -\ \vert{\bf s}(v,e_{m},k)\rangle\right)\\
\vspace*{0.1in}
=\ N(x(v))\left(\frac{1}{\frac{4\pi}{3}}\right)^{2}\frac{1}{\delta\delta^{\prime}}
\sum_{m=1}^{N_{v}}\sum_{n=1}^{N_{v}+1}M(x(v_{e_{m},\delta}))\left[\vert\overline{\phi}^{\delta^{\prime}}_{\hat{e}_{n}^{\delta}(0)}\left(\overline{\phi}^{\delta}_{\hat{e}_{m}(0)}\cdot{\bf s}(v,e_{m},k)\right)(v_{e_{m},\delta},e_{n}^{\delta},k)\rangle\right.\\
\hspace*{3.0in} \left.-\ 
\vert\left(\overline{\phi}^{\delta}_{\hat{e}_{m}(0)}\cdot{\bf s}(v,e_{m},k)\right)(v_{e_{m},\delta},e_{n}^{\delta},k)\rangle\right]\ +\ \dots
\end{array}
\end{equation}
where $\dots$ indicate states whose coefficients include $N(x(v))M(x(v))$ and which will drop out in the (finite-triangulation) commutator.\\
Once again we note that there in an implicit assumption in the above computation, namely that the action of $\hat{D}_{T(\delta^{\prime})}[M_{k}]$ is trivial on bivalent kink vertices. As before, this assumption would only be true for density $\frac{1}{6}$ constraint and as emphasized above, since our ultimate goal is to check off-shell closure for such constraints, we have used this as a key input in our analysis.\\
Whence the finite triangulation commutator between $D[N_{k}]$ and $D[M_{k}]$ is given by,
\begin{equation}
\begin{array}{lll}
\left(\hat{D}_{T(\delta^{\prime})}[M_{k}]\hat{D}_{T(\delta)}[N_{k}]\ -\ \hat{D}_{T(\delta^{\prime})}[N_{k}]\hat{D}_{T(\delta)}[M_{k}]\right)
 \vert{\bf s}\rangle\ =\\ 
\vspace*{0.1in}
=\ \left(\frac{1}{\frac{4\pi}{3}}\right)^{2}\frac{1}{\delta\delta^{\prime}}
\sum_{m=1}^{N_{v}}\sum_{n=1}^{N_{v}+1}\left(N(x(v))M(x(v_{e_{m},\delta}))-M\leftrightarrow N\right)\left[\vert\overline{\phi}^{\delta^{\prime}}_{\hat{e}_{n}^{\delta}(0)}\left(\overline{\phi}^{\delta}_{\hat{e}_{m}(0)}\cdot{\bf s}(v,e_{m},k)\right)(v_{e_{m},\delta},e_{n}^{\delta},k)\rangle\right.\\
\hspace*{4.0in}\left. -\ 
\vert\left(\overline{\phi}^{\delta}_{\hat{e}_{m}(0)}\cdot{\bf s}(v,e_{m},k)\right)(v_{e_{m},\delta},e_{n}^{\delta},k)\rangle\right]
\end{array}
\end{equation}
We can now use the decomposition of $\vert\overline{\phi}^{\delta}_{\hat{e}_{i}(0)}\cdot{\bf s}(v,e_{i},k)\rangle$ given in eq.(\ref{cylnet-J=0}) and use the fact that for $J_{s_{e_{m}}^{\delta}}\ =\ 0$ state, the vertex $v_{e_{m},\delta}$ is $N_{v}$ valent to obtain,
\begin{equation}\label{dec24-1}
\begin{array}{lll}
\left(\hat{D}_{T(\delta^{\prime})}[M_{k}]\hat{D}_{T(\delta)}[N_{k}]\ -\ \hat{D}_{T(\delta^{\prime})}[N_{k}]\hat{D}_{T(\delta)}[M_{k}]\right)
 \vert{\bf s}\rangle\ =\\ 
\vspace*{0.1in}
\left(\frac{1}{\frac{4\pi}{3}}\right)^{2}\frac{1}{\delta\delta^{\prime}}
\sum_{m=1}^{N_{v}}\sum_{n=1}^{N_{v}}\left(N(x(v))M(x(v_{e_{m},\delta}))-M\leftrightarrow N\right)\\
\hspace*{1.3in}\left[\vert\overline{\phi}^{\delta^{\prime}}_{\hat{e}_{n}^{\delta}(0)}\left(\overline{\phi}^{\delta}_{\hat{e}_{m}(0)}\cdot{\bf s}(v,e_{m},k;J_{s_{e_{m}}}^{\delta}=0)\right)(v_{e_{m},\delta},e_{n}^{\delta},k)\rangle\right.\\
\hspace*{3.5in}\left. -\ 
\vert\left(\overline{\phi}^{\delta}_{\hat{e}_{m}(0)}\cdot{\bf s}(v,e_{m},k;J_{s_{e_{m}}}^{\delta}=0)\right)(v_{e_{m},\delta},e_{n}^{\delta},k)\rangle\right]\\
\vspace*{0.1in}
\hspace*{4.7in}+\ \dots
\end{array}
\end{equation}
where $\dots$ in last line indicate all the states with $J_{s_{e_{m}}^{\delta}}\ \neq\ 0$.\\
Given an intertwiner ${\cal I}\ :\ j_{e_{1}}\otimes\dots\otimes j_{e_{N_{v}}}\rightarrow\ \mathbb{C}$, we can further expand the above equation as,
 \begin{equation}\label{dec24-2}
\begin{array}{lll}
\left(\hat{D}_{T(\delta^{\prime})}[M_{k}]\hat{D}_{T(\delta)}[N_{k}]\ -\ \hat{D}_{T(\delta^{\prime})}[N_{k}]\hat{D}_{T(\delta)}[M_{k}]\right)
 \vert{\bf s}\rangle\ =\\ 
\left(\frac{1}{\frac{4\pi}{3}}\right)^{2}\frac{1}{\delta\delta^{\prime}}
\sum_{m=1}^{N_{v}}\sum_{n=1}^{N_{v}}\left(N(x(v))M(x(v_{e_{m},\delta}))-M\leftrightarrow N\right)\\
\hspace*{1.3in}\left[\vert\overline{\phi}^{\delta^{\prime}}_{\hat{e}_{n}^{\delta}(0)}\left(\overline{\phi}^{\delta}_{\hat{e}_{m}(0)}\cdot{\bf s}(v,e_{m},k;J_{s_{e_{m}}}^{\delta}=0,{\cal I})\right)(v_{e_{m},\delta},e_{n}^{\delta},k)\rangle\right.\\
\hspace*{3.5in}\left. -\ 
\vert\left(\overline{\phi}^{\delta}_{\hat{e}_{m}(0)}\cdot{\bf s}(v,e_{m},k;J_{s_{e_{m}}}^{\delta}=0,{\cal I})\right)(v_{e_{m},\delta},e_{n}^{\delta},k)\rangle\right]\\
\vspace*{0.1in}
\hspace*{4.7in}+\ \dots
\end{array}
\end{equation}
where the $\dots$ not indicate all the remaining terms which arise from the decomposition
\begin{equation}
\begin{array}{lll}
\vert\overline{\phi}^{\delta}_{\hat{e}_{m}(0)}\cdot{\bf s}(v,e_{m},k)\rangle\ =\\ 
\vspace*{0.1in}
\vert\overline{\phi}^{\delta}_{\hat{e}_{m}(0)}\cdot{\bf s}(v,e_{m},k;J_{s_{e_{m}}}^{\delta}=0,{\cal I})\rangle\ +\ \\
\vspace*{0.1in}
\left[\sum_{{\cal I}^{\prime}\neq{\cal I}}\vert\overline{\phi}^{\delta}_{\hat{e}_{m}(0)}\cdot{\bf s}(v,e_{m},k;J_{s_{e_{m}}}^{\delta}=0,{\cal I}^{\prime})\rangle\ +\ \sum_{J_{s_{e_{m}}}^{\delta}\ \neq\ 0}\vert\overline{\phi}^{\delta}_{\hat{e}_{m}(0)}\cdot{\bf s}(v,e_{m},k;J_{s_{e_{m}}}^{\delta}\neq\ 0)\rangle\right]
\end{array}
\end{equation}
As we will see such states will not contribute in the continuum limit Habitat states have trivial projection along them.\\

Readers familiar with the constructions in \cite{tv} will notice that the singular deformations have analogous structure in $SU(2)$ theory as that in $U(1)^{3}$ theory. Additional complications simply arise due to insertion operators i.e. non-diagonal nature of inverse metric.\\

\section{Continuum limit of RHS}\label{RHS-section}
In this section we will compute
\begin{equation}\label{contlimitrhs}
\begin{array}{lll}
\textrm{RHS}\ =\ (-3)\lim_{\delta,\delta^{\prime}\rightarrow\ 0}\Psi^{{\cal F}}_{[{\bf s}]}\sum_{k=1}^{3}\left(\hat{D}_{T(\delta^{\prime})}[M_{k}]\hat{D}_{T(\delta)}[N_{k}]\ -\ M\leftrightarrow N\right)\vert{\bf s}^{\prime}\rangle
\end{array}
\end{equation}
$\forall\ \vert{\bf s}^{\prime}\rangle$.\\
Our analysis will proceed as follows. Recall that, given an intertwiner ${\cal I}\ :\ j_{e_{1}}\otimes\dots\otimes j_{e_{N_{v}}}$
\begin{displaymath}
[{\bf s}]\ :=\ \{{\bf c}\vert {\bf c}\in\ {\cal S}({\bf s}, {\cal I})\ \cup\ {\cal T}({\bf s},{\cal I})\}
\end{displaymath}
Where we have not defined ${\cal T}({\bf s},{\cal I})$ yet, and ${\cal S}({\bf s},{\cal I})$ is defined in  eq.(\ref{defofcalS}).\\
We will now show that\\
\noindent{\bf (1)} $\langle{\bf c}\vert\sum_{k=1}^{3}\left(\hat{D}_{T(\delta^{\prime})}[M_{k}]\hat{D}_{T(\delta)}[N_{k}]\ -\ M\leftrightarrow N\right)\vert{\bf s}^{\prime}\rangle\ =\ 0$ $\forall\ {\bf c}\ \in\ {\cal S}({\bf s}), \forall\ {\bf s}^{\prime}$.\\
\noindent{\bf (2)} We then construct the set ${\cal T}({\bf s},{\cal I})$, and complete the definition of $[{\bf s}]$.\\
\noindent{\bf (3)} We define the vertex tensors associated to ${\bf c}\ \in\ {\cal T}({\bf s},{\cal I})$ and this completes the definition of ${\cal F}$.\\
Using the above results, continuum limit of RHS is computed and is shown to match with the continuum limit of LHS.

\underline{{\bf Lemma}} : Consider a (gauge-variant) Spin-network ${\bf s}^{\prime}$ with a single, non-degenerate, $N_{v}$-valent vertex $v$. Let ${\bf s}$ be our canonical spin-network.
Then,
\begin{equation}\label{eq:dec22-1}
\langle{\bf c}\vert\sum_{k=1}^{3}\left(\hat{D}_{T(\delta^{\prime})}[M_{k}]\hat{D}_{T(\delta)}[N_{k}]\ -\ M\leftrightarrow N\right)\vert{\bf s}^{\prime}\rangle\ =\ 0
\end{equation}
$\forall\ {\bf s}^{\prime},\ {\bf c}\ \in\ {\cal S}({\bf s})$.\\
\begin{proof}
We first consider ${\bf s}^{\prime}\ =\ {\bf s}$. In this case, it is immediate that the action of $\hat{D}_{T(\delta^{\prime})}[M_{k}]\hat{D}_{T(\delta)}[N_{k}]$ will create $2(N_{v}-1)$ $C^{0}$ bivalent Kink vertices. As there is \emph{no} cylindrical-network in ${\cal S}({\bf s})$ whose underlying graph contains $2(N_{v}-1)$ $C^{0}$ kink vertices, it is clear that eq.(\ref{eq:dec22-1}) holds.\\
Now let ${\bf s}^{\prime}\ \neq\ {\bf s}$. In this case eq.(\ref{eq:dec22-1}) continues to hold simply due to the argument given above. However 
even without appealing to the detailed nature of Bi-valent vertices created by electric Diffeomorphism constraint eq.(\ref{eq:dec22-1}) is satisfied in this case. 
This can be seen as follows.\\
It is safe to consider the scenario where ${\bf s}^{\prime}$ has a single non-degenerate $N_{v}$ valent vertex with incident spins being some permutation of $(j_{e_{1}},\dots,j_{e_{N_{v}}})$. In this case the only way ${\bf s}^{\prime}\ \neq\ {\bf s}$ is if $\gamma({\bf s}^{\prime})\ \neq\ \gamma({\bf s})$. Hence there exists at least one edge $e\in\ E(\gamma({\bf s})),\ e^{\prime}\ \in\ E(\gamma({\bf s}^{\prime}))$ such that there is a ball of radius $\epsilon$ centered at $v$ such that inside the ball, $e\ \neq\ e^{\prime}$. As we are looking at the limit $\delta^{\prime},\delta\rightarrow\ 0$, consider 
$\delta^{\prime},\delta\ \ll\ \epsilon$. In order for $\langle{\bf c}\vert\sum_{k=1}^{3}\left(\hat{D}_{T(\delta^{\prime})}[M_{k}]\hat{D}_{T(\delta)}[N_{k}]\ -\ M\leftrightarrow N\right)\vert{\bf s}^{\prime}\rangle$ to not vanish, there must exist atleast one ${\bf c}\ \in\ {\cal S}({\bf s})$ whose graph coincides with  
$\gamma(\overline{\phi}^{\delta}_{\hat{e}_{m}(0)}({\bf s}(v;e_{m},k)))$ for some $m$. Now note that  the non-degenerate vertex for such a state is placed by $v_{e_{m},\delta}$ and the singular diffeomorphism $\overline{\phi}^{\delta}$ deforms the graph $\gamma({\bf s}^{\prime}$ in the neighborhood (of $v$) of radius $O(\delta)$. Whence for $\delta\ \ll\ \epsilon$, there is at least one edge segment 
belonging to $e^{\prime}\ \in\ {\bf s}^{\prime}$ which lies inside ${\cal B}(v,\epsilon)$ and is left unchanged by $\overline{\phi}^{\delta}$. Considering all the ${\bf c}\ \in\ {\cal S}({\bf s})$ which have the non-degenerate vertex at the same location $v_{e_{m},\delta}$, we notice that all such ${\bf c}$'s will have graphs which are deformed relative to $\gamma({\bf s})$, only in the nbd. of $v$ of radius $O(\delta)$, Whence for all such states, there is a segment in $e$ which is with in 
${\cal B}(v,\epsilon)$ but is un-displaced by Hamiltonian constraint deformations.
Thus for all such states, the underlying graph \emph{can not} coincide with the graph underlying $\overline{\phi}^{\delta}_{\hat{e}_{m}(0)}({\bf s}(v;e_{m},k))$ for any $m$. This completes the proof.
\end{proof}

\subsection{Construction of ${\cal T}({\bf s},{\cal I})$}\label{setforRHS}
In order to take the continuum limit of finite triangulation LHS, we first need to complete the definition of habitat state. As we saw in section(\ref{habitat}), habitat states are defined as 
\begin{equation}
\begin{array}{lll}
\Psi^{{\cal F}}_{[{\bf s}]}\ :=\ \sum_{{\bf c}\in{\cal S}({\bf s},{\cal I})}\kappa({\bf c}){\cal F}({\bf c})\langle{\bf c}\vert\ -\frac{1}{12} \sum_{{\bf c}^{\prime}\in{\cal T}({\bf s},{\cal I})}\kappa({\bf c}^{\prime}){\cal F}({\bf c}^{\prime})\langle{\bf c}^{\prime}\vert
\end{array}
\end{equation}
We first define the set ${\cal T}({\bf s},{\cal I})$, and then introduce the vertex tensors for ${\bf c}\ \in\ {\cal T}({\bf s},{\cal I})$ which completes the definition of vertex function ${\cal F}$.\\
The construction of this set is similar to the construction of ${\cal S}({\bf s},{\cal I})$. We consider all possible cylindrical networks obtained by the action of 
$\hat{D}_{T(\delta)}$, $\hat{D}_{T(\delta^{\prime})}\hat{D}_{T(\delta)}$ on $\vert{\bf s}\rangle$. The cylindrical networks so obtained can be re-expressed as a linear combination of states as in eq.(\ref{dec23-1}). Given an intertwiner, ${\cal I}$, we will use the states $\vert\overline{\phi}^{\delta}_{\hat{e}_{m}(0)}({\bf s}(k;e_{m},v;{\cal I}))\rangle$ and the states obtained by action of $\hat{D}_{T(\delta^{\prime})}$ on such states to generate ${\cal T}({\bf s},{\cal I})$.\\ 
Whence consider following sets of cylindrical-networks for each $k\ \in\ \{1,2,3\}$.
\begin{equation}\label{dec25-6}
\begin{array}{lll}
({\bf s})^{k}\ :=\ \{{\bf c}\vert{\bf c}\ =\ {\bf s}(k;e,v)\ \forall\ e\vert_{b(e)=v}\}\\
\vspace*{0.1in}
[{\bf s}]_{\delta}^{k,(m)}\ =\\
\{{\bf c}\vert{\bf c}\ :=\ \overline{\phi}^{\delta}_{\hat{e}_{m}(0)}\left({\bf s}(k;e_{m},v;J_{s_{e_{m}}^{\delta}}=0;{\cal I})\right),\ e_{m}\vert_{b(e_{m})=v},\ \delta\leq \delta_{0}({\bf s})\}\\
\vspace*{0.1in}
[[{\bf s}]_{\delta}^{k,(m)}]^{k,(n)}\ :=\\
\{{\bf c}\vert{\bf c}\ =\ \overline{\phi}^{\delta}_{\hat{e}_{m}(0)}\left({\bf s}(k;e_{m},v;J_{s_{e_{m}}^{\delta}}=0,{\cal I})\right)(k;e_{n}^{\delta},v),\ e_{m}\vert_{b(e_{m})=v}, \delta^{\prime}\ll\delta\leq \delta_{0}({\bf s}), n\neq m\}\\
\vspace*{0.1in}
[[{\bf s}]_{\delta}^{k,(m)}]^{k,(m)}\ :=\\
\{{\bf c}\vert{\bf c}\ =\ \overline{\phi}^{\delta}_{\hat{e}_{m}(0)}\left({\bf s}(k;e_{m},v;J_{s_{e_{m}}^{\delta}}=0,{\cal I})\right)(k;e_{m}^{\delta},v),\ e_{m}\vert_{b(e_{m})=v}, \delta^{\prime}\ll\delta\leq \delta_{0}({\bf s})\}\\
\vspace*{0.1in}
([[{\bf s}]_{\delta}^{k,(m)}]_{\delta^{\prime}}^{k,(n)})\ =\\
\{{\bf c}\vert{\bf c}\ =\ \overline{\phi}^{\delta^{\prime}}_{\hat{e_{n}^{\delta}}(0)}\left(\overline{\phi}^{\delta}_{\hat{e}_{m}(0)}\left({\bf s}(k;e_{m},v;J_{s_{e_{m}}^{\delta}}=0;{\cal I})\right)(k;e_{n}^{\delta},v)\right),\ e_{m}\vert_{b(e_{m})=v}, \delta^{\prime}\ll\delta\leq \delta_{0}({\bf s}), e_{n}^{\delta}\neq e_{m}^{\delta}\}\\
\vspace*{0.1in}
([[{\bf s}]_{\delta}^{k,(m)}]_{\delta^{\prime}}^{k,(m)})\ =\\
\{{\bf c}\vert{\bf c}\ =\ \overline{\phi}^{\delta^{\prime}}_{\hat{e_{m}^{\delta}}(0)}\left(\overline{\phi}^{\delta}_{\hat{e}_{m}(0)}\left({\bf s}(k;e_{m},v;J_{s_{e_{m}}^{\delta}}=0;{\cal I})\right)(k;e_{m}^{\delta},v)\right),\ e_{m}\vert_{b(e_{m})=v}, \delta^{\prime}\ll\delta\leq \delta_{0}({\bf s})\}
\end{array}
\end{equation}
The superscript $k,(m)$ denotes that we have an insertion along edge $e_{m}$ at $b(e_{m})$ of insertion operator $\hat{\tau}_{k}$. The subscript $\delta$ implies that the states are further deformed by $\overline{\phi}^{\delta}_{\hat{e}_{m}(0)}$. 

${\cal T}({\bf s},{\cal I})$ is defined as,
\begin{equation}
\begin{array}{lll}
{\cal T}({\bf s},{\cal I})\ :=\\
\cup_{k}\left[\cup_{m}\cup_{n\vert n\neq m}[[{\bf s}]_{\delta}^{k,(m)}]^{k,(n)}\ \cup\ \cup_{m}[[{\bf s}]_{\delta}^{k,(m)}]^{k,(m)}\ \bigcup\ 
\cup_{m}\cup_{n\vert n\neq m}([[{\bf s}]_{\delta}^{k,(m)}]_{\delta^{\prime}}^{k,(n)})\ \cup\ ([[{\bf s}]_{\delta}^{k,(m)}]_{\delta^{\prime}}^{k,(m)})\right]
\end{array}
\end{equation}
\begin{lemma}

\begin{equation}
\begin{array}{lll}
\langle{\bf c}\vert{\bf c}^{\prime}\rangle\ =\ 0
\end{array}
\end{equation}
$\forall\ {\bf c}\ \in\ {\cal S}({\bf s},{\cal I}),\ {\bf c}^{\prime}\ \in\ {\cal T}({\bf s},{\cal I})$.
\end{lemma}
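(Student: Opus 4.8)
The plan is to prove the stronger statement that no cylindrical network in ${\cal S}({\bf s},{\cal I})$ is supported on the same graph as any cylindrical network in ${\cal T}({\bf s},{\cal I})$; since the kinematical inner product vanishes between cylindrical functions supported on distinct graphs, this gives $\langle{\bf c}\vert{\bf c}'\rangle=0$ at once. Both ${\cal S}({\bf s},{\cal I})$ (eq.(\ref{defofcalS})) and ${\cal T}({\bf s},{\cal I})$ (the union following eq.(\ref{dec25-6})) are finite unions of the explicitly described families $[{\bf s}]^{(i,j)}_{(I)}$, $[{\bf s}]^{(m)(i,j)}_{(I,J)}$ and $[[{\bf s}]^{k,(m)}_{\delta}]^{k,(n)}$, $([[{\bf s}]^{k,(m)}_{\delta}]^{k,(n)}_{\delta'})$ respectively, and — by Section \ref{deformedstates} and Section \ref{analysisofRHS} — each element of either family is a finite linear combination of spin-networks supported on a single graph. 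So it suffices, by bilinearity, to compare graphs family by family, for all $\delta,\delta'\le\delta_0$.

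The comparison rests on two structural dichotomies. First, every ${\bf c}'\in{\cal T}({\bf s},{\cal I})$ is obtained from $\vert{\bf s}\rangle$ by at least one electric-diffeomorphism deformation $\overline{\phi}^{\delta}_{\hat e_m(0)}$ together with grasping insertions $\hat\tau_k$; as recalled around Figure \ref{fig6} and in Section \ref{analysisofRHS}, this deformation (i) displaces \emph{all} $N_v$ edges incident at $v$ rigidly along one direction, so that the unique non-degenerate vertex is moved off $v$ (to $v_{e_m,\delta}$, or further after a second deformation) and only a reducible-representation segment $s^{\delta}_{e_m}$ — which the imposed condition $J_{s^{\delta}_{e_m}}=0$ renders trivial — remains attached to $v$, and (ii) creates \emph{only} $C^{0}$ kink vertices. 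Second, a Hamiltonian-image ${\bf c}\in{\cal S}({\bf s},{\cal I})$ falls into exactly one of two profiles incompatible with the above. If ${\bf c}\in[{\bf s}]^{(i,j)}_{(3)}\cup[{\bf s}]^{(i,j)}_{(4)}$ (Figures \ref{figure3}, \ref{figure4}) the deformation only touches the pair $(e_i,e_j)$ and leaves the other $N_v-2\ge 1$ edges incident at $v$ intact, so $\gamma({\bf c})$ has non-trivial edge structure at $v$ while every $\gamma({\bf c}')$ does not — the graphs differ. In all remaining cases — the $I\in\{1,2\}$ single-constraint states and \emph{every} doubly-deformed state $[{\bf s}]^{(m)(i,j)}_{(I,J)}$ (whose first step is the primed constraint, which carries the planar arcs) — the deformation uses the curvature loop $\alpha(s^{\delta}_{e_i},s^{\delta}_{e_j})$ and the arcs $a^{\delta}_{ip}$, and by our choice of loops in point (b) below eq.(\ref{jan1-1}) at least one $C^{1}$ (tangent defined, acceleration undefined) bivalent kink is created at $f(a^{\delta}_{ip})$ for some $p\notin\{i,j\}$ — such an edge $e_p$ exists because $v$ is non-degenerate, hence $N_v\ge 3$. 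Since every ${\bf c}'\in{\cal T}$ has no $C^{1}$ kink at all, the graphs again differ. This also disposes of the only arrangement in which the grasping data could look alike — a single primed Hamiltonian state with $\sum_i\tau_i\vert_{e_i}\otimes\tau_i\vert_{e_j}$ versus a $\hat D$–$\hat D$ state with $\sum_k\tau_k\vert_{e_m}\otimes\tau_k\vert_{e^{\delta}_n}$ — because the $C^{1}$-kink data and the one-step-versus-nested-two-step displacement of the non-degenerate vertex already separate the graphs. Combined with the already established relations (\ref{orthoproperty}) and Lemmas \ref{orthogonality-3}, \ref{kinklemma}, \ref{orthogonal-intertwiner} (the last covering ${\cal I}\neq{\cal I}'$), this yields the claim.

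The main obstacle is bookkeeping rather than conceptual: one must verify that \emph{every} pair of building blocks from these two fairly large unions is covered — in particular the doubly-deformed families of each type against each other, where both sides carry displaced non-degenerate vertices and comparable numbers of graspings — and that the $C^{1}$-kink count is genuinely forced in each Hamiltonian case (this is exactly where the choice of planar arcs in eq.(\ref{jan1-1}) and the non-degeneracy of $v$ are used, and it is the reason those choices were built in). A secondary, modeling-level point that should be stated explicitly is that a $C^{1}$-but-not-$C^{2}$ edge and a $C^{0}$-but-not-$C^{1}$ edge through a common point are distinct curves in $\Sigma$, so that this differentiability data really does enter the kinematical inner product; this is the same convention already invoked in the proof of (\ref{orthoproperty}) and of Lemma \ref{kinklemma}, so no new input is needed beyond a one-line remark.
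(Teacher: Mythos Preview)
Your proof is correct. The paper takes a shorter route: it simply observes that every spin-network component of any ${\bf c}'\in{\cal T}({\bf s},{\cal I})$ carries $N_v-1$ bivalent $C^{0}$ kink vertices (produced by a single application of the electric diffeomorphism $\overline{\phi}^{\delta}$, cf.\ Figure~\ref{fig6}), whereas no spin-network component of any ${\bf c}\in{\cal S}({\bf s},{\cal I})$ has that many --- the graphs therefore differ and orthogonality follows.

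Your argument is organized differently. Rather than counting $C^{0}$ kinks, you split ${\cal S}$ into two sectors and use a distinct invariant for each: (i) for $I\in\{3,4\}$ the residual non-trivial edge structure at the original vertex $v$, absent in every ${\cal T}$ state once $J_{s^{\delta}_{e_m}}=0$ is imposed; (ii) for $I\in\{1,2\}$ and for all doubly-deformed states the presence of at least one $C^{1}$ kink coming from the planar arcs $a^{\delta}_{ip}$, a feature no ${\cal T}$ state possesses. Both strategies are graph-based and yield the same conclusion. What your approach buys is explicitness about the case split and a criterion that stays sharp at the minimal valence $N_v=3$, where the raw $C^{0}$-kink count of a singly-deformed ${\cal T}$ state ($N_v-1=2$) can coincide with that of a doubly-deformed ${\cal S}$ state and one needs an auxiliary feature (your $C^{1}$ kinks, or the location of the non-degenerate vertex) to finish the argument. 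The paper's version buys brevity.
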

\begin{proof}
 Let $\vert{\bf c}^{\prime}\rangle\ =\ \sum_{I}a_{I}\vert{\bf s}_{I}\rangle,\ \vert{\bf c}^{\prime}\rangle\ =\ \sum_{J}b_{J}\vert{\bf s}^{\prime}_{J}\rangle$, then if $\langle{\bf c}\vert{\bf c}^{\prime}\rangle\ \neq\ 0$ implies that there exists at least one ${\bf s}^{\prime}_{J}$ such that $\langle{\bf c}\vert{\bf s}_{I}^{\prime}\rangle\ \neq 0$. However this is impossible as the graph underlying ${\bf s}_{J}^{\prime}$ has $N_{v}-1$ $C^{0}$ bivalent vertices, and there exists no ${\bf s}_{I}$ in the linear sum of ${\bf c}$ which has $N_{v}-1$ $C^{0}$ bivalent vertices. Whence any cylindrical-network state belonging to ${\cal S}$ is orthogonal to any cylindrical-network state belonging to ${\cal T}({\bf s})$.
\end{proof}

\subsection{Vertex function associated to ${\cal T}({\bf s},{\cal I})$}
In this section we finally give the complete definition of habitat states by defining ${\cal F}({\bf c})\ \forall\ {\bf c}\ \in\ {\cal T}({\bf c},{\cal I})$. 
As we saw in section (\ref{habitat}), We defined ${\cal F}({\bf c})\ \forall\ {\bf c}\ \in\ {\cal S}$ as follows.
\begin{equation}
{\cal F}({\bf c})\ :=\ f(v({\bf c}))\lambda({\bf c})_{MN}
\end{equation}
Where, given any ordered pair of (colored) edges $(e_{m}, e_{n})\ \in\ (E(\gamma({\bf s})),E(\gamma({\bf s})))$ we associated to it a fixed pair $(M_{m},N_{n})\vert_{(M_{m}\neq N_{n})}\ \in\ \{(1,\dots,2j_{e_{m}}+1);(1,\dots,2j_{e_{n}}+1)\}$. We use precisely these pairs in the construction of vertex tensors for all the cylindrical networks contained in ${\cal T}({\bf s},{\cal I})$.\\
Recall that we associated to each ${\bf c}\ \in\ {\cal S}({\bf s},{\cal I})$ a 
 vertex tensor (an element of $T^{N}(su(2)^{\star})$) by examining the naive continuum limit of the finite triangulation commutator. Each of this vertex tensor is just a matrix element $(\pi_{j_{e}}(v)\otimes\pi_{j_{e^{\prime}}}(w))_{M_{e}N_{e^{\prime}}}$ with $v,w\ \in\ su(2)$. In order to be able to compare the continuum limit of LHS with that of RHS, we need to choose vertex tensors of the same type for ${\bf c}\ \in\ {\cal T}({\bf s},{\cal I})$. This requirement motivates the following definition of vertex tensors for cylindrical networks in ${\cal T}({\bf s},{\cal I})$. 
\begin{equation}\label{dec25-7}
\begin{array}{lll}
\lambda({\bf c})\ =\ \sum_{i}\left(\pi_{j_{e_{m}}}(\tau_{i}\cdot\tau_{j})\otimes\pi_{j_{e_{n}}}(\tau_{i}\cdot\tau_{j})\right)\\
\hspace*{2.9in}\textrm{if}\ {\bf c}\ \in\ [[{\bf s}]_{\delta}^{k,(m)}]^{k,(n)}\\
\hspace*{2.9in} \textrm{or if}\ {\bf c}\ \in\ ([[{\bf s}]_{\delta}^{k,(m)}]_{\delta^{\prime}}^{k,(n)})\\
\vspace*{0.2in}
\sum_{n,n\neq m}\lambda({\bf c})\ :=\ \sum_{n\vert n\neq m}\sum_{i}\left(\pi_{j_{e_{m}}}(\tau_{i}\cdot\tau_{j}\cdot\tau_{j})\otimes\pi_{j_{e_{n}}}(\tau_{i})\right)\\
\hspace*{2.9in}\textrm{if}\ {\bf c}\ \in\ [[{\bf s}]_{\delta}^{k,(m)}]^{k,(m)}\\
\vspace*{0.1in}
\hspace*{2.9in}\textrm{or if}\ {\bf c}\ \in\ ([[{\bf s}]_{\delta}^{k,(m)}]_{\delta^{\prime}}^{k,(m)})
\end{array}
\end{equation}
Careful reader can convince herself that the ``naive continuum limit" of LHS will produce combination of insertion operators which occurs in these vertex tensors.\\
We can now complete the definition of states in the Habitat space by specifying evaluation of ${\cal F}$ on ${\bf c}\ \in\ {\cal T}({\bf s},{\cal I})$.
\begin{equation}\label{dec25-5}
\begin{array}{lll}
{\cal F}({\bf c})\ =\ f(v({\bf c}))\lambda({\bf c})_{M_{m}N_{n}}\\
\hspace*{2.9in}\textrm{if}\ {\bf c}\ \in\ [[{\bf s}]_{\delta}^{k,(m)}]^{k,(n)}\\
\hspace*{2.9in} \textrm{or if}\ {\bf c}\ \in\ ([[{\bf s}]_{\delta}^{k,(m)}]_{\delta^{\prime}}^{k,(n)})\\
\vspace*{0.1in}
{\cal F}({\bf c})\ =\ f(v({\bf c}))\sum_{n,n\neq m}\lambda({\bf c})_{M_{m}N_{n}}\\
\hspace*{2.9in}\textrm{if}\ {\bf c}\ \in\ [[{\bf s}]_{\delta}^{k,(m)}]^{k,(m)}\\
\vspace*{0.1in}
\hspace*{2.9in}\textrm{or if}\ {\bf c}\ \in\ ([[{\bf s}]_{\delta}^{k,(m)}]_{\delta^{\prime}}^{k,(m)})
\end{array}
\end{equation}
where $v({\bf c})$ is the unique non-degenerate vertex associated to ${\bf c}$.

\subsection{Continuum Limit of RHS}
In this section we finally like to compute $\lim_{\delta\rightarrow 0}\lim_{\delta^{\prime}\rightarrow 0}\Psi^{{\cal F}}_{[{\bf s}]}\left(\hat{D}_{T(\delta^{\prime})}[M_{i}]\hat{D}_{T(\delta)}[N_{i}]\right)\vert{\bf s}\rangle$.\\
We believe enough has been spoken already about how such a computation is done, so without boring the reader further we straight away compute the continuum limit.
Using eq.(\ref{dec24-2}) and the fact that 
\begin{equation}
\begin{array}{lll}
\Psi^{{\cal F}}_{[{\bf s}]}\left(\vert\phi^{\delta}_{\hat{e}_{m}(0)}\cdot{\bf s}(v,e_{m},k;J_{s_{e_{m}}^{\delta}}\ =\ 0,{\cal I}^{\prime})\rangle\right) =\ 0\ \forall\ {\cal I}^{\prime}\ \neq\ {\cal I}\\
\vspace*{0.1in}
\Psi^{{\cal F}}_{[{\bf s}]}\left(\vert\phi^{\delta}_{\hat{e}_{m}(0)}\cdot{\bf s}(v,e_{m},k;J_{s_{e_{m}}^{\delta}}\ \neq\ 0)\rangle\right)\ =\ 0
\end{array}
\end{equation}
It is easy to see that the relevant terms that survive in the continuum limit will be,
\begin{equation}
\begin{array}{lll}
(-3)\sum_{k}\left[\hat{D}_{T(\delta^{\prime})}[M_{k}]\hat{D}_{T(\delta)}[N_{k}]\ -\ M\leftrightarrow N\right]\vert{\bf s}\rangle=\\
\vspace*{0.1in}
\left(\frac{1}{\frac{4\pi}{3}}\right)^{2}\frac{1}{\delta\delta^{\prime}}
\sum_{m=1}^{N_{v}}\sum_{n=1}^{N_{v}+1}\left(N(x(v))M(x(v_{e_{m},\delta}))-M\leftrightarrow N\right)\\
\left[\vert\overline{\phi}^{\delta^{\prime}}_{\hat{e}_{n}^{\delta}(0)}\left(\overline{\phi}^{\delta}_{\hat{e}_{m}(0)}\cdot{\bf s}(v,e_{m},k)\right)(v_{e_{m},\delta},e_{n}^{\delta},k)\rangle-\ 
\vert\left(\overline{\phi}^{\delta}_{\hat{e}_{m}(0)}\cdot{\bf s}(v,e_{m},k)\right)(v_{e_{m},\delta},e_{n}^{\delta},k)\rangle\right]\\
\vspace*{0.1in}
\thickapprox\ \left(\frac{1}{\frac{4\pi}{3}}\right)^{2}\frac{1}{\delta\delta^{\prime}}
\sum_{m=1}^{N_{v}}\sum_{n=1}^{N_{v}}\left(N(x(v))M(x(v_{e_{m},\delta}))-M\leftrightarrow N\right)\\
\left[\vert\overline{\phi}^{\delta^{\prime}}_{\hat{e}_{n}^{\delta}(0)}\left(\overline{\phi}^{\delta}_{\hat{e}_{m}(0)}\cdot{\bf s}(v,e_{m},k;J_{s_{e_{m}}^{\delta}}=0,{\cal I})\right)(v_{e_{m},\delta},e_{n}^{\delta},k)\rangle-\ 
\vert\left(\overline{\phi}^{\delta}_{\hat{e}_{m}(0)}\cdot{\bf s}(v,e_{m},k;J_{s_{e_{m}}^{\delta}}=0,{\cal I})\right)(v_{e_{m},\delta},e_{n}^{\delta},k)\rangle\right]\\
\vspace*{0.1in}
=\sum_{m=1}^{N_{v}}\left(N(x(v))M(x(v_{e_{m},\delta}))-M\leftrightarrow N\right)\\
\left[\vert\overline{\phi}^{\delta^{\prime}}_{\hat{e}_{m}^{\delta}(0)}\left(\overline{\phi}^{\delta}_{\hat{e}_{m}(0)}\cdot{\bf s}(v,e_{m},k;J_{s_{e_{m}}^{\delta}}=0,{\cal I})\right)(v_{e_{m},\delta},e_{m}^{\delta},k)\rangle-\ 
\vert\left(\overline{\phi}^{\delta}_{\hat{e}_{m}(0)}\cdot{\bf s}(v,e_{m},k;J_{s_{e_{m}}^{\delta}}=0,{\cal I})\right)(v_{e_{m},\delta},e_{m}^{\delta},k)\rangle\right]\\
\vspace*{0.1in}
+\sum_{m=1}^{N_{v}}\sum_{n=1\vert n\neq m}^{N_{v}}\left(N(x(v))M(x(v_{e_{m},\delta}))-M\leftrightarrow N\right)\\
\left[\vert\overline{\phi}^{\delta^{\prime}}_{\hat{e}_{n}^{\delta}(0)}\left(\overline{\phi}^{\delta}_{\hat{e}_{m}(0)}\cdot{\bf s}(v,e_{m},k;J_{s_{e_{m}}^{\delta}}=0,{\cal I})\right)(v_{e_{m},\delta},e_{n}^{\delta},k)\rangle-\ 
\vert\left(\overline{\phi}^{\delta}_{\hat{e}_{m}(0)}\cdot{\bf s}(v,e_{m},k;J_{s_{e_{m}}^{\delta}}=0,{\cal I})\right)(v_{e_{m},\delta},e_{n}^{\delta},k)\rangle\right]
\end{array}
\end{equation}
We can now use eqs. (\ref{dec25-6}),(\ref{dec25-7}) and (\ref{dec25-5}) to conclude that
\begin{equation}\label{dec25-10}
\begin{array}{lll}
(\frac{3}{4\pi^{2}})(-3)\lim_{\delta,\delta^{\prime}\rightarrow 0}\sum_{k}\Psi^{{\cal F}}_{[{\bf s}]}\left(\left[\hat{D}_{T(\delta^{\prime})}[M_{k}]\hat{D}_{T(\delta)}[N_{k}]\ -\ M\leftrightarrow N\right]\vert{\bf s}\rangle\right)=\\
\vspace*{0.1in}
\lim_{\delta\rightarrow 0}\lim_{\delta^{\prime}\rightarrow 0}\\
\sum_{m=1}^{N_{v}}\left(N(x(v))M(x(v_{e_{m},\delta}))-M\leftrightarrow N\right)\sum_{n=1\vert n\neq m}^{N_{v}}\sum_{n\vert n\neq m}\sum_{i}\left(\pi_{j_{e_{m}}}(\tau_{i}\cdot\tau_{j}\cdot\tau_{j})\otimes\pi_{j_{e_{n}}}(\tau_{i})\right)_{M_{m}N_{n}}\\
\left(f(v+\delta\hat{e}_{m}(0)+\delta^{\prime}\hat{e}_{m}^{\delta}(0))\ -\ (f(v+\delta\hat{e}_{m}(0))\right)\\
\vspace*{0.1in}
+ \sum_{m=1}^{N_{v}}\sum_{n\vert n\neq m}^{N_{v}}\left(N(x(v))M(x(v_{e_{m},\delta}))-M\leftrightarrow N\right)\left(\pi_{j_{e_{m}}}(\tau_{i}\cdot\tau_{j})\otimes\pi_{j_{e_{n}}}(\tau_{i}\cdot\tau_{j})\right)_{M_{m}N_{n}}\\
\left(f(v+\delta\hat{e}_{m}(0)+\delta^{\prime}\hat{e}_{n}^{\delta}(0))\ -\ (f(v+\delta\hat{e}_{m}(0))\right)
\end{array}
\end{equation}
Note that 
\begin{equation}\label{dec25-8}
\begin{array}{lll}
\left(\pi_{j_{e_{m}}}(\tau_{i}\cdot\tau_{j}\cdot\tau_{j})\otimes\pi_{j_{e_{n}}}(\tau_{i})\right)_{M_{m}N_{n}}\ =\ \left(\pi_{j_{e_{m}}}(\tau_{i})\otimes\pi_{j_{e_{n}}}(\tau_{i})\right)_{M_{m}N_{n}}\\
\vspace*{0.1in}
\left(\pi_{j_{e_{m}}}(\tau_{i}\cdot\tau_{j})\otimes\pi_{j_{e_{n}}}(\tau_{i}\cdot\tau_{j})\right)_{M_{m}N_{n}}\ =\\
\left(\pi_{j_{e_{m}}}(\tau_{k})\otimes\pi_{j_{e_{n}}}(\tau_{k})\right)_{M_{m}N_{n}}
+\ \left(\pi_{j_{e_{m}}}(\tau_{i}\cdot\tau_{j})\otimes\pi_{j_{e_{n}}}(\delta_{ij}{\bf 1})\right)_{M_{m}N_{n}}\\
= \left(\pi_{j_{e_{m}}}(\tau_{k})\otimes\pi_{j_{e_{n}}}(\tau_{k})\right)_{M_{m}N_{n}}
\end{array}
\end{equation}
where in the last line we have used the fact that as we are only considering non-diagonal elements as the vertex tensors (i.e. $M_{m}\ \neq\ N_{n}$), 
\begin{displaymath}
\left(\pi_{j_{e_{m}}}(\tau_{i}\cdot\tau_{j})\otimes\pi_{j_{e_{n}}}(\delta_{ij}{\bf 1})\right)_{M_{m}N_{n}}\ =\ \left(\pi_{j_{e_{m}}}({\bf 1})\otimes\pi_{j_{e_{n}}}({\bf 1})\right)_{M_{m}N_{n}}\ =\ 0
\end{displaymath}
Using eq.(\ref{dec25-8}) in eq.(\ref{dec25-10}), we get
\begin{equation}
\begin{array}{lll}
(\frac{3}{4\pi^{2}})(-3)\lim_{\delta,\delta^{\prime}\rightarrow 0}\sum_{k}\Psi^{{\cal F}}_{[{\bf s}]}\left(\left[\hat{D}_{T(\delta^{\prime})}[M_{k}]\hat{D}_{T(\delta)}[N_{k}]\ -\ M\leftrightarrow N\right]\vert{\bf s}\rangle\right)=\\
\frac{1}{4}(\frac{3}{4\pi^{2}})\sum_{m=1}^{N_{v}}\sum_{n=1\vert n\neq m}^{N_{v}}\hat{e}^{a}_{m}(0)\left(M\partial_{a}N\ -\ N\partial_{a}M\right)\left(\pi_{j_{e_{m}}}(\tau_{i})\otimes\pi_{j_{e_{n}}}(\tau_{i})\right)_{M_{m}N_{n}}\hat{e}^{b}_{m}(0)\partial_{b}f(v)+\\
\vspace*{0.1in}
\frac{1}{4}(\frac{3}{4\pi^{2}})\sum_{m=1}^{N_{v}}\sum_{n=1\vert n\neq m}^{N_{v}}\hat{e}^{a}_{m}(0)\left(M\partial_{a}N\ -\ N\partial_{a}M\right)\left(\pi_{j_{e_{m}}}(\tau_{i})\otimes\pi_{j_{e_{n}}}(\tau_{i})\right)_{M_{m}N_{n}}\hat{e}^{b}_{n}(0)\partial_{b}f(v)
\end{array}
\end{equation}
Comparing the above equation with eq. (\ref{mainresultforlhs}) we arrive at our main result : Continuum limit of $\left[((\delta^{\prime})^{2}\hat{H}_{T(\delta^{\prime})}[N])(\delta^{2}\hat{H}_{T(\delta)}[M])\ -\ M\leftrightarrow\ N\right]$ on the Habitat generated by $\Psi^{{\cal F}}_{[{\bf s}]}$ equals continuum limit of $-3\sum_{i}\left[((\delta^{\prime})^{2}\hat{D}_{T(\delta^{\prime})}[N^{i}])(\delta^{2}\hat{D}_{T(\delta)}[M^{i}])\ -\ M\leftrightarrow\ N\right]$.

\section{Discussions and Open issues}\label{conclusions}
Dynamical equations of canonical Loop Quantum Gravity are still poorly understood. Despite the remarkable initial results of Thiemann, there is no consensus on a physically viable definition of Hamiltonian constraint in LQG. There are in our opinion two main reasons for this. One being that out of infinitely many possible choices for the Hamiltonian constraint, it is not clear which (if any) is a physically viable choice. Second reason is that there are strong indications that the original definition of Hamiltonian constraint while internally consistent (a stunning feat in itself) is perhaps anomalous but the anomaly remains hidden due to its specific density weight.\\
In this work we hope to have shown that continuing along the recent directions provided by the analysis of several toy models and Diffeomorphism constraint within LQG, there is a continuum Hamiltonian constraint operator which has the right seeds to generate an anomaly free Quantum Dirac algebra.\\
More in detail we started with a classical Hamiltonian constraint of density weight two. We quantized it (at finite regularization) on the spin-network states of the theory such that it captures two essential ideas which have arose out of the recent analysis of the toy models.\\ 
\noindent {\bf (i)} its action on spin-network states mimics in a precise sense the action of classical Hamiltonian constraint on holonomy functionals.\\
\noindent{\bf (ii)} The new vertices created the constraint operator are non-degenerate in the sense that they would not be annihilated by invest volume operator.\\
Second criteria implies that the results displayed in this paper may remain robust upon use of classical constraint of density weight $\frac{4}{3}$.\\
We then showed that even though the quantum constraint so defined is too singular for the definition of a continuum limit, a rescaled (rescaled by regularization scale $\delta$) constraint operator is just singular enough so that its limit exists on a generalization of Lewandowski-Marolf Habitat space.\\
We then proved that on ${\cal V}_{hab}$, $\hat{H}[N]$ satisfies eq.(\ref{finaleqn}). This is our main result.\\

We will now discuss several unpleasant aspects of our work which we hope to address in future investigations.\\
It is not clear to us if the finite triangulation constraint operator $\hat{H}_{T(\delta)}[N]$ is (or is not) gauge-invariant.\footnote{We have been unable to convincingly verify the behavior of $\hat{H}_{T(\delta)}[N]$ under gauge transformations.} The issue is complicated by the fact that we have modified the original constraint by ``tensoring" with higher order (in $\delta$) operators. 
This is not entirely unprecedented as e.g. at finite regularization, the Hamiltonian constraint defined in the old loop representation is also not gauge invariant, as the insertion operators act at different points along the loop,\cite{brugmann}. However what is essential is that the continuum Hamiltonian constraint be gauge-invariant. We leave the detailed analysis of this question for future work. 
\linebreak
Second aspect of our work which needs further scrutiny is related to the introduction of vertex tensors. Unlike previous works where Habitat  states were used to probe the vertex structure of spin-networks. The habitat states introduced in this paper only probes the vertices of cylindrical-networks. As cylindrical networks are finite linear combination of spin-nets, they are in a certain sense obtained by coarse graining over the intertwiner data associated to spin-nets. We have tried to recover this information by associating to each cylindrical-network a vertex tensor, an element in $T(su(2))$. These vertex tensors are deduced by analyzing a heuristic short-distance limit of the finite triangulation commutator. We firmly believe that a more detailed analysis of the action of the Hamiltonian constraint, where each cylindrical network is decomposed into spin-network state will naturally bypass this need to introduce vertex tensors. The information related to intertwiner dynamics will be captured in the coefficients of the Matrix elements of the type $\Psi^{f}_{[{\bf s}]}\left(\hat{H}_{T(\delta^{\prime})}[N]\hat{H}_{T(\delta)}[M]\ -\ N\leftrightarrow M\right)\vert{\bf s}\rangle$, where $\Psi^{f}_{[{\bf s}]}$ is once again some habitat state but which will be closer to the Habitat states in \cite{tv},\cite{hlt1} in the sense that they will only need to probe the vertices of kinematical states.\\
However we certainly hope that our work here at the very least provides Pointers to a strong plausibility that Euclidean canonical LQG will be Anomaly-free and space-time covariant.\\
\section{Acknowledgements}
The author is deeply indebted to Abhay Ashtekar for permission to use his results regarding the action of the Hamiltonian constraint on flux field prior to publication, stimulating discussions regarding the status of Hamiltonian constraint in LQG, as well as for his constant encouragement. We are also extremely grateful to Miguel Campiglia for discussions, insightful remarks, and for urging the author to write up these results. We are  grateful to Kinjal Banerjee, Ghanshyam Date, Marc Geiler, Norbert Bodendorfer, Sandipan Sengupta  and Casey Tomlin for useful discussions, and to Casey Tomlin for patiently answering author's questions related to \cite{tv}. 
A special thanks to Madhavan Varadarajan for sharing his remarkable insights related to Hamiltonian constraint and dynamical issues in LQG over the years and to Soumya Paul for making all the figures. We also thank Ivan Agullo and Sumati Surya  for encouragement. We thank the Institute of gravity and cosmos at Penn state university for their hospitality where part of this work was done. The author is supported by Ramanujan Fellowship of the Department of Science and Technology.

\appendix

\section{Action of $H[N]$ on Triad fields}\label{HactiononE}
In this appendix, we derive eq.(\ref{july7(7)}) from eq.(\ref{july7(6)}) exactly following \cite{abhaycalc}.\\

\begin{equation}\label{eqjuly7(3)}
\begin{array}{lll}
2\partial_{a}(N\tilde{E}^{a}_{i} \tilde{E}^{b}_{j}\epsilon^{ijk})\ =\ 2\partial_{a}(N_{0}e^{a}_{i}\tilde{E}^{b}_{j}\epsilon^{ijk})\\
\vspace*{0.1in}
=\ 2(\partial_{a}N_{0})\ e^{a}_{jk}\tilde{E}^{b}_{j}\ +\ 2N_{0}e^{a}_{jk}\partial_{a}\sqrt{q}\ e^{b}_{j} + 2N_{0}\tilde{E}^{b}_{j}\partial_{a}e^{a}_{jk}\\
\vspace*{0.1in}
=\ 2e^{a}_{jk}\tilde{E}^{b}_{j}(\partial_{a}N_{0}) + 2N_{0}e^{a}_{jk}e^{b}_{j}\partial_{a}\sqrt{q}\\
\vspace*{0.1in}
\hspace*{0.5in}+\ \sqrt{q}\ N_{0}\left(e^{a}_{jk}\partial_{a}e^{b}_{j}\ -\ e^{a}_{j}\partial_{a}e^{b}_{jk}\right) + 2 N_{0}\tilde{E}^{b}_{j}\partial_{a}e^{a}_{jk}
\end{array}
\end{equation}
In the above equation $N_{0}\ =\ N\sqrt{q}$, $\tilde{E}^{a}_{i}\ =\ \sqrt{q}e^{a}_{i}$ and in the last line we have used 
\begin{equation}
\begin{array}{lll}
2e^{a}_{jk}\partial_{a}e^{b}_{j}\ =\ 2\epsilon^{ijk}e^{a}_{i}\partial_{a}e^{b}_{j}\ =\\
\vspace*{0.1in}
\epsilon^{ijk}\left(e^{a}_{i}\partial_{a}e^{b}_{j}\ -\ e^{a}_{j}\partial_{a}e^{b}_{i}\right)\ =\ \left(e^{a}_{jk}\partial_{a}e^{b}_{j} - e^{a}_{j}\partial_{a}e^{b}_{jk}\right)
\end{array}
\end{equation}
We can further massage the RHS of (\ref{eqjuly7(3)}) as follows.
\begin{equation}\label{eqjuly7(5)}
\begin{array}{lll}
2\partial_{a}(N\tilde{E}^{a}_{i} \tilde{E}^{b}_{j}\epsilon^{ijk})\ =\\
\vspace*{0.1in}
2e^{a}_{jk}\tilde{E}^{b}_{j}(\partial_{a}N_{0}) + 2N_{0}e^{a}_{jk}e^{b}_{j}\partial_{a}\sqrt{q}\\
\vspace*{0.1in}
\hspace*{0.6in} + \left(\sqrt{q}N_{0}e^{a}_{jk}\partial_{a}e^{b}_{j} - e^{a}_{j}\partial_{a}N_{0}e^{b}_{jk}\right) + e^{b}_{jk}\tilde{E}^{a}_{j}\partial_{a}N_{0} + 2N_{0}\tilde{E}^{b}_{j}\partial_{a}e^{a}_{jk}\\
\vspace*{0.1in}
=\ e^{a}_{jk}\tilde{E}^{b}_{j}(\partial_{a}N_{0})\ +\ 2N_{0}e^{a}_{jk}e^{b}_{j}\partial_{a}\sqrt{q}\\
 \vspace*{0.1in}
\hspace*{0.6in} + {\cal L}_{N_{0}e_{jk}}e^{b}_{j} + N_{0}\tilde{E}^{b}_{j}\partial_{a}e^{a}_{jk}\ +\ N_{0}\sqrt{q}e^{b}_{j}\partial_{a}e^{a}_{jk}\\
= e^{b}_{j}{\cal L}_{N_{0}e_{jk}}\sqrt{q}\ +\ \sqrt{q}{\cal L}_{Ne_{jk}}e^{b}_{j}\ +\ N_{0}e^{b}_{j}\partial_{a}\tilde{E}^{a}_{jk}\\
\vspace*{0.1in}
=\ {\cal L}_{NE_{jk}}\tilde{E}^{b}_{j}\ +\ N_{0}e^{b}_{j}\partial_{a}\tilde{E}^{a}_{jk}
\end{array}
\end{equation}

Whence we finally have,
\begin{equation}
\begin{array}{lll}
\{H[N], \tilde{E}^{b}_{k}(x)\} =\ 2\partial_{a}(N\tilde{E}^{a}_{i} \tilde{E}^{b}_{j}\epsilon^{ijk})\ +\ 2\epsilon^{kmn}\ A_{a}^{m}\ \omega^{n}\\
\vspace*{0.1in}
\hspace*{0.5in}=\ {\cal L}_{NE_{jk}}\tilde{E}^{b}_{j}\ +\ N_{0}e^{b}_{j}\partial_{a}\tilde{E}^{a}_{jk} + 2 \epsilon^{kmn}\ A_{a}^{m}\ \epsilon^{ijn}N\tilde{E}^{a}_{i}\tilde{E}^{b}_{j}\\
\vspace*{0.1in}
\hspace*{0.5in}=\ {\cal L}_{NE_{jk}}\tilde{E}^{b}_{j}\ +\ N_{0}e^{b}_{j}{\cal D}_{a}\tilde{E}^{a}_{jk} +  \epsilon^{kmn}\ A_{a}^{m}\ \epsilon^{ijn}N\tilde{E}^{a}_{i}\tilde{E}^{b}_{j}
\end{array}
\end{equation}
Where, in the last line we have combined one $\epsilon^{kmn}\ A_{a}^{m}\ \epsilon^{ijn}N\tilde{E}^{a}_{i}\tilde{E}^{b}_{j}$ with $N_{0}e^{b}_{j}{\cal D}_{a}\tilde{E}^{a}_{jk}$ to get $N_{0}e^{b}_{j}{\cal D}_{a}\tilde{E}^{a}_{jk}$. This is eq.(\ref{july7(7)}).

\section{Commutator of classical vector fields}\label{commutator}
In this section we compute the commutator of $[\ X_{H[N]},\ X_{H[M]}\ ]$. This exercise is of interest in its own right as it sheds light on how our use of the two triads in $H$ on unequal footing (one being used as a functional derivative, whereas other is used to define a Shift field) reproduces the vector field, which can be thought of as a vector field associated to diffeomorphism constraint when the Shift field is dynamical.\\
We start with 
\begin{equation}
\begin{array}{lll}
X_{H[N]}\ =\int d^{3}x\ \textrm{Tr}\left([\tau^{k},L_{NE^{k}}A_{a}]\cdot\frac{\delta}{\delta A_{a}}\right)(x)\\
\vspace*{0.1in}
=\int d^{3}x\ f_{a}^{IJ}[N,A,E;x)\frac{\delta}{\delta A_{a}^{IJ}(x)}
\end{array}
\end{equation}
where $f_{a}^{IJ}[N,A,E;x)\ =\ [\tau^{k},L_{NE^{k}}A_{a}]^{IJ}$.\\
Thus
\begin{equation}
\begin{array}{lll}
[\ X_{H[N]}, X_{H[M]}\ ]\ =\ \int d^{3}x\ d^{3}y\left[f_{a}^{IJ}[N;x)\frac{\delta}{\delta A_{a}^{IJ}(x)}f_{b}^{KL}[M;y)\ -\ N\leftrightarrow M\right]\frac{\delta}{\delta A_{b}^{KL}(y)}
\end{array}
\end{equation}
A simple exercise shows that 
\begin{equation}
\begin{array}{lll}
\frac{\delta}{\delta A_{a}^{IJ}}f_{b}^{KL}[M;y)\ =\\
\vspace*{0.1in}
(\tau^{k})^{K}_{K_{1}}\left[M E^{c}_{k}(y)\delta^{a}_{b}\delta^{K_{1}L}_{IJ}\partial_{c}^{y}\delta^{3}(y,x)\ +\ \delta^{a}_{c}\delta^{K_{1}L}_{IJ}\delta^{3}(y,x)\partial_{b}^{y}\left(ME^{c}_{k}(y)\right)\right]\\
-(\tau^{k})^{L}_{K_{1}}\left[M E^{c}_{k}(y)\delta^{a}_{b}\delta^{KK_{1}}_{IJ}\partial_{c}^{y}\delta^{3}(y,x)\ +\ \delta^{a}_{c}\delta^{K K_{1}}_{IJ}\delta^{3}(y,x)\partial_{b}^{y}\left(ME^{c}_{k}(y)\right)\right]
\end{array}
\end{equation}
where $\partial_{a}^{y}\ :=\ \frac{\partial}{\partial y^{b}}$ and $\delta^{IJ}_{KL}\ =\ \delta^{I}_{K}\delta^{J}_{L}$.
We can use the above equation to compute\\ $\int d^{3}xd^{3}y f_{a}^{IJ}[N;x)\frac{\delta}{\delta A_{a}^{IJ}(x)}f_{b}^{KL}[M;y)$ and after some algebra find that
\begin{equation}
\begin{array}{lll}
\int d^{3}xd^{3}y f_{a}^{IJ}[N;x)\frac{\delta}{\delta A_{a}^{IJ}(x)}f_{b}^{KL}[M;y)\ =\\
\int d^{3}x\left[(\tau^{k}\cdot\tau^{m}\cdot L_{ME_{k}}L_{NE_{m}}A_{b})^{KL}\ +\ (L_{ME_{k}}L_{NE_{m}}A_{b}\cdot\tau_{m}\cdot\tau_{k})^{KL}\right]\\
\vspace*{0.1in}
-\int d^{3}x\left[(\tau_{k}\cdot\cdot L_{ME_{k}}L_{NE_{m}}A_{b}\cdot\tau_{m})^{KL}\ +\ (\tau_{m}\cdot L_{ME_{k}}L_{NE_{m}}A_{b}\cdot\tau_{k})^{KL}\right]
\end{array}
\end{equation}
Whence finally we have 
\begin{equation}
\begin{array}{lll}
[X_{H[N]}, X_{H[M]}]\ =\\
\int d^{3}x\left[(\tau^{k}\cdot\tau^{m}\cdot L_{[ME_{k},ME_{m}]}A_{b})^{KL}\ +\ (L_{[ME_{k},NE_{m}]}A_{b}\cdot\tau_{m}\cdot\tau_{k})^{KL}\right]\\
\vspace*{0.1in}
-\int d^{3}x\left[(\tau_{k}\cdot\cdot L_{[ME_{k},NE_{m}]}A_{b}\cdot\tau_{m})^{KL}\ +\ (\tau_{m}\cdot L_{[ME_{k},NE_{m}]}A_{b}\cdot\tau_{k})^{KL}\right]\\
=\int d^{3}x\left[\left(\tau_{k}\cdot[\tau_{m},L_{[ME_{k},NE_{m}]}A_{b}]\right)^{KL}\ +\ \left([\tau_{m},L_{[ME_{k},NE_{m}]}A_{b}]\cdot\tau_{k}\right)^{KL}\right]\frac{\delta}{\delta A_{b}^{KL}}\\
=\int d^{3}x \{\tau_{k},[\tau_{m},L_{[ME_{k},NE_{m}]}A_{b}]\}^{KL}\frac{\delta}{\delta A_{b}^{KL}}
\end{array}
\end{equation}
where $\{A,B\}$ stands for anti-commutator between $su(2)$ elements.\\
A simple gymnastics with Pauli matrices leads to
\begin{equation}
\begin{array}{lll}
[X_{H[N]}, X_{H[M]}]\ =\ -2\sum_{k}\int d^{3}x \textrm{Tr}\left(L_{[ME_{k},NE_{k}]}A_{b}\cdot\frac{\delta}{\delta A_{b}}\right)
\end{array}
\end{equation}

On the other hand, the right hand side of $\{H[N], H[M]\}$ is given by,
\begin{equation}\label{eq:dec25-1}
\begin{array}{lll}
\textrm{RHS}\ =\ \int d^{3}x\ [N\partial_{a}M\ -\ M\partial_{a}N]\ E^{a}_{m}E^{b}_{m}\ H_{b}(x)\\
\vspace*{0.1in}
=\ \int d^{3}x\ [NE^{m}, ME^{m}]^{a}(x)\ H_{a}(x)\\
\vspace*{0.1in}
=\ \sum_{m}\int d^{3}x L_{[NE^{m}, ME^{m}]}A_{a}^{i}(x)\ \cdot\ E^{a}_{i}(x)
\end{array}
\end{equation}
where we have used the classical identity,
\begin{equation}
\begin{array}{lll}
H_{diff}[U]\ =\ \int U^{a}F_{ab}^{i}E^{b}_{i}\ =\ \int L_{U}A_{a}^{i}\cdot E^{a}_{i}\\
\implies\ X_{H_{diff}[U]}\ =\ -\frac{1}{3}\int\textrm{Tr}\left(L_{U}A_{a}\cdot\frac{\delta}{\delta A_{a}}\right)
\end{array}
\end{equation}
We now notice that applying the same logic, that we applied to $H[N]$ in associating a vector field to it, that is, interpreting $[NE^{m}, ME^{m}]$ in (\ref{eq:dec25-1}) as a shift field and $E^{a}_{i}$ as functional derivative, we get 
\begin{equation}
\begin{array}{lll}
X_{\textrm{RHS}}\ =\ \sum_{m}\int d^{3}x \textrm{Tr}\left(L_{[NE^{m}, ME^{m}]}A_{a}(x)\cdot \frac{\delta}{\delta A_{a}}\right)\ =\ \frac{3}{2}[X_{H[N]}, X_{H[M]}]
\end{array}
\end{equation}

\end{document}

@article{improvedlqc,
  title = {Quantum nature of the big bang: Improved dynamics},
  month = {Oct},
  doi = {10.1103/PhysRevD.74.084003},
  author = {Ashtekar, Abhay and Pawlowski, Tomasz and Singh, Parampreet},
  year = {2006},
  issue = {8},
  url = {http://link.aps.org/doi/10.1103/PhysRevD.74.084003},
  numpages = {23},
  journal = {Phys. Rev. D},
  publisher = {American Physical Society},
  pages = {084003},
  volume = {74}
}

@article{madtom,
  title = {Towards an Anomaly-Free Quantum Dynamics for a Weak-Coupling Limit of Euclidean Gravity},
  author = {Tomlin, Casey and Varadarajan, Madhavan},
  year = {2012},
  url = {},
  journal = {},
  publisher = {},
}

@article{lam^2t,
author = {Abhay Ashtekar and Jerzy Lewandowski and Donald Marolf and Jos\'{e} Mour\~{a}o and Thomas Thiemann},
collaboration = {},
title = {Quantization of diffeomorphism invariant theories of connections with local degrees of freedom},
publisher = {AIP},
year = {1995},
journal = {Journal of Mathematical Physics},
volume = {36},
number = {11},
pages = {6456-6493},
keywords = {DEGREES OF FREEDOM; FIELD THEORIES; GENERAL RELATIVITY THEORY; GRAVITATIONAL FIELDS; LIE GROUPS; QUANTIZATION; QUANTUM GRAVITY},
url = {http://link.aip.org/link/?JMP/36/6456/1},
doi = {10.1063/1.531252}
}

@article{volconsistency,
  author={K Giesel and T Thiemann},
  title={Consistency check on volume and triad operator quantization in loop quantum gravity: I},
  journal={Classical and Quantum Gravity},
  volume={23},
  number={18},
  pages={5667},
  url={http://stacks.iop.org/0264-9381/23/i=18/a=011},
  year={2006}}

@article{qsd4,
  author={Thomas Thiemann},
  title={Quantum Spin Dynamics ({QSD}): {IV}. 2+1 {E}uclidean quantum gravity as a model to test 3+1 {L}orentzian quantum gravity},
  journal={Classical and Quantum Gravity},
  volume={15},
  number={5},
  pages={1249},
  url={http://stacks.iop.org/0264-9381/15/i=5/a=011},
  year={1998}
}

@article{diffcons,
  author={Alok Laddha and Madhavan Varadarajan},
  title={The diffeomorphism constraint operator in loop quantum gravity},
  journal={Classical and Quantum Gravity},
  volume={28},
  number={19},
  pages={195010},
  url={http://stacks.iop.org/0264-9381/28/i=19/a=195010},
  year={2011}
}

@article{nicolai,
  author={Hermann Nicolai and Kasper Peeters and Marija Zamaklar},
  title={Loop quantum gravity: an outside view},
  journal={Classical and Quantum Gravity},
  volume={22},
  number={19},
  pages={R193},
  url={http://stacks.iop.org/0264-9381/22/i=19/a=R01},
  year={2005}}
	
@article{qsd2,
  author={T Thiemann},
  title={Quantum spin dynamics (QSD): II. The kernel of the Wheeler - DeWitt constraint operator},
  journal={Classical and Quantum Gravity},
  volume={15},
  number={4},
  pages={875},
  url={http://stacks.iop.org/0264-9381/15/i=4/a=012},
  year={1998},
  abstract={We determine the complete and rigorous kernel of the Wheeler - DeWitt constraint operator for four-dimensional, Lorentzian, non-perturbative, canonical vacuum quantum gravity in the continuum. We do this for the non-symmetric version of the operator constructed previously in this series. We also construct a symmetric, regulated constraint operator. For the regulated Euclidean Wheeler - DeWitt operator as well as for the regulated generator of the Wick transform from the Euclidean to the Lorentzian regime we prove existence of self-adjoint extensions and based on these we propose a method of proof of self-adjoint extensions for the regulated Lorentzian operator. Both constraint operators evaluated at unit lapse as well as the generator of the Wick transform can be shown to have regulator-independent and symmetric duals on the diffeomorphism-invariant Hilbert space. Finally, we comment on the status of the Wick rotation transform in the light of the present results and give an intuitive description of the action of the Hamiltonian constraint.}
}

@article{qsd3,
  author={T Thiemann},
  title={Quantum spin dynamics (QSD): III. Quantum constraint algebra and physical scalar product in quantum general relativity},
  journal={Classical and Quantum Gravity},
  volume={15},
  number={5},
  pages={1207},
  url={http://stacks.iop.org/0264-9381/15/i=5/a=010},
  year={1998},
  abstract={This paper deals with several technical issues of non-perturbative four-dimensional Lorentzian canonical quantum gravity in the continuum that arose in connection with the recently constructed Wheeler-DeWitt quantum constraint operator. * The Wheeler-DeWitt constraint of quantum general relativity mixes the diffeomorphism superselection sectors for diffeomorphism-invariant theories of connections that were previously discussed in the literature. From it one can construct diffeomorphism-invariant operators which do not necessarily commute with the Hamiltonian constraint but which still mix those sectors and which, at the diffeomorphism-invariant level, encode physical information. Thus, if one adopts, as before in the literature, the strategy to solve the diffeomorphism constraint before the Hamiltonian constraint then those sectors become spurious. * The inner product for diffeomorphism-invariant states can be fixed by requiring that diffeomorphism group averaging is a partial isometry. * The established non-anomalous constraint algebra is clarified by computing commutators of duals of constraint operators. * The full classical constraint algebra is faithfully implemented on the diffeomorphism-invariant Hilbert space in an appropriate sense. * The Hilbert space of diffeomorphism-invariant states can be made separable if a natural new superselection principle is satisfied. * We propose a natural physical scalar product for quantum general relativity by extending the group-average approach to the case of non-self-adjoint constraint operators like the Wheeler-DeWitt constraint. * Equipped with this inner product, the construction of physical observables is straightforward.}
}

@article{aajurekreview,
  author={Abhay Ashtekar and Jerzy Lewandowski},
  title={Background independent quantum gravity: a status report},
  journal={Classical and Quantum Gravity},
  volume={21},
  number={15},
  pages={R53},
  url={http://stacks.iop.org/0264-9381/21/i=15/a=R01},
  year={2004}
}

@article{qsd1,
  author={Thomas Thiemann},
  title={Quantum spin dynamics ({QSD})},
  journal={Classical and Quantum Gravity},
  volume={15},
  number={4},
  pages={839},
  url={http://stacks.iop.org/0264-9381/15/i=4/a=011},
  year={1998}
}

@ARTICLE{lm1,
  author = {Jerzy Lewandowski and Donald Marolf},
  title = {Loop constraints: A habitat and their algebra},
  journal = {International Journal of Modern Physics D},
  volume = {7},
  pages = {299},
  url = {http://www.citebase.org/abstract?id=oai:arXiv.org:gr-qc/9710016},
  year = {1998}
}

@ARTICLE{lm2,
  author = {Rodolfo Gambini and Jerzy Lewandowski and Donald Marolf and Jorge Pullin},
  title = {On the Consistency of the Constraint Algebra in Spin Network Quantum Gravity},
  journal = {International Journal of Modern Physics D},
  volume = {7},
  pages = {97},
  url = {http://dx.doi.org/10.1142/S0218271898000103},
  year = {1998}
}

@article{pftham,
  volume = {83},
  journal = {Phys. Rev. D},
  author = {Laddha, Alok and Varadarajan, Madhavan},
  month = {Jan},
  url = {http://link.aps.org/doi/10.1103/PhysRevD.83.025019},
  doi = {10.1103/PhysRevD.83.025019},
  year = {2011},
  title = {Hamiltonian constraint in polymer parametrized field theory},
  issue = {2},
  publisher = {American Physical Society},
  numpages = {27},
  pages = {025019}
}

@article{smolin,
  author={Lee Smolin},
  title={The ${G}_{\mathrm{Newton}}\rightarrow 0$ limit of {E}uclidean quantum gravity},
  journal={Classical and Quantum Gravity},
  volume={9},
  number={4},
  pages={883},
  url={http://stacks.iop.org/0264-9381/9/i=4/a=007},
  year={1992}
}

@article{hat,
  author = {Henderson, Adam and Laddha, Alok and Tomlin, Casey},
  year = {2012},
  title = {Constraint Algebra in {LQG} Reloaded: Toy Model of a {U}(1)$^3$ Gauge Theory},
  eprint = {arXiv:1204.0211v1}
}

@article{mefereduardo,
  author = {Barbero, Fernando and Varadarajan, Madhavan and Villase\~{n}or, Eduardo},
  year = {2012},
  title = {in preparation},
}

@article{meinprep,
  author = {Varadarajan, Madhavan},
  year = {2012},
  title = {in preparation},
}

@article{fereduardo,
  author = {Barbero, Fernando and Villase\~{n}or, Eduardo},
  year = {2012},
  title = {in preparation},
}

@Book{ttbook,
author = {Thiemann, Thomas},
title = {Modern Canonical Quantum General Relativity},
publisher = {Cambridge University Press},
year = {2007},
series = {Cambridge Monographs on Mathematical Physics}
}

@Book{loopcoord,
author = {Gambini, Rodolfo and Pullin, Jorge},
title = {Loops, Knots, Gauge Theories, and Quantum Gravity},
publisher = {Cambridge University Press},
year = {2000},
series = {Cambridge Monographs on Mathematical Physics}
}

@article{aajurekarea,
  author={Abhay Ashtekar and Jerzy Lewandowski},
  title={Quantum theory of geometry: {I}. {A}rea operators},
  journal={Classical and Quantum Gravity},
  volume={14},
  number={1A},
  pages={A55},
  url={http://stacks.iop.org/0264-9381/14/i=1A/a=006},
  year={1997}
}

@article{RG,
author = {Norbert Grot and Carlo Rovelli},
title = {Moduli-space structure of knots with intersections},
publisher = {AIP},
year = {1996},
journal = {Journal of Mathematical Physics},
volume = {37},
number = {6},
pages = {3014-3021},
keywords = {QUANTUM GRAVITY; TOPOLOGY; MATHEMATICAL SPACE; LOOPS; MATHEMATICAL PHYSICS},
url = {http://link.aip.org/link/?JMP/37/3014/1},
doi = {10.1063/1.531527}
}

@ARTICLE{aajurekvol,
  author = {Abhay Ashtekar and Jerzy Lewandowski},
  title = {Quantum Theory of Geometry {II}: {V}olume operators},
  journal = {Advances in Theoretical and Mathematical Physics},
  volume = {1},
  pages = {388},
  url = {http://www.citebase.org/abstract?id=oai:arXiv.org:gr-qc/9711031},
  year = {1998}
}

@ARTICLE{hkt,
   author = {{Hojman}, S.~A. and {Kucha{\v r}}, K. and {Teitelboim}, C.},
    title = "{Geometrodynamics regained}",
  journal = {Annals of Physics},
     year = 1976,
    month = jan,
   volume = 96,
    pages = {88-135},
      doi = {10.1016/0003-4916(76)90112-3},
   adsurl = {http://adsabs.harvard.edu/abs/1976AnPhy..96...88H},
  adsnote = {Provided by the SAO/NASA Astrophysics Data System}
}

@ARTICLE{mad2,
   author = {madhavan Varadarajan},
    title = "{Towards an Anomaly-free Quantum Dynamics for a Weak Coupling Limit of Euclidean Gravity : Diffeomorphism Covariance }",
  journal = {In preparation},
     year = 2012,
    
}

@ARTICLE{abhay,
   author = {Abhay Ashtekar},
    title = "{Geometric interpretation of Hamiltonian constraint }",
  journal = {Unpublished notes},
     year = 2012,
    
}

\end{document}

We now show how this action on a spin-net is a precise correspondence with the geometric action derived above. 
Writing down the spin-network state in connection representation. Let $v$ be a vertex of the graph underlying ${\bf s}$ then,
\begin{equation}
\begin{array}{lll}
T_{s}(A)\ =\ {\cal I}_{v}\cdot\left(\otimes \pi_{j_{e}}h_{e}\right)_{m_{e_{1}}\textrm{. . .}m_{e_{N}} , n_{e_{1}}\textrm{. . .}n_{e_{N}}}\ \bigotimes\ T_{s-\{v\}}(A)
\end{array}
\end{equation}
Given two edges, $e_{1}, e_{2}$ such that $e_{1}\cap e_{2}\ =\ v$, 
it is straightforward to evaluate the action of say, $\hat{H}_{e_{1},e_{2},\delta}(v)$ on 
$T_{s}(A)$.
\begin{equation}
\begin{array}{lll}
\hat{H}_{e_{1}, e_{2}, \delta}(v)T_{s}(A)\ =\  [\ \widehat{\pi_{j_{e_{1}}}(h_{\alpha(e_{1},e_{2},\delta)}}, \widehat{\pi_{j_{e_{1}}}(\tau_{j})}\ ]\otimes \widehat{\pi_{j_{e_{2}}}(\tau_{j})}T_{s}(A)\\
\vspace*{0.1in}
={\cal I}_{v}\cdot\left(\pi_{j_{e_{1}}}\left(h_{\alpha(e_{1},e_{2},\delta)}}\circ\tau_{j}\circ h_{e_{1}}\right)_{m_{e_{1}}, n_{e_{1}}}\ \otimes\ \pi_{j_{e_{2}}}(\tau_{j}\circ h_{e_{2}})_{m_{e_{2}}, n_{e_{2}}}\ \left(\bigotimes_{e\neq e_{1}, e_{2}}\pi_{j_{e}}(h_{e})\right)_{m_{e_{3}}\textrm{É}m_{e_{N}},n_{e_{3}}\textrm{É}n_{e_{N}}}\right)\\
\vspace*{0.1in}
-  {\cal I}_{v}\cdot\left(\pi_{j_{e_{1}}}\left(\tau_{j}\circ h_{\alpha(e_{1},e_{2},\delta)}\circ h_{e_{1}}\right)_{m_{e_{1}}, n_{e_{1}}}\ \otimes\ \pi_{j_{e_{2}}}(\tau_{j}\circ h_{e_{2}})_{m_{e_{2}}, n_{e_{2}}}\ \left(\bigotimes_{e\neq e_{1}, e_{2}}\pi_{j_{e}}(h_{e})\right)_{m_{e_{3}}\textrm{É}m_{e_{N}},n_{e_{3}}\textrm{É}n_{e_{N}}}\right)\\
\end{array}
\end{equation}
We now show how the above operator action faithfully captures the classical action as detailed in (\ref{eq:regconstraintfinal}).\\

\section{Subtlties with inverse volume factor}
So far in this paper, we quantized the Hamiltonian constraint with lapse being a scalar density of weight $-1$. However as we briefly recalled in the introduction, the correct density weighted lapse, so that the resulting quantum constraint admits a continuum limit is a Lapse with weight $\frac{1}{3}$. We now show how the constraint obtained above can be ``amended" so that it can be smeared with respect to such a Lapse.\\
Thus in the classical theory, such a constraint is given by,
\begin{equation}
\begin{array}{lll}
H[N]\ =\ \int_{\Sigma} N(x)\epsilon^{ijk}F_{ab}^{i}E^{a}_{j}E^{b}_{k}(x)\frac{1}{q^{\frac{1}{3}}}(x)
\end{array}
\end{equation}
As shown in \cite{tv}, a minor modification of the Thiemann-trick can be used to reexpress the inverse volume factor $\frac{1}{q^{\frac{1}{3}}}$ in terms of Poisson brackets. This in turn implies that, $\frac{1}{q^{\frac{1}{3}}}$ can be ``quantized separately" independent of the quantization of $\epsilon^{ijk}F_{ab}^{i}E^{a}_{j}E^{b}_{k}(x)$. Hence the structure of the constraint we proposed in the previous section remains robust in the sense that it needs to be amended by $\widehat{\frac{1}{q^{\frac{1}{3}}}}$. Once again our choice of operator ordering is dictated by the dictum proposed in $\cite{qsd1}$, that is we do not want the Hamiltonian constraint to create new vertices away from the graph. The simplest choice (but may not be the only one) of operator ordering which accomplishes this, is by placing $\frac{1}{q^{\frac{1}{3}}}$ to the right of $\widehat{\epsilon^{ijk}F_{ab}^{i}E^{a}_{j}E^{b}_{k}}$.\\
The finite triangulation Hamiltonian constraint action on a spin-network state $\vert {\bf s}\rangle$ with a single non-degenerate vertex $v$  is given by,
\begin{equation}
\begin{array}{lll}
\hat{H}_{T(\delta)}[N]\vert{\bf s}\rangle\ =\\
\vspace*{0.1in}
\frac{1}{\delta}N(v)\widehat{\epsilon^{ijk}F_{ab}^{i}E^{a}_{j}E^{b}_{k}}\vert_{\delta}(v)\widehat{\frac{1}{q^{\frac{1}{3}}}}\vert_{\delta}(v)\vert{\bf s}\rangle\\
\vspace*{0.1in}
=\frac{1}{\delta}\sum_{e,e^{\prime}\vert e\neq e^{\prime}; b(e)=b(e^{\prime})}\hat{\tau}^{i}\vert_{b(e)}\otimes[\ \hat{\tau^{i}}\vert_{b(e^{\prime})}, \widehat{\left(h_{s_{e}}\circ h_{\phi^{\delta}_{\hat{e}(0)}(s_{e^{\prime}})}\circ h_{s_{e^{\prime}}}^{-1}\right)}\ ]\\
\hspace*{1.0in}\bigotimes\otimes_{e^{\prime\prime}\notin\{e,e^{\prime}\}}\widehat{\left(h_{s_{\tilde{e}}}\circ h_{\phi^{\delta}_{\hat{e}(0)}(s_{\tilde{e}})}\circ h_{s_{\tilde{e}}}^{-1}\right)}\hat{\frac{1}{q^{\frac{1}{3}}}}\vert_{\delta} (v)\vert{\bf s}\rangle
\end{array}
\end{equation}
As shown above, the most general inverse volume operator action consistent with the properties of matrix elements of volume operator is given by,
\begin{equation}
\begin{array}{lll}
\widehat{\frac{1}{q}^{\frac{1}{3}}}(v)\vert_{\delta}\vert{\bf s}\rangle\ =\

Each of the above states are linear combination of spin-network states based on a graph which has \emph{at most} a degenerate vertex at $v$ and a unique non-degenerate vertex at $v_{e,\delta}$. The grasping operators $\tau_{i}$ can be expressed as (appropriately normalized) 3-j symbols, and can be used to re-write the above states in terms of spin-network basis.\\
We use certain structural properties of the resulting intertwiners to illustrate some key aspects of the states.\\

\section{Analyzing the structure of $T_{({\bf s})_{2}^{\delta}(v,e,e^{\prime})}(A)$.}
The second equation in eq.(\ref{deformedstates-1}) has a more intricate structure as the two grasping operators do not act in a way as to yield a $J=0$ intertwiner.
Using the fact that ${\cal I}_{v}^{I}\cdot\hat{\tau}_{i}\vert_{e}$ is a (sum of) $J=1$ intertwiners, and for a spin-1 intertwiner, we have
\begin{equation}\label{j=1intertwiner}
\begin{array}{lll}
\left(\pi_{j_{e_{1}}}(g)\otimes\dots\otimes\pi_{j_{e_{N}}}(g)\right)_{m_{e_{1}},\dots,m_{e_{N}},n_{e_{1}},\dots,n_{e_{N}}}\cdot {\cal I}^{m_{e_{1}},\dots,m_{e_{N}}m}\ =\ {\cal I}^{n_{e_{1}},\dots,n_{e_{N}}n}\pi_{J=1}(g)_{nm}
\end{array}
\end{equation}
where the indices $n,m\in \{-1,0,1\}$ and $m$ index on the intertwiner ${\cal I}$ means it is a spin-1 intertwiner.\\
Above equation in conjunction with
\begin{equation}
\begin{array}{lll}
\tau_{i}\otimes\tau^{i}\ =\ \sum_{m=-1}^{1}(-1)^{1-m}L_{m}\otimes L_{-m}
\end{array}
\end{equation}
implies that
\begin{equation}
\begin{array}{lll}
T_{({\bf s})_{2}^{\delta}(v,e,e^{\prime})}(A)\ =\\
\vspace*{0.1in}
\sum_{k}(-1)^{k}\left(\pi_{j=1}(h_{s_{e_{1}}})_{kl}\otimes\pi_{j_{e^{\prime}}}(h_{s_{e^{\prime}}}^{-1}\cdot L_{-n}\cdot h_{s_{e^{\prime}}})_{\tilde{m_{e^{\prime}}}\tilde{n_{e^{\prime}}}}\right)\otimes\\
\hspace*{0.7in}{\cal I}_{v}^{I\ m_{e_{1}}\dots m_{e}^{\prime}\dots m_{e_{n_{v}}}}(L_{l})_{m_{e}^{\prime}m_{e}}\left(\pi_{j_{e_{1}}}(h_{\alpha(s_{e}, s_{e_{1}})}\cdot h_{e_{1}-s_{e_{1}}})_{m_{e_{1}}n_{e_{1}}}\otimes\dots\pi_{j_{e}}(h_{e-s_{e}})_{m_{e}n_{e}}\otimes\dots\otimes\right.\\
\hspace*{1.0in}\left.\pi_{j_{e^{\prime}}}(h_{\alpha(s_{e},s_{e^{\prime}})})_{m_{e^{\prime}}\tilde{m_{e^{\prime}}}}\otimes\pi_{j_{e^{\prime}}}(h_{e^{\prime}-s_{e^{\prime}}})_{\tilde{n_{e^{\prime}}}n_{e^{\prime}}}\otimes\dots\otimes\pi_{j_{e_{n_{v}}}}(h_{\alpha(s_{e},s_{e_{n_{v}}})}\cdot h_{e_{n_{v}}-s_{e_{n_{v}}}})_{m_{e_{n_{v}}},n_{e_{n_{v}}}}\right)
\end{array}
\end{equation}
In the above equation, we have re-arranged the terms such that the first line of the RHS corresponds to part of the state which has a (bivalent) vertex at $v$. The second and the third line correspond to the structure in the neighborhood of $v + \delta\hat{e}(0)$.\\
The above equation can be expanded further by expressing $h_{s_{e^{\prime}}}^{-1}\otimes h_{s_{e^{\prime}}}$ in terms of linear combination of $h_{s_{e}}$ is distinct irreps.\\
\begin{equation}
\begin{array}{lll}
\pi_{j_{e^{\prime}}}(h_{s_{e^{\prime}}}^{-1})_{\tilde{m_{e^{\prime}}},k}\otimes \pi_{j_{e^{\prime}}}(h_{s_{e^{\prime}}})_{l\tilde{n_{e^{\prime}}}}\ =\ 
\sum_{J=0}^{2j_{e^{\prime}}}\pi_{J}(h_{s_{e^{\prime}}})_{M_{e^{\prime}}N_{e^{\prime}}}
\end{array}
\end{equation}

({\bf CHECK FOR GAUGE INVARIANCE}).

\end{document}